\def\PrintMode{0}%normally PrintMode = 0. If PrintMode = 1, extreme methods will be used to reduce the number of pages. Can also set fontsize = 10pt to further shrink pages...
\newtheorem{theorem}{Theorem}[section]
\newtheorem{lemma}[theorem]{Lemma}
\newtheorem{proposition}[theorem]{Proposition}
\newtheorem{corollary}[theorem]{Corollary}
\newtheorem{claim}[theorem]{Claim}
\newtheorem{fact}[theorem]{Fact}
\newtheorem{open}[theorem]{Open Problem}
\newtheorem{conjecture}[theorem]{Conjecture}
\theoremstyle{definition}
\newtheorem{definition}[theorem]{Definition}
\newtheorem{remark}[theorem]{Remark}
\newtheorem*{remark*}{Remark}
\renewcommand*\backref[1]{\ifx#1\relax \else (cit.~on p.~#1) \fi} %http://latex.org/forum/viewtopic.php?t=3670
\def\moverlay{\mathpalette\mov@rlay}
\def\mov@rlay#1#2{\leavevmode\vtop{%
		\baselineskip\z@skip \lineskiplimit-\maxdimen
		\ialign{\hfil$\m@th#1##$\hfil\cr#2\crcr}}}
\newcommand{\charfusion}[3][\mathord]{
	#1{\ifx#1\mathop\vphantom{#2}\fi
		\mathpalette\mov@rlay{#2\cr#3}
	}
	\ifx#1\mathop\expandafter\displaylimits\fi}
\newcommand{\bigcupdot}{\charfusion[\mathop]{\bigcup}{\cdot}}
\renewcommand{\poly}{\mathrm{poly}}
\renewcommand{\emptyset}{\varnothing}
\newlang{\MCSP}{MCSP}
\newlang{\MFSP}{MFSP}
\newlang{\MKtP}{MKtP}
\newlang{\MKTP}{MKTP}
\newlang{\itrMCSP}{itrMCSP}
\newlang{\itrMKTP}{itrMKTP}
\newlang{\itrMINKT}{itrMINKT}
\newlang{\MINKT}{MINKT}
\newlang{\MINK}{MINK}
\newlang{\MINcKT}{MINcKT}
\newlang{\CMD}{CMD}
\newlang{\DCMD}{DCMD}
\newlang{\CGL}{CGL}
\newlang{\PARITY}{PARITY}
\newlang{\Empty}{\textsc{Empty}}
\newlang{\Avoid}{\textsc{Avoid}}
\newlang{\Sparsification}{\textsc{Sparsification}}
\newlang{\HamEst}{\mathsf{HammingEst}}
\newlang{\HamHit}{\mathsf{HammingHit}}
\newlang{\CktEval}{\textsc{Circuit-Eval}}
\newlang{\Hard}{\textsc{Hard}}
\newlang{\cHard}{\textsc{cHard}}
\newlang{\CAPP}{CAPP}
\newlang{\GapUNSAT}{GapUNSAT}
\newlang{\OV}{OV}
\renewlang{\PCP}{PCP}
\newlang{\PCPP}{PCPP}
\newclass{\Avg}{Avg}
\newclass{\ZPEXP}{ZPEXP}
\newclass{\DLOGTIME}{DLOGTIME}
\newclass{\ALOGTIME}{ALOGTIME}
\newclass{\ATIME}{ATIME}%alternating time
\newclass{\SZKA}{SZKA}
\newclass{\Laconic}{Laconic\text{-}}
\newclass{\APEPP}{APEPP}
\newclass{\SAPEPP}{SAPEPP}
\newclass{\TFSigma}{TF\Sigma}
\newclass{\NTIMEGUESS}{NTIMEGUESS}
\newlang{\Formula}{Formula}
\newlang{\THR}{THR}
\newlang{\EMAJ}{EMAJ}
\newlang{\MAJ}{MAJ}
\newlang{\SYM}{SYM}
\newlang{\DOR}{DOR}
\newlang{\ETHR}{ETHR}
\newlang{\Midbit}{Midbit}
\newlang{\LCS}{LCS}
\newlang{\TAUT}{TAUT}
\newlang{\Poly}{\text{-}Poly}
\newcommand{\F}{\mathbb{F}}
\DeclareMathOperator*{\Span}{\mathrm{Span}}
\renewcommand{\epsilon}{\varepsilon}
\newcommand{\eps}{\epsilon}
\definecolor{color1}{RGB}{46,134,193}
\definecolor{color7}{RGB}{128,0,128}
\definecolor{color3}{RGB}{255,128,0}
\definecolor{color4}{RGB}{150,150,150}
\definecolor{color5}{RGB}{128,128,128}
\newif\ifmynotes
\title{Explicit Folded Reed--Solomon and Multiplicity Codes Achieve Relaxed Generalized Singleton Bounds} %\ZGnote{Added ``a''. Pay attention to articles.}}
\date{}
\author{
Yeyuan Chen\thanks{Department of EECS, University of Michigan, Ann Arbor. \href{mailto:yeyuanch@umich.edu}{\texttt{yeyuanch@umich.edu}}}  
\and
Zihan Zhang\thanks{Department of Computer Science and Engineering, The Ohio State University. \href{mailto:zhang.13691@buckeyemail.osu.edu}{\texttt{zhang.13691@osu.edu}}}  
}
\author{
\text{Yeyuan Chen}\vspace{6pt}\\{EECS Department}\\University of Michigan, Ann Arbor\\ \href{mailto:yeyuanch@umich.edu}{\texttt{yeyuanch@umich.edu}}  
\and
\text{Zihan Zhang} \vspace{6pt}\\{CSE Department}\\The Ohio State University \\ \href{mailto:zhang.13691@osu.edu}{\texttt{zhang.13691@osu.edu}}
}
\begin{document}

\maketitle
\begin{abstract}
    In this paper, we prove that 
 explicit folded Reed--Solomon (RS) codes and univariate multiplicity codes achieve relaxed generalized
Singleton bounds for list size $L\ge1.$ 
Specifically, we show the following: (1) Any folded RS code of block length $n$ and rate $R$ over the alphabet $\mathbb{F}_q^s$ with distinct evaluation points is $\left(\frac{L}{L+1}\left(1-\frac{sR}{s-L+1}\right),L\right)$ (average-radius) list-decodable for list size $L\in[s]$. 
(2) Any univariate multiplicity code of block length $n$ and rate $R$ over the alphabet $\mathbb{F}_p^s$ (where $p$ is a prime) with distinct evaluation points is $\left(\frac{L}{L+1}\left(1-\frac{sR}{s-L+1}\right),L\right)$ (average-radius) list-decodable for list size $L\in[s]$.

Choosing $s=\Theta(1/\epsilon^2)$ and $L=O(1/\epsilon)$, our results imply that both explicit folded RS codes and explicit univariate multiplicity codes achieve list-decoding capacity $1-R-\epsilon$ with optimal\footnote{Generalized Singleton bound \cite{shangguan2020combinatorial} implies an optimal list size $\lfloor\frac{1-R-\epsilon}{\epsilon}\rfloor$. However, achieving the exact list size needs exponentially large alphabet \cite{BDG24,alrabiah2024ag}, which is impossible in our scenario since we only have polynomial-sized alphabets. Here, the detailed version of our corollary (See \cref{cor:exact_fold}) shows that we can actually achieve list size $L=\lfloor\frac{1-R}{\epsilon}\rfloor$ with $s>\frac{L(L-1)R}{\epsilon L-(1-R-\epsilon)}+L-1$, which makes our list size optimal.} list size $O(1/\epsilon)$. 
This exponentially improves the previous state of the art $(1/\epsilon)^{O(1/\epsilon)}$ established by Kopparty, Ron-Zewi, Saraf, and Wootters (FOCS 2018 or SICOMP, 2023) and Tamo (IEEE TIT, 2024). In particular, our results on folded Reed--Solomon codes fully resolve a long-standing open problem originally proposed by Guruswami and Rudra (STOC 2006 or IEEE TIT, 2008). Furthermore, our results imply the first explicit constructions of $(1-R-\epsilon,O(1/\epsilon))$ (average-radius) list-decodable codes of rate $R$ with polynomial-sized alphabets in the literature.

Our methodology can also be extended to analyze the list-recoverability of folded RS\footnote{Similar list-recoverablility upper bounds are also applicable to univariate multiplicity codes. For brevity, we only focus on folded RS codes in this paper.} codes. We provide a tighter radius upper bound that states folded RS codes cannot be $\left(\frac{L+1-\ell}{L+1}\left(1-\frac{mR}{m-1}\right)+o(1),\ell, L\right)$ list-recoverable where $m=\lceil\log_{\ell}{(L+1)}\rceil>1$. We conjecture this bound is almost tight when $L+1=\ell^a$ for any $a\in\mathbb{N}^{\ge 2}$. To give some evidences, we show folded RS codes over the alphabet $\mathbb{F}_q^s$ are $\left(\frac{1}{2}-\frac{sR}{s-2},2,3\right)$ list-recoverable, which proves the tightness of this bound in the smallest non-trivial special case. As a corollary, our bound states (folded) RS codes cannot be $(1-R-\epsilon, \ell, \ell^{o_R(1/\epsilon)})$ list-recoverable, which refutes the possibility that these codes could achieve list-recovery capacity $\left(1-R-\epsilon, \ell, O\left(\frac{\ell}{\epsilon}\right)\right)$. This implies an intrinsic separation between list-decodability and list-recoverablility of (folded) RS codes.
\end{abstract}
\newpage
\section{Introduction}
An error correcting code $C \subseteq \Sigma^n$ 
is a collection of vectors (codewords) of length $n$ over an alphabet $\Sigma$. In coding theory, a key objective when designing a code $C$ is to ensure that an original codeword $c \in C $ can be recovered from its corrupted version $\tilde{c}\in\Sigma^n$, while also maximizing the size of $C$. One classic case within this framework, known as the unique decoding problem, is to efficiently recover $c\in C$ from any $\tilde{c} \in \Sigma^n$, assuming $c$ and $\tilde{c}$ differ in at most ${\delta}/{2}$ fraction of the positions, where $\delta$ is the relative minimum distance of the code.
\paragraph{List-decoding.} 
As a natural generalization of the unique decoding problem mentioned above, the concept of list-decoding  was introduced independently by Elias \cite{elias1957list} and Wozencraft \cite{wozencraft1958list} in the 1950s. In this setting, the decoder is allowed to output $L \geq 1$ codewords and can potentially correct more than ${\delta}/{2}$ fraction of errors, where $\delta$ represents the relative minimum distance of the code. Since then, list-decoding has found profound applications in theoretical computer science \cite{sudan2000list, vadhan2012pseudorandomness, guruswami2009Unbalanced,ta2012better,goldreich1989hard,  cai1999hardness, goldreich2000chinese} and information theory \cite{elias1991error, rudolf, Blinovski86, Blinovsky1997}.

Formally, a code $C\subseteq\Sigma^n$ over an alphabet $\Sigma$ is defined as  (combinatorially) $(\rho,L)$ list-decodable\footnote{In fact, the results in this paper apply to a stronger notion of list-decodability, called ``average-radius list-decodability.'' Formally, a code $C$ is said to be {$(\rho, L)$ average-radius list-decodable} if there do not exist $y\in\Sigma^n$  and distinct codewords $x_1,\dots,x_{L+1}\in C$ such that $\frac{1}{L+1}\sum_{i=1}^{L+1} \delta(x_i,y)\leq \rho$.} if for every $y\in \Sigma^n$, the Hamming ball centered at $y$ with relative radius $\rho\in [0,1]$ contains at most $L$ codewords from $C$. By the list-decoding capacity theorem \cite[Theorem~7.4.1]{guruswami2019essential}, for $q\geq 2$, $0\leq \rho<1-\frac{1}{q}$, $\eps>0$, and sufficiently large $n$, there exist $(\rho, L)$ list-decodable codes of block length $n$, rate  $R$, alphabet size $q$, and list size $L=O(1/\eps)$ such that
\begin{equation}\label{eq_capacity}
R\geq 1-H_q(\rho)-O(\eps)
\end{equation}
where $H_q(\cdot)$ denotes the $q$-ary entropy function. Codes satisfying \eqref{eq_capacity} are said to achieve the list-decoding capacity. 
When $q\geq 2^{\Omega(1/\eps)}$, condition \eqref{eq_capacity} can be rewritten as $\rho\geq 1-R-O(\eps)$. 

\paragraph{List-decodability of Reed--Solomon codes.} Algebraic codes have played a pivotal role in advancing the study of list-decoding. Among the most significant algebraic codes are Reed–Solomon (RS) codes \cite{reed1960polynomial}, whose codewords are constructed from evaluations of low-degree polynomials. A foundational work of Guruswami and Sudan \cite{sudan1997decoding,guruswami1998improved} provided efficient list-decoding algorithms
for RS codes beyond the unique decoding radius (up to the Johnson bound \cite{johnson1962new}). However, surpassing the Johnson bound presents a substantially greater challenge. 

On the negative side, the work of Ben-Sasson, Kopparty, and Radhakrishnan \cite{ben2009subspace} demonstrated that, over certain finite fields $\mathbb{F}_q$, full-length RS codes are not list-decodable significantly beyond the Johnson bound.  Additionally, other works (e.g. \cite{guruswami2005limits,cheng2007list,gandikota2018np}) have highlighted the challenges of both combinatorial and algorithmic list decoding of RS codes beyond the Johnson radius.

On the positive side, over the past decade,  there has been an exciting line of work \cite{rudra2014every, shangguan2020combinatorial, guo2022improved, ferber2022list, goldberg2022list,brakensiek2023generic,GZ23, alrabiah2023randomly} showing that RS codes with random evaluation points are, with high probability, (combinatorially) list-decodable much beyond the Johnson radius. In particular, Brakensiek, Gopi, and Makam \cite{brakensiek2023generic}
demonstrated that random RS codes are list-decodable (optimally) up
to list-decoding capacity with exponential-sized alphabets. Follow-up works of Guo and Zhang \cite{GZ23}, and Alrabiah,
Guruswami, and Li \cite{alrabiah2023randomly}, improved this result to linear-sized alphabets. 

As a side remark, we note that most of the works in this line of research were inspired by the initial framework of Shangguan and Tamo \cite{shangguan2020combinatorial}, where they introduced the ``generalized Singleton bound'' %(see bound \eqref{for:GSB}) 
in the context of list-decoding. 
Formally, 
%they gave a simple upper bound that 
for a linear code of rate $R$ that is $(\rho,L)$ list-decodable, the generalized Singleton bound states that
\begin{equation}\label{for:GSB}
    \rho\leq\frac{L}{L+1}\left(1-R\right).
\end{equation}
We also have a $\epsilon$-slack version of the generalized Singleton bound, $\frac{L}{L+1}(1-R-\epsilon)$, introduced in \cite{GZ23}.

Nevertheless, explicit constructions of Reed--Solomon codes (with list-decoding algorithms) that surpass the Johnson radius remain a widely open problem, which is beyond the scope of this paper. 

\paragraph{List-decodability of folded Reed--Solomon codes.} Folded RS codes, initially introduced by Krachkovsky in \cite{krachkovsky2003reed}, are a simple variant of RS codes. Formally, given any parameters $s,n,k>0$, a finite field $\mathbb{F}_q$, where $|\mathbb{F}_q|>sn>k$, a generator $\gamma$ of $\mathbb{F}^\times_q$, and any sequence of $n$ elements $\alpha_1,\alpha_2,\dots,\alpha_n\in\F_q$, the corresponding $(s,\gamma)$-folded RS code over the alphabet $\mathbb{F}_q^s$ with block length $n$ and rate $R=\frac{k}{sn}$ is defined as 
\[\mathsf{FRS}^{(s,\gamma)}_{n,k}(\alpha_1,\alpha_2,\dots,\alpha_n):=\bigg\{\mathcal{C}(f):f\in\mathbb{F}_q[x],\text{ }\deg f<k\bigg\}\subseteq\left(\mathbb{F}_q^s\right)^n,
\]
where $\mathcal{C}(f):=(F_1,F_2,\dots,F_n)\in\left(\mathbb{F}_q^s\right)^n$, with $F_i:=\Bigl(f(\alpha_i),f(\gamma\alpha_i),\dots,f(\gamma^{s-1}\alpha_i)\Bigl)$, denotes the encoder of this code. Note that RS codes correspond to the special case of $s=1$.

Building on the breakthrough of Parvaresh and Vardy \cite{parvaresh2005correcting}, the seminal work by Guruswami and Rudra \cite{Guru06,guruswami2008explicit}
demonstrated that folded RS codes can be efficiently list-decoded up to list-decoding capacity with a polynomial-sized list. More precisely, they showed that $(s,\gamma)$-folded RS codes are $(1-R-\epsilon,n^{O(1/\epsilon)})$ list-decodable with $s\approx 1/\epsilon^2$. A natural question was then posed by Guruswami and Rudra in \cite{Guru06, guruswami2008explicit} regarding the potential reduction of the list size.
\begin{open}[Guruswami--Rudra \cite{Guru06,guruswami2008explicit}]\label{open}
It remains an open question to reduce this list size $n^{O(1/\epsilon)}$, given that existential random coding arguments work with a list size of $O(1/\epsilon)$. 
\end{open}
Over the past (almost) two decades, there have been numerous successful attempts \cite{gur11,guruswami2013linear,kopparty2018improved,kopparty2023improved,tamo2024tighter}%,ron2024efficient}
\text{ }to the above open problem. However, it still {\bf remains open} whether the list size of explicit folded RS codes can be
brought down all the way to $O(1/\epsilon)$, which exactly matches the list size achieved by the probabilistic
method. 

Now we provide a brief summary of the aforementioned attempts. The first significant step after the work of Guruswami and Rudra was made by Guruswami and Wang \cite{gur11,guruswami2013linear}. They introduced a linear-algebraic list-decoding algorithm and made the surprising discovery that all of the messages in the list are contained within a $\mathbb{F}_q$-linear subspace of dimension $O(1/\epsilon)$. 

Although Guruswami and Wang did not directly improve the list size bound of folded RS codes, they constructed folded RS subcodes based on a pseudorandom
object called ``subspace evasive sets'' and showed that the list size can be reduced to $O(1/\epsilon)$ for randomized folded RS subcodes. In a follow-up work, Dvir
and Lovett \cite{dvir2012subspace} gave an explicit construction of ``subspace evasive sets,'' which led to a list size of $\left(1/\epsilon\right)^{O(1/\epsilon)}$ for explicit folded RS subcodes. Notably, these ``subspace evasive sets'' are inherently non-linear, making the resulting codes (folded RS subcodes) non-linear as well.

In 2018, an important work by Kopparty, Ron-Zewi, Saraf, and Wootters \cite{kopparty2018improved,kopparty2023improved} demonstrated that the list size for explicit folded RS codes can be reduced to a constant $\left(1/\epsilon\right)^{O(1/\epsilon)}$, independent of the code length $n$. A follow-up work by Tamo \cite{tamo2024tighter} provided a more careful analysis, further improving the list-size to $\left(1/\epsilon\right)^{4/\epsilon}$. Although these two works have partially resolved the open problem posed by Guruswami and Rudra, the current list size remains exponential in $1/\epsilon$. 

Regarding Tamo's recent work \cite{tamo2024tighter}, it is worth noting that for the special case of list size $L=2$, he showed that explicit folded RS codes approximately achieve the generalized Singleton bound for list decoding. However, based on our understanding, Tamo's approach, which builds on the techniques developed in \cite{kopparty2018improved, kopparty2023improved}, does not appear to generalize readily to any $L\geq 3$.

%Beyond explicit constructions, Ron-Zewi, Venkitesh, and Wootters \cite{ron2024efficient} recently showed that the list size can be improved to $O(1/\epsilon)$ for randomly punctured folded RS codes. This result builds on the frameworks established by Brakensiek, Gopi, and Makam \cite{brakensiek2023generic}, Guo and Zhang \cite{GZ23}, and Alrabiah, Guruswami, and Li \cite{alrabiah2023randomly}, all of whom focused on understanding the list-decodability of random RS codes.
Surprisingly, despite numerous other constructions of capacity-achieving codes over the past two decades \cite{guruswami2013list, kopparty15, guo2021efficient, guruswami2022optimal, bhandari2023ideal, berman2024explicit, brakensiek2024ag, guo2024random}, we are not aware of any \emph{explicit} constructions of $(1-R-\epsilon,O(1/\epsilon))$ list-decodable codes of rate $R$ over either polynomial-sized or constant-sized alphabets, even for non-linear codes. In this paper, we show that explicit folded RS codes achieve the above list-decodability, providing an affirmative answer to \Cref{open}.

\paragraph{List-decodability of univariate multiplicity codes.} Univariate multiplicity codes were first introduced in \cite{rt97_mult}, with a multivariate variant later developed in \cite{KSY14}. These codes are defined in terms of Hasse derivatives. Given a finite field $\mathbb{F}_q, i\in\mathbb{N}$, and a polynomial $f(X)$, the $i$-th Hasse derivative $f^{(i)}(X)$ is defined as the coefficient of $Z^i$ in the expansion
\[f(X+Z)=\sum_{i\in\mathbb{N}}f^{(i)}(X)Z^i.\]
Using Hasse derivatives, we can formally define the univariate multiplicity code. Given parameters $s,n,k>0$, a finite prime field\footnote{For simplicity, we only consider univariate multiplicity codes defined over a finite prime 
field.} $\mathbb{F}_p$, where $|\mathbb{F}_p|\ge n$, and any sequence of $n$ elements  $\alpha_1,\dots,\alpha_n\in\mathbb{F}_p$, the corresponding order-$s$ univariate multiplicity code over the alphabet $\mathbb{F}^s_p$ with block length $n$ and rate $R=\frac{k}{sn}$ is defined as 
\[\mathsf{MULT}_{n,k}^{(s)}(\alpha_1,\alpha_2,\dots,\alpha_n):=\bigg\{\mathcal{M}(f):f\in\mathbb{F}_p[x],\text{ }\deg f<k\bigg\}\subseteq\left(\mathbb{F}_p^s\right)^n,\]
where $\mathcal{M}(f):=(F_1,F_2,\dots,F_n)\in\left(\mathbb{F}_p^s\right)^n$, with $F_i:=\Bigl(f(\alpha_i),f^{(1)}(\alpha_i),\dots,f^{(s-1)}(\alpha_i)\Bigl)$, denotes the encoder of this code. Similarly, RS codes correspond to the special case of $s=1$.

Parallel to the results on list-decoding folded RS codes, it was shown in \cite{kopparty15,guruswami2013linear} that explicit univariate multiplicity codes over $\mathbb{F}_p^s$ achieves list-decoding capacity with a polynomial list size $L=n^{O(1/\epsilon)}$. Similar reductions in list size have been made for explicit univariate multiplicity codes, with a list size of $(1/\epsilon)^{\frac{4}{\epsilon}\left(1+\frac{Rn}{p}\right)}$ by Kopparty, Ron-Zewi, Saraf, Wootters \cite{kopparty2018improved, kopparty2023improved} and Tamo \cite{tamo2024tighter}.%, and for randomized univariate multiplicity codes with a list size of $O(1/\epsilon)$ by Ron-Zewi, Venkitesh, and Wootters \cite{ron2024efficient}. 
\text{ }A similar question can be posed for explicit univariate multiplicity codes: whether it is possible to reduce the list size from $(1/\epsilon)^{\frac{4}{\epsilon}\left(1+\frac{Rn}{p}\right)}$ to $O(1/\epsilon)$. This question will also be addressed in this paper. 

\paragraph{List-recovery.} 
List-recovery is a natural generalization of list-decoding, which is more challenging and has broader applications in areas such as pseudorandomness \cite{guruswami2009Unbalanced} and algorithms \cite{dw22}. Formally, for any code $\mathcal{C}\subseteq \Sigma^n$ over an alphabet $\Sigma$, we say $\mathcal{C}$ is $(\rho, \ell, L)$ list-recoverable if, for any product set $S = S_1 \times \cdots \times S_n \in \binom{\Sigma}{\leq \ell}^n$ with $|S_1|,\dots,|S_n|\leq \ell$, there are at most $L$ codewords $c \in \mathcal{C}$ such that
\[\mathrm{dist}(c,S_1\times\cdots\times S_n)\leq\rho n\]
where $\mathrm{dist}(c,S)$ denotes the number of indices $i\in[n]$ such that $c_i\notin S_i$. 

A natural extension of the Johnson bound for list-recovery (see \cite{gs01}) shows that any MDS code is $(1 - \sqrt{\ell R}, \ell, \poly(n))$ list-recoverable. Furthermore, similar to list-decoding, there is a list-recovery capacity (see \cite{gi01, guthesis, resthesis}). 
Let $q \geq \ell \geq 1$, $0 < \rho < 1 - \frac{\ell}{q}$, and $\epsilon > 0$. 
As stated in \cite[Theorem 2.4.12]{resthesis}, for sufficiently large $n$, random codes of block length $n$, rate $R$, and alphabet size $q$ are $(\rho, \ell, O\left(\frac{\ell}{\epsilon}\right))$ list-recoverable with high probability, where \begin{equation}\label{recovercapacity} R \geq 1 - H_{q, \ell}(\rho) - \epsilon, \end{equation} and $H_{q, \ell}(\rho) := \rho \log_q{\left(\frac{q - \ell}{\rho}\right)} + (1 - \rho) \log_q{\left(\frac{\ell}{1 - \rho}\right)}$ is the $(q, \ell)$-ary entropy function defined in \cite[Definition 2.4.9]{resthesis}. Codes satisfying \eqref{recovercapacity} are said to achieve list-recovery capacity\footnote{There is another slightly different formulation of list-recovery capacity, $R \geq 1 - H_{q/\ell}(\rho) - \epsilon$, in the literature (see \cite{aw18}). In this paper, we use the threshold \eqref{recovercapacity} from \cite{resthesis}, as it behaves similarly to list-decoding capacity in the list-recovery regime in most general cases.}. When $q \geq 2^{\Omega\left(\log{\ell}/\epsilon\right)}$, condition \eqref{recovercapacity} simplifies to $\rho \geq 1 - R - \epsilon$. We refer to this variant as \emph{large-alphabet list-recovery capacity} in this paper.

On the negative side, the best previous result from \cite{gst22b} shows that for any $\ell \geq 2$ and $\epsilon > 0$, any code with sufficiently large block length $n$ and constant rate $R$ cannot be $\left(1 - R - \epsilon, \ell, \frac{\ell(1 - R)}{\epsilon} - 2\right)$ list-recoverable. The list size in this result matches the large-alphabet list-recovery capacity up to a constant factor. 

\paragraph{List-recoverability of folded RS and univariate multiplicity codes.} There is a long line of exciting results \cite{Guru06,guruswami2013linear,kopparty2018improved,kopparty2023improved,lp20,guo2022improved,goldberg2022list,tamo2024tighter} on the list-recoverability of randomly punctured and explicit RS codes, as well as folded RS codes, beyond the Johnson radius.  In particular, for explicit folded RS codes, \cite{Guru06} first established their $(1-R-\epsilon, \ell, \poly(n))$ list-recoverability. Later, \cite{guruswami2013linear} refined this result by showing that explicit $(s,\gamma)$-folded RS codes are $\left(\frac{1}{L+1}\left(L+1-\ell-\frac{sLR}{s-L+1}\right), \ell, \poly(n)\right)$ list-recoverable, where the list of candidate messages is contained in an affine subspace of dimension at most $L-1$. Building on this and introducing new techniques, \cite{kopparty2018improved,kopparty2023improved,tamo2024tighter} reduced the list size to a constant. Specifically, they demonstrated that explicit folded RS codes are $\left(1-R-\epsilon, \ell, \left(\frac{\ell}{\epsilon}\right)^{O(\frac{1+\log{\ell}}{\epsilon})}\right)$ list-recoverable. A similar bound applies to univariate multiplicity codes using a similar analysis.

In this work, given parameters $\ell$ and $L$, we present a new upper bound on the radius $\rho$, beyond which any folded RS codes cannot be $(\rho, \ell, L)$ list-recoverable. We also provide evidence that this upper bound is close to optimal. Our new bound is exponentially tighter than the previous result from \cite{gst22b}, and the proof strategy can be adapted to demonstrate similar results for univariate multiplicity codes.

%\end{definition}
\subsection{Main results}
In this work, we fully resolve the open question posed by Guruswami and Rudra in \cite{Guru06,guruswami2008explicit} (see \cref{open}) by demonstrating that both explicit folded RS codes and explicit univariate multiplicity codes are (average-radius) list-decodable up to capacity with an optimal list size of $O(1/\epsilon)$. More broadly, we show that these two code families achieve a relaxed version of the generalized Singleton bound in the context of list-decoding.

As a side remark, our results imply that folded RS codes and univariate multiplicity codes are explicit constructions of $(1-R-\epsilon, O(1/\epsilon))$ list-decodable codes with rate $R$ over polynomial-sized alphabets, making them the first explicit constructions to achieve such parameters. %Additionally, our results can be interpreted as a complete derandomization of \cite{ron2024efficient} for both folded RS and univariate multiplicity codes.

In terms of list-recoverability, given $\ell$ and $L$, we establish an improved upper bound on the radius $\rho$ beyond which folded RS codes cannot be $(\rho, \ell, L)$ list-recoverable. As a corollary, our bound implies that folded RS codes cannot be $(1-R-\epsilon, \ell, \ell^{o_R(1/\epsilon)})$ list-recoverable. This result suggests that the list-recoverability bounds presented in \cite{kopparty2018improved,kopparty2023improved,tamo2024tighter} are close to optimal and that folded RS codes exhibit significantly worse list sizes than random codes in the context of list-recovery. Our new bound provides an exponentially tighter lower bound on the list size compared to the previous bound of $\left(1-R-\epsilon, \ell, \frac{\ell(1-R)}{\epsilon}-2\right)$ from \cite{gst22b}. Using similar arguments, these results can also be applied to univariate multiplicity codes.

\paragraph{List-decodability of folded Reed--Solomon codes.}  Our first main theorem states that any folded Reed--Solomon code associated with an ``appropriate'' tuple of field elements satisfies a relaxed version of the generalized Singleton bound. 
\begin{definition}[Appropriate evaluation points]
Given a finite field $\mathbb{F}_q$, $s \geq 1$, and a generator $\gamma$ of the cyclic group $\mathbb{F}_q^\times$, a tuple $(\alpha_1, \dots, \alpha_n) \in \mathbb{F}_q^n$ is called \emph{appropriate} if the $sn$ elements $\gamma^i\alpha_j$ are distinct, where $i\in\{0,\dots,s-1\}$ and $j\in[n]$. Throughout this paper, the choice of $\gamma$ will be clear from the context. 
\end{definition}
\begin{theorem}\label{thm:main}
For any integers $s,n,L\geq 1$, $k\in [n]$, generator $\gamma$ of $\mathbb{F}^{\times}_q$ and appropriate $(\alpha_1,\alpha_2,\dots,\alpha_n)\in\F_q^n$, the code $\mathsf{FRS}_{n,k}^{(s,\gamma)}(\alpha_1,\alpha_2,\dots,\alpha_n)$ over the alphabet $\mathbb{F}_q^s$ is $\left(\frac{L}{L+1}\left(1-\frac{sR}{s-L+1}\right),L\right)$ list-decodable\footnote{\cref{thm:main} holds even when list decodability is replaced by average-radius list-decodability. The same is true for \cref{cor:main}, \cref{thm:main_mul}, and \cref{cor:main_mul}.}.
\end{theorem}
An interesting observation is that the bound $\frac{L}{L+1}\left(1-\frac{sR}{s-L+1}\right)$ matches the decoding radius given in \cite[Theorem 7]{gur11} given the parameter $L$.

As a corollary, we prove that any folded RS code associated with an appropriate tuple $(\alpha_1,\alpha_2,\dots,\alpha_n)$ achieves list-decoding capacity with an optimal list size.
\begin{corollary}[Informal, see \cref{cor:exact_fold}]\label{cor:main}
Let $\epsilon>0$, $n \geq 1$, $k\in [n]$, generator 
$\gamma$ of $\mathbb{F}^{\times}_q$,  $L=O(1/\epsilon)$, and $s=\Theta(1/\epsilon^2)$. Let $(\alpha_1,\alpha_2,\dots,\alpha_n)\in\F_q^n$ be an appropriate tuple.  Then the folded RS code $\mathsf{FRS}_{n,k}^{(s,\gamma)}(\alpha_1,\alpha_2,\dots,\alpha_n)$ over the alphabet $\mathbb{F}_q^s$ is $\big(1-R-\epsilon,O(1/\epsilon)\big)$ list-decodable.
\end{corollary}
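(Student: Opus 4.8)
The plan is to deduce \cref{cor:main} from \cref{thm:main} by a pure parameter optimization; all of the coding-theoretic work lives inside \cref{thm:main}, which I treat as a black box. We may assume $\epsilon<1-R$, since otherwise $1-R-\epsilon\le 0$ and the statement is vacuous (so in particular $R<1$). The idea is to invoke \cref{thm:main} with a list size $L=O(1/\epsilon)$ sitting just above $(1-R)/\epsilon$ — for instance $L=\lceil 2/\epsilon\rceil$ is a safe uniform choice, while the sharp ``exact'' version in the footnote uses $L=\lfloor(1-R)/\epsilon\rfloor$ — and then to take a folding parameter $s=\Theta(1/\epsilon^2)$ large enough that the guaranteed radius $\frac{L}{L+1}\left(1-\frac{sR}{s-L+1}\right)$ is at least $1-R-\epsilon$; since a $(\rho,L)$-list-decodable code is automatically $(\rho',L)$-list-decodable for every $\rho'\le\rho$ (the same for average-radius), this suffices.

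The heart of the matter is an elementary rearrangement of the target inequality
\[
\frac{L}{L+1}\left(1-\frac{sR}{s-L+1}\right)\ \ge\ 1-R-\epsilon .
\]
Dividing by $L/(L+1)$ and writing $L-(L+1)(1-R-\epsilon)=LR+D$ with $D:=\epsilon L-(1-R-\epsilon)=\epsilon(L+1)-(1-R)$, the inequality becomes $\frac{sR}{s-L+1}\le\frac{LR+D}{L}$. Provided $D>0$ — which holds since $L+1>(1-R)/\epsilon$ for our choice of $L$ — and $s\ge L$ (so that $s-L+1>0$), cross-multiplying and solving the resulting linear inequality in $s$ yields the clean threshold
\[
s\ \ge\ \frac{L(L-1)R}{\epsilon L-(1-R-\epsilon)}+L-1 ,
\]
which is exactly the condition recorded in the footnote to \cref{cor:main}. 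With $L=\lceil 2/\epsilon\rceil$ one has $\epsilon(L+1)>2$, hence $D=\epsilon(L+1)-(1-R)>2-(1-R)=1+R\ge 1$, so the right-hand side is at most $L(L-1)R+L-1\le L^2-1<L^2$; therefore $s:=L^2=\Theta(1/\epsilon^2)$ satisfies the threshold, and since $R>0$ forces the threshold strictly above $L-1$ we also get $s\ge L$. Feeding this pair $(L,s)$ into \cref{thm:main} — legitimate because the $\alpha_i$ are appropriate by hypothesis and the code has rate $R=k/(sn)$ — will give the claimed $(1-R-\epsilon,\,O(1/\epsilon))$ (average-radius) list-decodability.

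I do not anticipate a real obstacle: granted \cref{thm:main}, this is bookkeeping. The one delicate point is the rounding hidden in the ``exact'' choice $L=\lfloor(1-R)/\epsilon\rfloor$, for which $D=\epsilon(L+1)-(1-R)$ can be far smaller than $\epsilon$ (precisely when $(1-R)/\epsilon$ lies just below an integer), pushing the required $s$ well above $\Theta(1/\epsilon^2)$; the informal statement sidesteps this by permitting a slightly larger $L=\Theta(1/\epsilon)$, which restores $D=\Omega(1)$ and hence $s=\Theta(1/\epsilon^2)$ while keeping $L=O(1/\epsilon)$. Tracking the constants with $L=\lfloor(1-R)/\epsilon\rfloor$ (for those $(R,\epsilon)$ where the corresponding $s$ is affordable) then recovers the optimal list size $\lfloor(1-R)/\epsilon\rfloor$ stated there.
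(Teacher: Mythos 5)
Your proposal is correct and matches the paper's own derivation of \cref{cor:exact_fold}: both invoke \cref{thm:main} with $L=\Theta(1/\epsilon)$ exceeding $(1-R-\epsilon)/\epsilon$ and verify that $s$ above the threshold $\frac{L(L-1)R}{\epsilon L-(1-R-\epsilon)}+L-1$ forces the radius $\frac{L}{L+1}\bigl(1-\frac{sR}{s-L+1}\bigr)$ past $1-R-\epsilon$, with $s=\Theta(1/\epsilon^2)$ sufficing. Your remark about the rounding issue for $L=\lfloor(1-R)/\epsilon\rfloor$ correctly mirrors the paper's case $(a)$, where $s$ is only bounded by a constant $\widetilde{s}$ depending on $R,\epsilon$ rather than $\Theta(1/\epsilon^2)$.
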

%\begin{remark}
%Both \cref{thm:main} and \cref{cor:main} also apply to the average-radius list-decodability.
%\ZZnote{alphabet size}
%\end{remark}
\paragraph{List-decodability of univariate multiplicity codes.} Using a similar proof strategy, we establish the same list-decodability for univariate multiplicity codes over $\mathbb{F}_p^s$, where $p$ is a prime number.
\begin{theorem}\label{thm:main_mul}
Let $p$ be a prime number. For any integers $s,n,L\geq 1$, $k\in [n]$, and distinct $\alpha_1,\alpha_2,\dots,\alpha_n\in\F_p$, the code $\mathsf{MULT}_{n,k}^{(s)}(\alpha_1,\alpha_2,\dots,\alpha_n)$ over the alphabet $\mathbb{F}_p^s$ is $\left(\frac{L}{L+1}\left(1-\frac{sR}{s-L+1}\right),L\right)$ list-decodable.
\end{theorem}
%\subsection{Proof overview}
\begin{corollary}[Informal, see \cref{cor:exact_mult}]\label{cor:main_mul}
Let $p$ be a prime number, $\epsilon>0$, $n \geq 1$, $k\in [n]$, $L=O(1/\epsilon)$, and $s=\Theta(1/\epsilon^2)$. For any distinct $\alpha_1,\alpha_2,\dots,\alpha_n\in\F_q$, the order-$s$ univariate multiplicity code $\mathsf{MULT}_{n,k}^{(s)}(\alpha_1,\alpha_2,\dots,\alpha_n)$ over the alphabet $\mathbb{F}_p^s$ is $\big(1-R-\epsilon,O(1/\epsilon)\big)$ list-decodable.
\end{corollary}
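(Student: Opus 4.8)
The plan is to derive \cref{cor:main_mul} as a direct instantiation of \cref{thm:main_mul}, in exactly the way \cref{cor:main} is obtained from \cref{thm:main}: the substantive work is already packaged inside the relaxed generalized Singleton bound, so it remains only to choose the multiplicity order $s$ and the list size $L$ as functions of $\epsilon$ so that the decoding radius $\rho(L,s,R):=\frac{L}{L+1}\bigl(1-\frac{sR}{s-L+1}\bigr)$ guaranteed by \cref{thm:main_mul} is at least $1-R-\epsilon$, while keeping $L=O(1/\epsilon)$, $s=\Theta(1/\epsilon^2)$ and $L\le s$. I may assume $1-R-\epsilon>0$, since if $1-R-\epsilon\le 0$ then (with $L\ge 2$, which will hold) the claim is trivially true.

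Concretely, first I would take $L:=\lceil 2/\epsilon\rceil=O(1/\epsilon)$, so that $L+1>2/\epsilon$ and hence $\frac{L}{L+1}=1-\frac{1}{L+1}\ge 1-\frac{\epsilon}{2}$. Next, using the identity $\frac{sR}{s-L+1}=R+\frac{R(L-1)}{\,s-(L-1)\,}$ together with $R\le 1$, I would let $s$ be any integer with $\frac{2(L-1)}{\epsilon}+(L-1)\le s=\Theta(1/\epsilon^2)$; this forces $\frac{sR}{s-L+1}\le R+\frac{\epsilon}{2}$. Combining the two estimates and using $0\le 1-R-\frac{\epsilon}{2}\le 1$ yields $\rho(L,s,R)\ge\bigl(1-\tfrac{\epsilon}{2}\bigr)\bigl(1-R-\tfrac{\epsilon}{2}\bigr)\ge 1-R-\epsilon$. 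Since this choice of $s$ and $L$ does not depend on $R$, applying \cref{thm:main_mul} then shows that for every $n,k$ and every pairwise distinct $\alpha_1,\dots,\alpha_n\in\mathbb{F}_p$ the code $\mathsf{MULT}_{n,k}^{(s)}(\alpha_1,\dots,\alpha_n)$ of rate $R=k/sn$ is $\bigl(1-R-\epsilon,\,O(1/\epsilon)\bigr)$ list-decodable, and even average-radius list-decodable, because the footnote to \cref{thm:main_mul} supplies that strengthening automatically.

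I do not expect a genuine obstacle in the corollary itself: the only points to verify are the elementary inequality bookkeeping above and that the chosen $s$ exceeds $L-1$ so that $\rho(L,s,R)$ is well defined. For the sharper statement recorded as \cref{cor:exact_mult}, I would instead tune $L$ to the optimum, $L:=\lfloor(1-R)/\epsilon\rfloor$, which makes $\epsilon L-(1-R-\epsilon)=\epsilon(L+1)-(1-R)>0$; clearing denominators then shows that $\rho(L,s,R)\ge 1-R-\epsilon$ is equivalent to a linear inequality in $s$, namely $s>\frac{L(L-1)R}{\epsilon L-(1-R-\epsilon)}+L-1$, so any integer $s$ satisfying this works, with this $L$ being essentially the optimal list size permitted by the generalized Singleton bound. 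The real difficulty lies one level up, in \cref{thm:main_mul} itself, where one has to transport the folded-Reed--Solomon argument behind \cref{thm:main} to the multiplicity setting, replacing the multiplicative-shift structure $f(x)\mapsto f(\gamma x)$ by the Hasse-derivative structure over a prime field; that is where any new ideas would be needed, not in deducing \cref{cor:main_mul}.
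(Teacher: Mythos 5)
Your proposal is correct and takes essentially the same route as the paper: the paper also proves this corollary by directly instantiating \cref{thm:main_mul} with $L,s$ chosen as explicit functions of $\epsilon$ (there, $L>\frac{1-R-\epsilon}{\epsilon}$ and $s>\frac{L(L-1)R}{\epsilon L-(1-R-\epsilon)}+L-1$, specialized to $L=\lceil 1/\epsilon\rceil$, $s>\lceil 3/\epsilon^2\rceil$) and verifying $\frac{L}{L+1}\left(1-\frac{sR}{s-L+1}\right)\ge 1-R-\epsilon$ by elementary estimates. Your slightly different constants ($L=\lceil 2/\epsilon\rceil$ and the two $\epsilon/2$ losses) are immaterial for the informal $O(1/\epsilon)$, $\Theta(1/\epsilon^2)$ statement, and your closing remark correctly locates all the real work in \cref{thm:main_mul} itself.
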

%\begin{remark}
%Both \cref{thm:main_mul} and \cref{cor:main_mul} also apply to the average-radius list-decodability.
%\end{remark}
%\paragraph{More general }
\paragraph{More general results for list-decoding.} Based on the connection with the notion of ``subspace designs'' introduced in \cite{guruswami2013list}, our techniques extend to a more general result on codes achieving list-decoding capacity. Specifically, we show that any ``strong subspace designable codes'' (see \cref{def:designcodes})  exhibit near-optimal (average-radius) list-decodability. Our proofs on folded RS and univariate multiplicity codes are two special cases of this generalization, and we can actually wrap up \cref{sec:frs} and \cref{sec:mul} to make them more modular using the ``subspace design'' language. However, To keep our proofs (\cref{sec:frs}) self-contained, we defer this generalization (see \cref{thm:generalize}) and modular formulation to \cref{append:generalize}.

\paragraph{List-recoverability of folded Reed–Solomon codes.}%\footnote{These results also apply to univariate multiplicity codes.} %\YYnote{New paragraph}
We further extend our new framework for list-decoding to investigate the list-recoverability of folded RS codes. For any $\ell$ and $L$, we provide a bound on the radius $\rho$ indicating that any folded RS code associated with an appropriate tuple of field elements cannot be $(\rho, \ell, L)$ list-recoverable if $\rho$ is too large.
\begin{theorem}\label{thm:bound_clean}
Let $s,n,L\ge 1$, $k\in [n]$, $2\leq \ell\leq L,q\ge\ell$, generator $\gamma$ of $\mathbb{F}^{\times}_q$ and $m=\lceil\log_{\ell}(L+1)\rceil>1$. Suppose $R\leq \frac{m-1}{m}$ and $\frac{k-1}{s}\ge m$. If a folded Reed--Solomon code $\mathsf{FRS}^{(s,\gamma)}_{n,k}(\alpha_1,\alpha_2,\dots,\alpha_n)$ of rate $R=\frac{k}{sn}$ with appropriate evaluation points in $\mathbb{F}_q$ is $(\rho,\ell,L)$ list-recoverable, then 
    \[\rho\leq\frac{L+1-\ell}{L+1}\left(1-\frac{mR}{m-1}\right)+\frac{5}{n},\]  
\end{theorem}
When $L+1=\ell^m$ for any $m, \ell\ge 2$, we have $\frac{m(L+1-\ell)}{m-1}>\frac{sL}{s-L+1}$ for sufficiently large $s$. Therefore, \cref{thm:bound_clean} provides a tighter upper bound, which disproves a previously held conjecture that explicit folded RS codes are $(\rho, \ell, L)$ list-recoverable for the radius $\rho=\frac{1}{L+1}\left(L+1-\ell-\frac{sLR}{s-L+1}\right)$, which was originally proposed in \cite{gur11,guruswami2013linear}. Additionally, we conjecture that our bound is nearly tight when $L+1=\ell^a$ for any integer $a\geq 2$. Concretely, we believe any  folded RS codes with appropriate evaluation points achieves the $\epsilon$-relaxed radius bound when the folding parameter $s$ is large enough (see \cref{conj:tight}). As an evidence, we prove the tightness of our bound when $(\ell,L)=(2,3)$, as stated by \cref{thm:special} below. It is worth noting that $(\ell,L)=(2,3)$ is the smallest ``non-trivial'' list-recoverability case that cannot be ``trivially'' derived from known list-decodability results. %\ZGnote{It is not clear to me what this claim means, or how to justify it.}
\begin{theorem}\label{thm:special}
Let $n,L\ge 1$, $k\in [n]$, $q>n$, $s\geq 3$,
and generator $\gamma$ of  $\mathbb{F}_q^\times$.
The folded RS code $\mathsf{FRS}_{n,k}^{(s,\gamma)}(\alpha_1,\alpha_2,\dots,\alpha_n)$ with appropriate evaluation points in $\mathbb{F}_q$ is $\left(\frac{1}{2}-\frac{sR}{s-2}, 2, 3\right)$ list-recoverable.
\end{theorem}
In order to compare our bound  \cref{thm:bound_clean} with list-recovery capacity, we may rewrite \cref{thm:bound_clean} in the following form to get more intuitions. 
\begin{corollary}\label{cor:rewrite_bound}
For any constants $0<R<1, \ell\ge 2$, $0<\epsilon<\frac{R(1-R)}{4}, s\ge 1$ and generator $\gamma$ of $\mathbb{F}^{\times}_q$ if $k,n$ are sufficiently large, then any rate $R=\frac{k}{sn}$ folded Reed--Solomon code $\mathsf{FRS}^{(s,\gamma)}_{n,k}(\alpha_1,\dots,\alpha_n)$ with appropriate evaluation points in $\mathbb{F}_q$ cannot be $(1-R-\epsilon, \ell, \ell^{\frac{R}{2\epsilon}-1}-1)$ list-recoverable.
\end{corollary}
On the negative side, when $\ell$ is a constant, our bound \cref{cor:rewrite_bound} refutes the hope that folded RS codes could achieve the large-alphabet list-recovery capacity $\left(1-R-\epsilon, \ell, O(\frac{\ell}{\epsilon})\right)$ that is achieved by random codes. \cref{cor:rewrite_bound} exponentially improves the previous best known list size lower bound $\left(1-R-\epsilon, \ell, \frac{\ell (1-R)}{\epsilon}-2\right)$ from \cite{gst22b}.  It is also worth noting that \cite{kopparty2018improved,kopparty2023improved,tamo2024tighter} has proved the $\left(1-R-\epsilon, \ell, \left(\frac{\ell}{\epsilon}\right)^{O(\frac{1+\log{\ell}}{\epsilon})}\right)$ list-recoverability of folded RS codes, %\ZGnote{The two sentences shouldn't be connected by ``so'' if I understand it correctly, as the first does not imply the second. They complement each other.} 
so \cref{cor:rewrite_bound} actually implies that the exponential dependency on ${1}/{\epsilon}$ in the list size cannot be removed. Besides, since we have shown these codes achieve list-decoding capacity, \cref{cor:rewrite_bound} implies an intrinsic separation between list-decoding and list-recovery. More specifically, folded RS % and univariate multiplicity 
 codes are as good as random codes in terms of list-decodability, but they are far worse than random codes when considering list-recoverability.
%\YYnote{Its application to GUV?}

As a side remark, RS codes are just a special case of folded RS codes when $s=1$, so \cref{thm:bound_clean,cor:rewrite_bound} also apply to them. We illustrate this set of results about list-recoverability in \cref{sec:list_recover}. 
\begin{remark*}
   While these list-recoverable upper bounds are also applicable to univariate multiplicity codes, we will focus solely on folded RS codes for the sake of brevity.
\end{remark*}

\paragraph{Concurrent work.} 
After our work was completed, we learned about a concurrent and independent work by Shashank Srivastava, which was first announced in his thesis \cite{srivastava2024continuous} \footnote{See also their updated version \cite{srivastava24},  scheduled for publication at SODA 2025.} 
around the same time. %\footnote{The first appearance of Srivastava's result in \ZZnote{Add Time Stamps of both papers}}. 
Using different techniques, Srivastava obtained a similar but weaker result on the list-decodability of folded RS codes.
Specifically, we prove the $\left(\frac{L}{L+1}\left(1-\frac{sR}{s-L+1}\right),L\right)$ (average-radius) list-decodablility for folded RS codes, while Srivastava \cite{srivastava2024continuous} shows the $\left(\frac{L}{L+1}\left(1-\frac{sR}{s-L+1}\right),L^2\right)$ list-decodablility. Consequently, our result fully resolves \cref{open} of Guruswami--Rudra with the optimal list size $O(1/\epsilon)$, while Srivastava's result implies only a quadratic list size $O(1/\epsilon^2)$. \cref{sec:loss} is the major technical difference that allows our paper to outperform Srivastava's result and reduce the list size to be optimal.

\vspace{14pt}
Prior to presenting formal proofs for the aforementioned theorems, we introduce essential notation and outline the organization of the paper below for clarity and convenience.
\paragraph{Notation.} Throughout this paper, unless stated otherwise, we use $f:=f(X)=\sum_{i=0}^{k-1}a_iX^i$ to denote a polynomial over $\mathbb{F}_q$ with degree at most $k-1$, and $\vec{f}\in\mathbb{F}^k_q$ to denote the corresponding coefficient vector $\left(a_0,\dots,a_{k-1}\right)^T$.  The parameters $k$ and $q$ will be clear from the context. For any vector $y\in\left(\mathbb{F}^s_q\right)^n$ and $i\in[n]$, we use $y[i]\in\mathbb{F}^s_q$ to denote its $i$-th entry. The symbol $\gamma$ always denotes a generator of the cyclic group $\mathbb{F}_q^{\times}$. Unless otherwise stated, the list size $L$ and the folding parameter $s$ are positive integers. For any finite set $A$, we use $2^A$ to denote the power set of $A$, $\binom{A}{k}$ to denote the set of all subsets of $A$ with size $k$, and $\binom{A}{\leq k}:=\bigcup_{t\leq k}\binom{A}{t}$. For any undirected graph $G$, we use $E(G)$ to denote its edge-set.

\paragraph{Paper organization.} In \cref{sec:frs}, we prove our results on the list-decodability of folded RS codes (\cref{thm:main} and \cref{cor:main}). By extending our method from \cref{sec:frs}, we derive a tighter upper bound on the achievable radius for the list-recoverability of folded RS codes in \cref{sec:list_recover}. Based on the framework established in \cref{sec:frs}, we complete the proofs of our list-decoding results for univariate multiplicity codes (\cref{thm:main_mul} and \cref{cor:main_mul}) in \cref{sec:mul}. Additionally,  using the ``subspace design'' language, we provide a more general theorem that all ``strong subspace designable codes'' achieve list-decoding capacity in \cref{append:generalize}. It also gives a more modular but black-box presentation of our proofs in \cref{sec:frs,sec:mul}.

\section{Folded Reed--Solomon Codes Achieve Relaxed Generalized Singleton Bounds 
}\label{sec:frs}
In this section, we prove our main results on the list-decodability of folded RS codes (\cref{thm:main} and \cref{cor:main}). 
\subsection{Geometric agreement hypergraphs and geometric polynomials} 
The notion of ``agreement hypergraphs'' was introduced by Guo, Li, Shangguan, Tamo, and Wootters \cite{guo2022improved} to formalize a ``hypergraph Nash-Williams--Tutte conjecture'' and to study the list-decodability of random RS codes. In this section,
we introduce an expanded notion, called ``geometric agreement hypergraphs,'' which extends agreement hypergraphs by incorporating additional geometric information: each vertex is represented by a vector in the linear space $\mathbb{F}_q^k$. 

We also introduce another notion called ``geometric polynomials'' in this section. A geometric polynomial is invariant under different choices of basis and, together with the geometric agreement hypergraph, plays a crucial role in our proof.

\subsubsection{Geometric agreement hypergraphs}
We will now formally define the notion of ``geometric agreement hypergraph'' below. Each vertex of this hypergraph will represent a vector, and therefore each of its hyperedges is related to the geometric positions of vectors in it. It enables us to inspect list-decodability from a geometric perspective.
\begin{definition}[Geometric agreement hypergraph based on FRS codes]\label{def:agr_graph}
Let $\gamma$ be a generator of $\mathbb{F}^{\times}_q$ Given a $(s,\gamma)$-folded Reed--Solomon code $\mathsf{FRS}^{(s,\gamma)}_{n,k}(\alpha_1,\dots,\alpha_n)\subseteq \left(\mathbb{F}^s_q\right)^n$ where $(\alpha_1,\dots,\alpha_n)$ is an appropriate sequence, a received word $\vec{y}\in(\mathbb{F}^s_q)^n$, and $\ell$ vectors $\vec{f_1},\vec{f_2},\dots,\vec{f_\ell}\in \mathbb{F}_q^k$, we define the geometric agreement hypergraph $(\mathcal{V},\mathcal{E})$ with vertex set $\mathcal{V}:=\left\{\vec{f_1},\vec{f_2},\dots,\vec{f_\ell}\right\}$ and a tuple of $n$ hyperedges $\mathcal{E}:=\{e_1,e_2,\dots,e_n\}$, where
$e_i:=\left\{\vec{f_j}\in\mathcal{V} : \vec{y}[i]=\mathcal{C}(f_j)[i]\right\}.$ 

Additionally, given any subset $\mathcal{H}\subseteq \mathcal{V}$, we define $(\mathcal{H},\mathcal{E}|_{\mathcal{H}})$ as the geometric agreement sub-hypergraph of $\left(\mathcal{V},\mathcal{E}\right)$ restricted on $\mathcal{H}$, where $\mathcal{E}|_{\mathcal{H}}:=\{e_1|_{\mathcal{H}},\dots,e_n|_{\mathcal{H}}\}$ and $e_i|_{\mathcal{H}}:=\{\vec{f}_t:\vec{f}_t\in e_i\cap \mathcal{H}\}$.
%\YYnote{Note that here we can't call it induced subgraph, since "induced" means we can only add hyperedges that are completely contained in $\mathcal{H}$}
\end{definition}
%\begin{definition}[Colored Geometric agreement hypergraph]\label{def:color_agr_graph}
%Given a $(s,\gamma)$-folded Reed--Solomon code $\mathsf{FRS}^{(s,\gamma)}_{n,k}(\alpha_1,\dots,\alpha_n)\subseteq \left(\mathbb{F}^s_q\right)^n$ where $(\alpha_1,\dots,\alpha_n)$ is an \emph{appropriate} sequence, $\ell$ received words $\vec{y}_1,\dots,\vec{y}_{\ell}\in(\mathbb{F}^s_q)^n$, and $m$ vectors $\vec{f_1},\vec{f_2},\dots,\vec{f_m}\in \mathbb{F}_q^k$, we can define the \emph{colored geometric agreement hypergraph} $(\mathcal{V},\mathcal{E})$ with vertex set $\mathcal{V}:=\left\{\vec{f_1},\vec{f_2},\dots,\vec{f_\ell}\right\}$ and a tuple of $\ell n$ hyperedges $\mathcal{E}:=\{e_{1,1},\dots,e_{1,\ell},e_{2,1},\dots,e_{2,\ell},\dots,e_{n,1},\dots,e_{n,\ell}\}$, where
%$e_{i,j}:=\left\{\vec{f_t}\in\mathcal{V} : j=\min_{j}\left(\vec{y}_j[i]=\mathcal{C}(f_t)[i]\right)\right\}.$ has color $i$.
%\end{definition}
%\begin{claim}
%Any two hyperedges with the same color in a \emph{colored geometric agreement hypergraph} are disjoint.
%\end{claim}
We will then define the notion of ``affine dimension'' for a set of vectors, which characterizes how much information we can get from a hyperedge consisting of vectors in a geometric agreement hypergraph. The affine dimension of a set of vectors depends solely on the relative geometric positions of these vectors.  %reflecting the geometric nature of our proof.
%\ZGnote{I think the discussion here can be simplified but we can debate later.}

\begin{definition}[Affine dimension] 
    For any 
$\vec{f}_1,\vec{f}_2,\dots,\vec{f}_m\in\mathbb{F}_q^k$, we define
\[\widetilde{\dim}_{\mathbb{F}_q}\left(\vec{f}_1,\vec{f}_2,\dots,\vec{f}_m\right):=\min\bigg\{r-1:\left\{\vec{f}_1,\vec{f}_2,\dots,\vec{f}_m\right\}\subseteq\mathrm{SP}\left(\vec{f}_{i_1},\vec{f}_{i_2},\dots,\vec{f}_{i_r}\right)\text{ for some }r\in[m]\bigg\},\]
where $\mathrm{SP}\left(\vec{f}_{1},\vec{f}_{2},\dots,\vec{f}_{s}\right):=\big\{a_1\vec{f}_{1}+a_2\vec{f}_{2}+\dots+a_s\vec{f}_{s}:a_1,a_2,\dots,a_s\in\F_q \text{ and }\sum_{i\in[s]}a_i=1\big\}\subseteq\mathbb{F}_q^k.$
\end{definition}
\begin{remark}
    Intuitively, $\widetilde{\dim}_{\mathbb{F}_q}\left(\vec{f}_1,\dots,\vec{f}_m\right)$ means the dimension of the smallest affine subspace that contains all these vectors.
\end{remark}
%\YYnote{Intuitively, $\widetilde{\dim}_{\mathbb{F}_q}(\vec{f}_1,\dots,\vec{f}_m)$ means the dimension of the smallest affine subspace that contain all these vectors.}
%\begin{corollary}\label{cor:tree_unchange}
%For any $\vec{f}_1,\dots,\vec{f}_m\in\mathbb{F}^k_q$ and $i,j\in[m]$, it follows that 
%\[\operatorname{Span}_{\mathbb{F}_q}\left\{\vec{f}_1-\vec{f}_i,\dots,\vec{f}_m-\vec{f}_i\right\}=\operatorname{Span}_{\mathbb{F}_q}\left\{\vec{f}_1-\vec{f}_j,\dots,\vec{f}_m-\vec{f}_j\right\}.\]
%\end{corollary}
%\begin{proof}
%\YYnote{Under construction. Need to be rephrased and combine with \Cref{fact:transform_tildedim}}
%\end{proof}
Since a hyperedge is essentially a set of vectors, we can also define the affine dimension of it.
\begin{definition}[Affine dimension of hyperedges]
Using notations in \Cref{def:agr_graph}, for any geometric agreement hypergraph $(\mathcal{V},\mathcal{E})$ and its hyperedge $e\in\mathcal{E}$, let $e=\left(\vec{f}_{i_1},\dots,\vec{f}_{i_t}\right)$ where $t=|e|$, we can define the affine dimension of the hyperedge $e$ by
$\widetilde{\dim}_{\mathbb{F}_q}\left(e\right):=\widetilde{\dim}_{\mathbb{F}_q}\left(\vec{f}_{i_1},\dots,\vec{f}_{i_t}\right)$.
\end{definition}
The following fact about affine dimension will be frequently used.
%serves as a bridge between its formal definition and geometric explanation, 
\begin{fact}\label{fact:transform_tildedim}
For any $\vec{f}_1,\dots,\vec{f}_m\in\mathbb{F}^k_q$ and $i\in[m]$, it follows that 
\[\widetilde{\dim}_{\mathbb{F}_q}\left(\vec{f}_1,\dots,\vec{f}_m\right)=\dim_{\mathbb{F}_q}\operatorname{Span}_{\mathbb{F}_q}\left\{\vec{f}_1-\vec{f}_i,\dots,\vec{f}_m-\vec{f}_i\right\}.\]
\end{fact}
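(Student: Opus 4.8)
The statement to prove is \Cref{fact:transform_tildedim}, which asserts that $\widetilde{\dim}_{\mathbb{F}_q}(\vec{f}_1,\dots,\vec{f}_m)$ equals $\dim_{\mathbb{F}_q}\operatorname{Span}_{\mathbb{F}_q}\{\vec{f}_1-\vec{f}_i,\dots,\vec{f}_m-\vec{f}_i\}$, for any fixed choice of anchor index $i\in[m]$. The plan is to prove two inequalities, working directly from the definitions of $\widetilde{\dim}$ and $\mathrm{SP}$, and exploiting the fact that translation by $-\vec{f}_i$ carries affine combinations to linear combinations. Throughout, write $d := \dim_{\mathbb{F}_q}\operatorname{Span}_{\mathbb{F}_q}\{\vec{f}_j - \vec{f}_i : j\in[m]\}$ and let $\wn := \widetilde{\dim}_{\mathbb{F}_q}(\vec{f}_1,\dots,\vec{f}_m)$.

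First I would show $\wn \le d$. Pick indices $j_1,\dots,j_d$ such that $\vec{f}_{j_1}-\vec{f}_i,\dots,\vec{f}_{j_d}-\vec{f}_i$ form a basis of the span. I claim $\{\vec{f}_1,\dots,\vec{f}_m\}\subset \mathrm{SP}(\vec{f}_i,\vec{f}_{j_1},\dots,\vec{f}_{j_d})$, a set of $d+1$ vectors, which by definition of $\widetilde{\dim}$ (taking $r = d+1$) gives $\wn \le (d+1)-1 = d$. To verify the claim, fix any $\ell\in[m]$; since $\vec{f}_\ell - \vec{f}_i$ lies in the span, write $\vec{f}_\ell - \vec{f}_i = \sum_{t=1}^{d} c_t(\vec{f}_{j_t}-\vec{f}_i)$, so $\vec{f}_\ell = \left(1-\sum_{t=1}^d c_t\right)\vec{f}_i + \sum_{t=1}^d c_t \vec{f}_{j_t}$, whose coefficients sum to $1$; hence $\vec{f}_\ell\in\mathrm{SP}(\vec{f}_i,\vec{f}_{j_1},\dots,\vec{f}_{j_d})$. (One subtlety: if some $j_t$ coincides with $i$, that cannot happen since $\vec{f}_i - \vec{f}_i = \vec 0$ is not part of a basis; and if the $j_t$ are not distinct, we reduce $d$, contradiction—so distinctness is automatic.)

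Next I would show $d \le \wn$. By definition of $\widetilde{\dim}$, there exist indices $i_1,\dots,i_r$ with $r = \wn+1$ such that $\{\vec{f}_1,\dots,\vec{f}_m\}\subset \mathrm{SP}(\vec{f}_{i_1},\dots,\vec{f}_{i_r})$. For each $\ell\in[m]$ write $\vec{f}_\ell = \sum_{t=1}^r a^{(\ell)}_t \vec{f}_{i_t}$ with $\sum_t a^{(\ell)}_t = 1$. Then $\vec{f}_\ell - \vec{f}_i = \sum_{t=1}^r a^{(\ell)}_t(\vec{f}_{i_t} - \vec{f}_i)$ using $\sum_t a^{(\ell)}_t = 1$ again to absorb the $-\vec{f}_i$. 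Hence every $\vec{f}_\ell - \vec{f}_i$ lies in $\operatorname{Span}_{\mathbb{F}_q}\{\vec{f}_{i_1}-\vec{f}_i,\dots,\vec{f}_{i_r}-\vec{f}_i\}$, a space of dimension at most $r-1 = \wn$ (the $-\vec{f}_i$ term is $\vec 0$ when some $i_t = i$, and in general $r$ vectors that include the zero vector, or that are translates of $r$ points, span dimension $\le r-1$; more carefully, the differences $\vec{f}_{i_t}-\vec{f}_i$ for $t=1,\dots,r$ span the same space whether or not $i\in\{i_1,\dots,i_r\}$, and there are at most $r-1$ nonzero ones among a set naturally of affine rank $\le r-1$). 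Therefore $d\le \wn$.

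The combination of the two inequalities yields $\wn = d$, which is exactly the claimed identity. I do not expect a genuine obstacle here: the statement is essentially the standard linear-algebra fact that the affine hull's dimension equals the dimension of the difference space, repackaged in the paper's notation. The only points requiring mild care are (i) keeping track of the "coefficients sum to $1$" condition when translating between $\mathrm{SP}$ and $\operatorname{Span}$ in both directions, and (ii) handling the bookkeeping of whether the anchor index $i$ appears among the chosen spanning indices $i_1,\dots,i_r$ or $j_1,\dots,j_d$ — in each case one checks the bound still goes through, since adding or removing the zero difference vector $\vec{f}_i - \vec{f}_i$ changes neither the span nor its dimension.
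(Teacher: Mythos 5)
Your strategy coincides with the paper's: prove the two inequalities by converting between affine combinations (the ``coefficients sum to $1$'' condition in $\mathrm{SP}$) and linear combinations of differences. Your first direction, $\widetilde{\dim}_{\mathbb{F}_q}(\vec{f}_1,\dots,\vec{f}_m)\le d$, is correct and matches the paper essentially verbatim.

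In the second direction there is one loose spot. After writing $\vec{f}_\ell-\vec{f}_i=\sum_{t=1}^{r}a^{(\ell)}_t(\vec{f}_{i_t}-\vec{f}_i)$, you need $\dim_{\mathbb{F}_q}\operatorname{Span}_{\mathbb{F}_q}\{\vec{f}_{i_1}-\vec{f}_i,\dots,\vec{f}_{i_r}-\vec{f}_i\}\le r-1$, and your parenthetical justification leans on the claim that $r$ vectors obtained as ``translates of $r$ points'' span dimension at most $r-1$. That is false in general: translating a single point $p_1$ by $-v$ with $v\ne p_1$ already yields a one-dimensional span. The bound holds here only because the anchor $\vec{f}_i$ is itself one of the $\vec{f}_j$'s and hence lies in $\mathrm{SP}(\vec{f}_{i_1},\dots,\vec{f}_{i_r})$; you never invoke this. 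The fix is one line: write $\vec{f}_i=\sum_{t}b_t\vec{f}_{i_t}$ with $\sum_t b_t=1$, so that $\vec{f}_i-\vec{f}_{i_r}=\sum_t b_t(\vec{f}_{i_t}-\vec{f}_{i_r})$ and therefore every $\vec{f}_{i_t}-\vec{f}_i=(\vec{f}_{i_t}-\vec{f}_{i_r})-(\vec{f}_i-\vec{f}_{i_r})$ lies in $\operatorname{Span}_{\mathbb{F}_q}\{\vec{f}_{i_t}-\vec{f}_{i_r}:t\in[r-1]\}$, a space of dimension at most $r-1$. The paper arranges the argument to avoid this issue altogether: it first anchors at the spanning point $\vec{f}_{u_{\ell+1}}$, shows all differences $\vec{g}_t=\vec{f}_t-\vec{f}_{u_{\ell+1}}$ lie in the at most $\ell$-dimensional space $W=\operatorname{Span}_{\mathbb{F}_q}\{\vec{g}_{u_1},\dots,\vec{g}_{u_\ell}\}$, and only then re-anchors at $i$ via $\vec{f}_t-\vec{f}_i=\vec{g}_t-\vec{g}_i\in W$. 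So the gap is a minor bookkeeping omission rather than a flaw in the approach.
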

\begin{proof}
Given any $i\in[m]$, let $\vec{f}^\prime_j=\vec{f}_j-\vec{f}_i,j\in[m]$ and $V=\Span_{\mathbb{F}_q}\left\{\vec{f}^\prime_1,\dots,\vec{f}^\prime_m\right\}$, we prove the following two directions:
\paragraph{$\widetilde{\dim}_{\mathbb{F}_q}\left(\vec{f}_1,\dots,\vec{f}_m\right)\leq\dim_{\mathbb{F}_q}V$.} Let $\ell=\dim_{\mathbb{F}_q}V$ and $\left\{\vec{f}^\prime_{u_j}\right\}_{j\in[\ell]}$ a basis of $V$, then for any $t\in[m]$, it follows that $\vec{f}_t=\vec{f}^\prime_t+\vec{f}_i=\left(\sum_{j\in[\ell]}a_j\vec{f}^\prime_{u_j}\right)+\vec{f}_i$ where $a_j\in\mathbb{F}_q,j\in[\ell]$. We can rewrite it as $\vec{f}_t=\left(\sum_{j\in[\ell]}a_j\vec{f}_{u_j}\right)+\left(1-\sum_{j\in[\ell]}a_j\right)\vec{f}_i$. Note that $u_j\neq i$ for any $j\in[\ell]$, we know that $\vec{f}_t\in\mathrm{SP}\left(\vec{f}_{u_1},\dots,\vec{f}_{u_\ell},\vec{f}_i\right)$. Since it holds for any $t\in[m]$, there must be $\widetilde{\dim}_{\mathbb{F}_q}\left(\vec{f}_1,\dots,\vec{f}_m\right)\leq\ell$.
\paragraph{$\widetilde{\dim}_{\mathbb{F}_q}\left(\vec{f}_1,\dots,\vec{f}_m\right)\ge\dim_{\mathbb{F}_q}V$.} Let $\ell=\widetilde{\dim}_{\mathbb{F}_q}\left(\vec{f}_1,\dots,\vec{f}_m\right)$, then there exists $u_1,\dots,u_{\ell+1}\in[m]$ such that $\left\{\vec{f}_1,\vec{f}_2,\dots,\vec{f}_m\right\}\subseteq\mathrm{SP}\left(\vec{f}_{u_1},\dots,\vec{f}_{u_{\ell+1}}\right)$. We also define $\vec{g}_j=\vec{f}_j-\vec{f}_{u_{\ell+1}},j\in[m]$ and $W=\Span_{\mathbb{F}_q}\left\{\vec{g}_{u_1},\dots,\vec{g}_{u_{\ell}}\right\}$. For any $t\in[m]$, since we can write $\vec{f}_t=\left(\sum_{j\in[\ell]}a_j\vec{f}_{u_j}\right)+\left(1-\sum_{j\in[\ell]}a_j\right)\vec{f}_{u_{\ell+1}}$ where $a_j\in\mathbb{F}_q$ and $j\in[\ell]$, it follows that $\vec{g}_t=\sum_{j\in[\ell]}a_j\vec{g}_{u_j}$. This implies $\left\{\vec{g}_1,\dots,\vec{g}_m\right\}\subseteq W$. Therefore, for any $t\in[m]$, we have $\vec{f}^\prime_t=\vec{g}_t-\vec{g}_{i}\in W-W=W$, and we conclude $\dim_{\mathbb{F}_q}V\leq \dim_{\mathbb{F}_q} W\leq\ell$.
\end{proof}
Following the approach of Shangguan and Tamo \cite{shangguan2020combinatorial}, we define the ``weight'' of geometric agreement hypergraphs, sub-hypergraphs, and hyperedges.
\begin{definition}[Weight]
Given a geometric agreement hypergraph $(\mathcal{V},\mathcal{E})$ where $\mathcal{E}=\{e_1,\dots,e_n\subseteq \mathcal{V}\}$, we define the weight $\mathrm{wt}(\mathcal{V},\mathcal{E}):=\sum^n_{i=1}\mathrm{wt}(e_i)$, where $\mathrm{wt}(e_i):=\max\bigg(|e_i|-1,0\bigg)$.
\end{definition}
\subsubsection{Geometric polynomials} %based on folded RS and univariate multiplicity codes} 
%\paragraph{Folded RS codes case.} \ZZnote{TBA}
One central idea in our proof is to define the ``geometric polynomial'' of a set of vectors, and then to identify a sufficient number of its roots. This geometric polynomial is expressed in terms of the folded Wronskian, which we define below. 
\begin{definition}[Folded Wronskian, see \cite{guruswami2016explicit}] Let $f_1(X), \ldots, f_s(X) \in \mathbb{F}_q[X]$ and $\gamma \in \mathbb{F}_q^\times$. We define their $\gamma$-folded Wronskian $W_\gamma\left(f_1, \ldots, f_s\right)(X) \in \left(\mathbb{F}_q[X]\right)^{s\times s}$ by
$$
W_\gamma\left(f_1, \ldots, f_s\right)(X) \stackrel{\text { def }}{=}\left(\begin{array}{ccc}
f_1(X) & \ldots & f_s(X) \\
f_1(\gamma X) & \cdots & f_s(\gamma X) \\
\vdots & \ddots & \vdots \\
f_1\left(\gamma^{s-1} X\right) & \cdots & f_s\left(\gamma^{s-1} X\right)
\end{array}\right).
$$  
\end{definition}

It is well known that the nonsingularity of the folded Wronskian of a set of vectors characterizes the linear independence of these vectors, as stated below.
%We then illustrate the connection between the folded Wronskian of a set of vectors and their linear independence. This relationship highlights the geometric nature of our definition.
\begin{lemma}[Folded Wronskian criterion for linear independence, see \cite{guruswami2016explicit,guruswami2013linear,forbes2012identity}]\label{lem:wrons_ind} Let $k<q$ and $\vec{f_1}, \ldots, \vec{f_s} \in \mathbb{F}^k_q$. 
Let $\gamma$ be a generator of $\mathbb{F}_q^\times$.
Then $\vec{f_1}, \ldots, \vec{f_s}$ are linearly independent over $\mathbb{F}_q$ if and only if the folded Wronskian determinant $\operatorname{det} W_\gamma\left(f_1, \ldots, f_s\right)(X) \neq 0$.
\end{lemma}
We are now ready to define the following notion of the ``geometric polynomial'' for a set of vectors.
\begin{definition}[Geometric polynomial based on folded Wronskians]
    Given $L$ non-zero vectors $\vec{f}_1,\vec{f}_2,\dots,\vec{f}_L\in\mathbb{F}_q^k$ such that 
$\dim_{\mathbb{F}_q}\left(\mathrm{Span}_{\mathbb{F}_q}\left\{\vec{f}_1,\vec{f}_2,\dots,\vec{f}_L\right\}\right)=\ell\in[L]$. Then we define the geometric polynomial
$V_{\left\{\vec{f}_i\right\}_{i\in L}}(X)$ as the following monic polynomial
\[\lambda_{i_1,i_2,\dots,i_\ell}\cdot\det W_{\gamma}(f_{i_1},\dots,f_{i_\ell})(X),\]
where $\lambda_{i_1,i_2,\dots,i_\ell}\in\mathbb{F}_q^\times$ % $i_{j}\in[L]$ for all $j\in[\ell]$ is a re-scale to make sure the leading coefficient is $1$, 
and $\left\{f_{i_1},\dots,f_{i_\ell}\right\}$ forms a $\mathbb{F}_q$-basis of the space $\mathrm{Span}_{\mathbb{F}_q}\left\{\vec{f}_1,\vec{f}_2,\dots,\vec{f}_L\right\}$.
\end{definition}
We now prove that geometric polynomials are well-defined, based on the key observation that their definition is independent of the choice of the basis $\{f_{i_1},\dots,f_{i_\ell}\}$. 
\begin{lemma}\label{lem:geometric_poly}
  Geometric polynomials are well-defined.  
\end{lemma}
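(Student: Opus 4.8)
\textbf{Proof plan for Lemma~\ref{lem:geometric_poly}.} The plan is to show that if $\left\{\vec{f}_{i_1},\dots,\vec{f}_{i_\ell}\right\}$ and $\left\{\vec{f}_{j_1},\dots,\vec{f}_{j_\ell}\right\}$ are two $\mathbb{F}_q$-bases of $\mathrm{Span}_{\mathbb{F}_q}\left\{\vec{f}_1,\dots,\vec{f}_L\right\}$, then the two monic polynomials obtained by normalizing $\det W_\gamma(f_{i_1},\dots,f_{i_\ell})(X)$ and $\det W_\gamma(f_{j_1},\dots,f_{j_\ell})(X)$ coincide. First I would record that, by Lemma~\ref{lem:wrons_ind}, both determinants are nonzero polynomials, so each indeed has a well-defined monic normalization (there is a unique $\lambda\in\mathbb{F}_q^\times$ making it monic), and the definition is not vacuous.

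The core step is a change-of-basis computation on the folded Wronskian matrix. Since both $\ell$-tuples are bases of the same space, there is an invertible matrix $A=(a_{mt})\in\mathrm{GL}_\ell(\mathbb{F}_q)$ with $f_{j_t} = \sum_{m=1}^{\ell} a_{mt}\, f_{i_m}$ as polynomials. The key observation is that this linear relation is preserved under the substitutions $X\mapsto \gamma^r X$ that define the rows of the folded Wronskian: $f_{j_t}(\gamma^r X) = \sum_{m} a_{mt}\, f_{i_m}(\gamma^r X)$ for every $r\in\{0,1,\dots,\ell-1\}$. Reading this across all rows simultaneously gives the matrix identity $W_\gamma(f_{j_1},\dots,f_{j_\ell})(X) = W_\gamma(f_{i_1},\dots,f_{i_\ell})(X)\cdot A$. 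Taking determinants, $\det W_\gamma(f_{j_1},\dots,f_{j_\ell})(X) = \det(A)\cdot \det W_\gamma(f_{i_1},\dots,f_{i_\ell})(X)$, so the two Wronskian determinants differ only by the nonzero scalar $\det(A)\in\mathbb{F}_q^\times$. Consequently their monic normalizations are identical, which is exactly what well-definedness requires. I would also remark that reordering a single basis is the special case where $A$ is a permutation matrix, so the polynomial does not depend on the ordering of the chosen basis vectors either.

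The only genuine subtlety — and the step I would be most careful about — is making sure the argument is literally about polynomials in $\mathbb{F}_q[X]$ and not about evaluations: the identity $W_\gamma(f_{j_1},\dots)(X) = W_\gamma(f_{i_1},\dots)(X)\cdot A$ holds as an identity of matrices over the polynomial ring, because substituting $\gamma^r X$ into a polynomial is an $\mathbb{F}_q$-algebra endomorphism of $\mathbb{F}_q[X]$ and hence respects $\mathbb{F}_q$-linear combinations; everything then goes through formally. I do not expect any real obstacle here; the lemma is essentially a bookkeeping check that the folded Wronskian transforms like a Gram-type matrix under change of basis, together with the trivial fact that scaling a nonzero polynomial by a field element does not change its monic representative.
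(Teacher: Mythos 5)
Your proposal is correct and follows essentially the same route as the paper: both exhibit the invertible change-of-basis matrix $A$ giving $W_\gamma(f_{j_1},\dots,f_{j_\ell})(X)=W_\gamma(f_{i_1},\dots,f_{i_\ell})(X)\cdot A$, take determinants to get agreement up to the nonzero scalar $\det A$, and invoke \cref{lem:wrons_ind} to ensure the determinant is a nonzero polynomial so the monic normalization exists. Your extra remarks (the substitution $X\mapsto\gamma^r X$ being an algebra endomorphism, and the permutation-matrix special case) are just more explicit bookkeeping than the paper bothers with.
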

\begin{proof}
Let $V=\mathrm{Span}_{\mathbb{F}_q}\left\{\vec{f}_1,\vec{f}_2,\dots,\vec{f}_L\right\}$. By \Cref{lem:wrons_ind}, it suffices to prove that for any two basis $\{\vec{u}_i\}_{i\in[\ell]},\{\vec{u}'_i\}_{i\in[\ell]}$ of $V$, we have $\det W_{\gamma}(u_1,\dots,u_{\ell})=\lambda\det W_{\gamma}(u'_1,\dots,u'_{\ell})$ for some  $\lambda\in\mathbb{F}^\times_q$. We observe that there must exist a non-singular matrix $A\in\mathbb{F}^{\ell\times\ell}_q$ satisfying $W_{\gamma}(u_1,\dots,u_{\ell})=W_{\gamma}(u'_1,\dots,u'_{\ell})A$. Therefore, $\det W_{\gamma}(u_1,\dots,u_{\ell})=\det A\cdot\det W_{\gamma}(u'_1,\dots,u'_{\ell})$ where $\det A\in\mathbb{F}^\times_q$.
\end{proof}

%\paragraph{Univariate multiplicity codes case.} \ZZnote{TBA}
\subsubsection{Geometric agreement hypergraph provides zeros of a geometric polynomial with multiplicity} 
In this section, we need a relationship between geometric agreement hypergraphs and geometric polynomials. Specifically, this relationship  states that a geometric agreement hypergraph will provide many roots, counted with multiplicity, of the corresponding geometric polynomial, which is essentially implied by \cite[Theorem 14]{guruswami2016explicit}. The number of these roots is closely related to the affine dimensions of its hyperedges. A detailed version of this theorem is stated below. 
\begin{theorem}[{Alternatively stated in \cite[Theorem 14]{guruswami2016explicit}}]\label{thm:lower_bound_root}
Given $L$ distinct non-zero polynomials $f_1,\dots,f_L\in\mathbb{F}^k_q$ with degree at most $k-1$. Let $(\mathcal{V},\mathcal{E})$ be a geometric agreement hypergraph over $\mathcal{V}=\left\{0,\vec{f}_1,\dots,\vec{f}_L\right\}$ where $\mathcal{E}=\big\{e_1,\dots,e_n\subseteq \mathcal{V}\big\}$, it follows that $P(X)=V_{\{f_i\}_{i\in L}}(X)$ has at least $(s-\ell+1)\sum_{i=1}^n\widetilde{\dim}_{\mathbb{F}_q}\left(e_i\right)$ roots, counted with multiplicity, where $\ell=\dim\left(\mathrm{Span}_{\mathbb{F}_q}\left\{\vec{f}_1,\vec{f}_2,\dots,\vec{f}_L\right\}\right)$.
\end{theorem}
\begin{proof}
We want to prove that for any $e_i$, each of $\alpha_i,\dots,\gamma^{s-\ell}\alpha_i$ is a root of $P(X)$ with multiplicity at least $\widetilde{\dim}_{\mathbb{F}_q}\left(e_i\right)$. If it holds for each of $e_i,i\in[n]$, since the sequence  $(\alpha_1,\dots,\alpha_n)$ is \emph{appropriate}, we get the above lower bound. Now let's focus on a fixed $e_i$.

If $\widetilde{\dim}_{\mathbb{F}_q}\left(e_i\right)=t$, then there must exist $\vec{g}_0,\dots,\vec{g}_{t}\in e_i$ such that $\left\{\vec{h}_1=\vec{g}_1-\vec{g}_0,\dots,\vec{h}_t=\vec{g}_t-\vec{g}_0\right\}$ are linear independent. Let $V=\Span_{\mathbb{F}_q}\left\{\vec{f}_1,\dots,\vec{f}_L\right\}$, we can arbitrarily extend $\left\{\vec{h}_u\right\}_{u\in [t]}$ to a basis $\left\{\vec{h}_u\right\}_{u\in[\ell]}$ of $V$.

By \Cref{lem:geometric_poly}, $\det W_{\gamma}(h_1,\dots,h_{\ell})(X)=\lambda V_{\left\{\vec{f}_1,\dots,\vec{f}_L\right\}}(X)$ for some $\lambda\in\mathbb{F}^\times_q$. Therefore it suffices to show that $\gamma^j\alpha_i$ is a root of $\det W_{\gamma}(h_1,\dots,h_{\ell})(X)$ with multiplicity at least $t$ for any $0\leq j\leq s-\ell$. By \Cref{def:agr_graph}, for any $v\in[t]$ and $0\leq j\leq s-1$, we have $h_v(\gamma^j\alpha_i)=g_v(\gamma^j\alpha_i)-g_0(\gamma^j\alpha_i)=0$. It implies for any $v\in[t],j\in\{0\}\cup[s-\ell],u\in[\ell]$, $\gamma^j\alpha_i$ is a root of $h_v(\gamma^{u-1}X)$. Since the $(u,v)$-entry of $ W_{\gamma}(h_1,\dots,h_{\ell})(X)$ is $h_v(\gamma^{u-1}X)$, we know that $(X-\gamma^j\alpha_i)$ is a factor for each entry in the first $t$ columns of $W_{\gamma}(h_1,\dots,h_{\ell})(X)$. Consider computing $\det W_{\gamma}(h_1,\dots,h_{\ell})(X)$ by expanding the matrix along the first $t$ columns, we can see that $\gamma^j\alpha_i$ is a root of $\det W_{\gamma}(h_1,\dots,h_{\ell})(X)$ with multiplicity at least $t$ for any $0\leq j\leq s-\ell$.
\end{proof}
\begin{remark}
%   We note that the above fact was essentially implied by  \cite[Theorem 14 (Claim 16)]{guruswami2016explicit}, though it was presented in a different language and for a different purpose.
We remark that \cref{thm:lower_bound_root} was essentially implied by {\cite[Theorem 14]{guruswami2016explicit}}, although it was originally presented in a different context, specifically for constructing a pseudorandom object called ``subspace designs'' \cite{guruswami2013list, guruswami2016explicit}.  In \cref{append:generalize}, we extend our results by explicitly connecting the notion of subspace designs with codes that achieve list-decoding capacity. See \cref{append:generalize} for more details.
%We refer to \cref{append:generalize} for more details. 
%by establishing a connection with the concept of ``subspace design'' introduced in \cite{guruswami2013list}, our proof framework demonstrates an even more general result concerning explicit codes that achieve list-decoding capacity. We refer to \cref{append:generalize} for more details. 
\end{remark}
\subsection{The $\mathrm{Loss}$ function and its upper bound}\label{sec:loss}
From \Cref{thm:lower_bound_root}, we know that a hyperedge $e$ in the geometric agreement hypergraph can provide $(s-\ell+1)\widetilde{\dim}_{\mathbb{F}_q}\left(e\right)$ roots of the geometric polynomial, counted with multiplicity. Our goal is to identify a sufficient number of roots---more than the presumed degree---of a low degree (non-zero) geometric polynomial, which will be instrumental in our final proof by contradiction (see \cref{lem:non_distinct}). To achieve this, we analyze the polynomials in the geometric agreement hypergraph from a geometric perspective. This section represents the novel and crucial part that distinguishes our approach from previous work, allowing us to achieve an optimal list size.

Ideally, we would have $\widetilde{\dim}_{\mathbb{F}_q}\left(e\right)=|e|-1$, which occurs when then vectors in $e$ span an affine subspace of maximal dimension. However, in degenerated cases, we may have $\widetilde{\dim}_{\mathbb{F}_q}\left(e\right)<|e|-1$, resulting in a loss of $(s-\ell+1)\left(|e|-1-\widetilde{\dim}_{\mathbb{F}_q}\left(e\right)\right)$ roots of the geometric polynomial compared with the ideal case. This motivates us to define a loss function below. By establishing an upper bound on the loss function, we can show that the number of roots lost remains limited.
\begin{definition}[Loss function]
We define the loss function $\mathrm{Loss}: \mathcal{E}\to\mathbb{N}$ that sends a hyperedge $e\in\mathcal{E}$ to
    \[\mathrm{Loss}(e):=\max\left(0,|e|-1-\widetilde{\dim}_{\mathbb{F}_q}\left(e\right)\right).\]
\end{definition}
\begin{remark}
Since $\widetilde{\dim}_{\mathbb{F}_q}\left(e\right)$ is related only to the geometric positions of vectors in $e$, the loss function $\mathrm{Loss}(e)$ encodes purely ``geometric'' information about $e$.
%is also a geometric characterization of $e$. 
\end{remark}
\paragraph{Upper bound on the loss function.} The following upper bound on the loss function plays an crucial role in our proof.
\begin{theorem}\label{thm:bound_loss}
Let $\left\{\vec{f}_i\right\}_{i\in[L]}$ be a set of distinct non-zero vectors in $\mathbb{F}_q^k$ and vertices $\mathcal{V}:=\left\{0,\vec{f}_1,\vec{f}_2,\dots,\vec{f}_L\right\}$. Let $\dim_{\mathbb{F}_q}\left(\mathrm{Span}_{\mathbb{F}_q}\left\{\vec{f}_1,\vec{f}_2,\dots,\vec{f}_L\right\}\right)=\ell\in[L]$. Consider a geometric agreement  hypergraph $(\mathcal{V},\mathcal{E})$ with $n$ hyperedges $\mathcal{E}=\{e_1,e_2,\dots,e_n\subseteq\mathcal{V}\}$ such that for any proper subset $\mathcal{H}\subsetneq \mathcal{V}$ with $|\mathcal{H}|\ge 2$, we have $\mathrm{wt}(\mathcal{H},\mathcal{E}|_{\mathcal{H}})<\frac{(|\mathcal{H}|-1)k}{s-|\mathcal{H}|+2}$.
Then, we have the following upper bound on the loss function:
\[\sum_{i\in[n]}\mathrm{Loss}(e_i)\leq\frac{(L-\ell)k}{s-L+1}.\]    
\end{theorem}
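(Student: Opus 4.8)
The plan is to prove \Cref{thm:bound_loss} by strong induction on $|\mathcal{V}| = L+1$ (equivalently on $L$), exploiting the recursive structure exposed by the weight hypothesis ``$\mathrm{wt}(\mathcal{H},\mathcal{E}|_\mathcal{H}) < \frac{(|\mathcal{H}|-1)k}{s-|\mathcal{H}|+2}$ for all proper $\mathcal{H}$''. The base case is when $L = \ell$, i.e.\ the vectors $\vec f_1,\dots,\vec f_L$ together with $0$ span a space of dimension exactly $L$. In that case I expect $\mathrm{Loss}(e_i) = 0$ for every hyperedge: any subset $e_i \subseteq \mathcal{V}$ of size $|e_i|$ is a subset of an affinely independent set (since $\{0,\vec f_1,\dots,\vec f_L\}$ is affinely independent when $\dim = L$), so $\widetilde{\dim}_{\mathbb F_q}(e_i) = |e_i|-1$ and the loss vanishes; the right-hand side $\frac{(L-\ell)k}{s-L+1} = 0$ matches. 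For the inductive step I would like to split off either a single vertex or a proper subhypergraph and apply the bound on the remaining piece.

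The main structural idea for the inductive step: since $\ell < L$, there is some affine dependence among $\{0,\vec f_1,\dots,\vec f_L\}$, so some vertex $\vec f_j$ lies in the affine span of the others, or more usefully, we can find a minimal affinely dependent subset. I would remove one carefully-chosen vertex $v$ from $\mathcal{V}$ to form $\mathcal{V}' := \mathcal{V}\setminus\{v\}$ and analyze how $\sum_i \mathrm{Loss}(e_i)$ changes. Deleting $v$ removes $v$ from each hyperedge; this can only decrease $|e_i|$ by at most $1$ and can decrease $\widetilde{\dim}$ by at most $1$, so $\mathrm{Loss}(e_i|_{\mathcal{V}'}) \ge \mathrm{Loss}(e_i) - \mathbf{1}[v \in e_i]$, or more precisely one needs a clean accounting: $\mathrm{Loss}(e_i) - \mathrm{Loss}(e_i|_{\mathcal V'})$ is $0$ unless $v\in e_i$, and when $v\in e_i$ it is at most $1$, and it is exactly $1$ only when $v$ was ``affinely independent'' inside $e_i$ (i.e.\ removing $v$ strictly drops the affine dimension). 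So $\sum_i \mathrm{Loss}(e_i) \le \sum_i \mathrm{Loss}(e_i|_{\mathcal V'}) + (\text{number of hyperedges } e_i \text{ in which } v \text{ is affinely independent})$. The induction hypothesis applied to $(\mathcal V', \mathcal E|_{\mathcal V'})$ — whose weight hypothesis is inherited since every proper subset of $\mathcal V'$ is a proper subset of $\mathcal V$ — gives $\sum_i \mathrm{Loss}(e_i|_{\mathcal V'}) \le \frac{(L - 1 - \ell')k}{s - L + 2}$ where $\ell' = \dim\mathrm{Span}(\mathcal V' \setminus \{0\})$. Then I need to bound the ``correction term'' (number of hyperedges where $v$ contributes to affine dimension) using the weight hypothesis applied to a small set $\mathcal H$ containing $v$, and combine the two estimates to get $\le \frac{(L-\ell)k}{s-L+1}$.

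The key quantitative lemma I expect to need is: if $v$ is affinely independent inside $e_i$ (equivalently $v \notin \mathrm{SP}(e_i \setminus \{v\})$), then this hyperedge contributes $1$ to $\mathrm{wt}$ of some small set, and summing, the number of such hyperedges is controlled by $\mathrm{wt}$ of a two-element-or-small set times $(s - \text{something})/k$. Concretely, to make the weight hypothesis bite, I would look at a pair $\mathcal H = \{0, v\}$ (or $\{v, w\}$ for an appropriate $w$): for such a pair, $\mathrm{wt}(\mathcal H, \mathcal E|_{\mathcal H}) = \#\{i : \{0,v\}\subseteq e_i\}$, which by hypothesis is $< \frac{k}{s}$. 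But a cleverer choice: take $v$ to be a vertex that is affinely \emph{dependent} on the others in $\mathcal V$ (exists since $\ell<L$); then $v \in \mathrm{SP}(\mathcal V\setminus\{v\})$, so $v$ being affinely independent in $e_i$ forces $e_i$ to miss enough of $\mathcal V\setminus\{v\}$, which I would turn into: the set of such hyperedges, together with $v$'s neighbours, forms a subhypergraph of weight that the hypothesis bounds. \textbf{The hard part} will be choosing $v$ (and the auxiliary set $\mathcal H$) so that: (a) $\ell' = \ell$ (removing an affinely-dependent vertex keeps the span dimension, provided $v$ is in the linear — not just affine — span; need care since ``affine dependence among $\{0,\dots\}$'' is linear dependence among $\vec f_j$'s only when $0$ is in the set, which it is), so the induction target on $\mathcal V'$ is $\frac{(L-1-\ell)k}{s-L+2}$; and (b) the correction term is at most $\frac{(L-\ell)k}{s-L+1} - \frac{(L-1-\ell)k}{s-L+2}$, a positive quantity roughly $\frac{(L-\ell)k}{s-L+1}\cdot\frac{1}{s-L+2}\cdot(\ldots)$ — bounding the number of "bad" hyperedges by this requires applying the weight hypothesis to a set $\mathcal H$ of size exactly $L-\ell+1$ or so, chosen to witness a minimal affine dependence, and checking the arithmetic $\frac{(|\mathcal H|-1)k}{s-|\mathcal H|+2}$ lands below the needed threshold. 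Managing this case analysis — especially when $\ell$ drops upon deletion versus when it does not, and ensuring the chosen $\mathcal H$ is a \emph{proper} subset (so the hypothesis applies) — is where the real work lies; the rest is bookkeeping with \Cref{fact:transform_tildedim} to translate between affine dimension and linear span dimension.
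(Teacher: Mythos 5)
Your overall strategy (strong induction on $L$ by deleting one vertex at a time) is genuinely different from the paper's, which is non-inductive at the level of the loss bound: the paper invokes a discrete-geometric partition lemma (\Cref{lem:dg}) to split $\mathcal{V}$ into $\ell+1$ classes $\mathcal{H}_1,\dots,\mathcal{H}_{\ell+1}$ such that any transversal is affinely independent, charges the loss of \emph{every} hyperedge simultaneously to within-class collisions via $\mathrm{Loss}(e)\le\sum_j\max(0,|e|_{\mathcal{H}_j}|-1)$, and then sums the weight hypothesis over the $\ell+1$ classes, where the telescoping $\sum_j(|\mathcal{H}_j|-1)=L-\ell$ produces the bound exactly. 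Your base case ($L=\ell$ forces zero loss) and the inheritance of the weight hypothesis under vertex deletion are both correct.

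However, the inductive step has two genuine problems. First, the direction of your correction term is reversed: writing $\mathrm{Loss}(e)=\max(0,|e|-1-\widetilde{\dim}_{\mathbb{F}_q}(e))$, if deleting $v\in e$ \emph{strictly drops} the affine dimension then both $|e|-1$ and $\widetilde{\dim}_{\mathbb{F}_q}(e)$ decrease by one and the loss is \emph{unchanged}; the difference equals $1$ precisely in the opposite case, when $v$ is affinely \emph{dependent} on $e\setminus\{v\}$ (and the loss was positive). Second, and more seriously, the resulting correction term $\#\{i: v\in e_i,\ v\in\mathrm{SP}(e_i\setminus\{v\})\}$ cannot be controlled by applying the weight hypothesis to a single fixed small set $\mathcal{H}\ni v$: the subset of $e_i\setminus\{v\}$ witnessing the affine dependence varies from hyperedge to hyperedge (e.g.\ $v$ may lie on the line through $w_1,w_2$ and also on the line through $w_3,w_4$, so hyperedges $\{v,w_1,w_2\}$ and $\{v,w_3,w_4\}$ both contribute while sharing only $v$). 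Covering all witnesses forces you to sum the hypothesis over several sets, each contributing on the order of $k/s$, whereas the entire per-step budget is $\frac{(L-\ell)k}{s-L+1}-\frac{(L-1-\ell)k}{s-L+2}=\frac{(s-\ell+1)k}{(s-L+1)(s-L+2)}\approx k/s$ — room for only one such application. This is exactly the accounting problem that the paper's global partition is designed to solve, and your proposal does not supply a substitute for it; you flag this step as ``the hard part'' but the argument as sketched does not close.
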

\begin{remark}
    Before proving the above theorem, we first provide a warm-up example below, which may provide some useful insights to our final proof. Intuitively, when all hyperedges contain the vertex $0$, it is straightforward to identify a subset (denoted as $\mathcal{H}$ below) consisting of $L+1-\ell$ vertices, such that the total sum of losses is bounded by $\mathrm{wt}\left(\mathcal{H}, \mathcal{E}|_{\mathcal{H}}\right)$. Consequently, by applying the upper bound assumption on $\mathrm{wt}\left(\mathcal{H}, \mathcal{E}|_{\mathcal{H}}\right)$, we obtain the desired bound.
\end{remark}
\iffalse
\begin{figure}[h]
\centering{\begin{tikzpicture}
    % Define colors
    \definecolor{myred}{RGB}{255,0,0}
    
    % Place the points and vectors
    \node at (-3,0.3) {$\vec{f}_1$};
    \filldraw (-3,-0.2) circle (2pt);
    
    \node at (-2,0.3) {$\vec{f}_2$};
    \filldraw (-2,-0.2) circle (2pt);
    
    \node at (-1,-0.2) {$\dots$};
    
    \node at (0,0.3) {$\vec{f}_\ell$};
    \filldraw (0,-0.2) circle (2pt);
    
    \node at (1,0.3) {\textcolor{myred}{$\vec{f}_{\ell+1}$}};
    \filldraw[myred] (1,-0.2) circle (2pt);
    
    \node at (2,-0.2) {\textcolor{myred}{$\dots$}};
    
    \node at (3,0.3) {\textcolor{myred}{$\vec{f}_L$}};
    \filldraw[myred] (3,-0.2) circle (2pt);
    
    % Add the red zero point above
    \node at (0,2.15) {\textcolor{myred}{0}};
    \filldraw[myred] (0,1.8) circle (2pt);

\end{tikzpicture}}
\caption{TBD.} 
    \label{fig:warm-up}
\end{figure}
\fi
\begin{proof}[{ \bf A warm-up case: when all the hyperedges contain $\{0\}$}] As a warm-up, let's assume for any hyperedge $e_i$ where $i\in[n]$, we have $0\in e_i$. We show how to prove \Cref{thm:bound_loss} under this restriction.

Without loss of generality, we can assume $\vec{f}_1,\dots,\vec{f}_{\ell}$ are linearly independent. Let $\mathcal{T}=\left\{\vec{f}_1,\dots,\vec{f}_{\ell}\right\}$ and $\mathcal{H}=\{0,\vec{f}_1,\dots,\vec{f}_{L}\}\backslash\mathcal{T}$. Fixed any $i\in[n]$, we classify the elements in $e_i$ as $e_i=\left\{\vec{f}_{u_1},\dots,\vec{f}_{u_{a}}\right\}\cup\left\{\vec{f}_{v_1},\dots,\vec{f}_{v_{b}}\right\}$ where $u_1,\dots,u_a\in\mathcal{T}$ and $v_1,\dots,v_b\in\mathcal{H}$. From \Cref{fact:transform_tildedim}, since $0\in e_i$, we know that
\[\widetilde{\dim}_{\mathbb{F}_q}\left(e_i\right)=\dim_{\mathbb{F}_q}\operatorname{Span}_{\mathbb{F}_q}\left\{\vec{f}_{u_1},\dots,\vec{f}_{u_a},\vec{f}_{v_1},\dots,\vec{f}_{v_b}\right\}\ge a=\Big|{e_i}|_{\mathcal{T}}\Big|.\]
Therefore, we have
\[\begin{split}
    \sum_{i\in[n]}\mathrm{Loss}(e_i)&\leq \sum_{i\in[n]}\max\left(0,|e_i|-1-\Big|{e_i}|_{\mathcal{T}}\Big|\right)\\
    &=\sum_{i\in[n]}\max\left(0,\Big|{e_i}|_{\mathcal{H}}\Big|-1\right)\leq \mathrm{wt}\left(\mathcal{H},\mathcal{E}|_{\mathcal{H}}\right)\\
&\leq \frac{\left(|\mathcal{H}|-1
\right)k}{s-(|\mathcal{H}|-1)+1}\leq\frac{(L-\ell)k}{s-L+1}.\qedhere
\end{split}\]
\end{proof}
\paragraph{A linear-algebraic lemma.} To extend the above warm-up case, we will introduce a linear-algebraic lemma and prove \Cref{thm:bound_loss} in full generality.
Before that, we provide a necessary definition below, which is a standard notion in matroid theory.
\begin{definition}[Flat]
Given $m$ vectors $\mathcal{F}=\left\{\vec{f}_1,\dots,\vec{f}_m\right\}\in\mathbb{F}^k_q$, a \emph{flat} of $\mathcal{F}$ with dimension $\ell$ is a proper subset $\mathcal{H}\subsetneq\mathcal{F}$ with $\widetilde{\dim}_{\mathbb{F}_q}\left(\mathcal{H}\right)=\ell$ such that  $\widetilde{\dim}_{\mathbb{F}_q}\left(\left\{\vec{f}\right\}\cup\mathcal{H}\right)>\widetilde{\dim}_{\mathbb{F}_q}\left(\mathcal{H}\right)$ for any $\vec{f}\in\mathcal{F}\backslash\mathcal{H}$.
\end{definition}
Now we are ready to introduce a simple but very important linear-algebraic lemma below, which will be crucially applied into the proof of \cref{thm:bound_loss}.
\begin{lemma}\label{lem:dg}
      Given $\vec{f}_1,\vec{f}_2,\dots,\vec{f}_L\in\F_q^k$ such that $\vec{f}_i\neq \vec{f}_j \neq 0$ for any $i\neq j\in[L].$ Let the dimension of $\mathrm{Span}_{\mathbb{F}_q}\left\{\vec{f}_1,\vec{f}_2,\dots,\vec{f}_L\right\}$ equals to $\ell\in[L],$ then there exists a partition $\{\mathcal{H}_i\}_{i\in[\ell+1]}$ of $\left\{0,\vec{f}_1,\vec{f}_2,\dots,\vec{f}_L\right\}$ and for any ${\left\{\vec{h}_i\in \mathcal{H}_i\right\}_{i\in[\ell+1]}}$ we have 
    \[\widetilde{\dim}_{\mathbb{F}_q}\left(\vec{h}_1,\vec{h}_2,\dots,\vec{h}_{\ell+1}\right)=\ell.\]
\end{lemma}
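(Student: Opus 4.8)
The plan is to realize the partition as the sequence of ``rank levels'' of a maximal chain of flats of the affine matroid on $\mathcal{F}:=\{0,\vec f_1,\vec f_2,\dots,\vec f_L\}$, and then to check that \emph{any} transversal of such a partition is an affine basis. First I would record two easy facts. Since $0\in\mathcal{F}$, \Cref{fact:transform_tildedim} gives $\widetilde{\dim}_{\mathbb{F}_q}(\mathcal{F})=\dim_{\mathbb{F}_q}\operatorname{Span}_{\mathbb{F}_q}\{\vec f_1,\dots,\vec f_L\}=\ell$; and since $0,\vec f_1,\dots,\vec f_L$ are pairwise distinct, $\widetilde{\dim}_{\mathbb{F}_q}(\{\vec a,\vec b\})=1$ for any two of them, so in particular every singleton $\{\vec g\}\subsetneq\mathcal{F}$ is a flat of (affine) dimension $0$. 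Throughout I would use the closure operator $\overline S$ (the smallest flat of $\mathcal{F}$ containing $S$), together with the routine matroid facts that $\widetilde{\dim}_{\mathbb{F}_q}(\overline S)=\widetilde{\dim}_{\mathbb{F}_q}(S)$, that adjoining one vector raises $\widetilde{\dim}_{\mathbb{F}_q}$ by at most one, that $\vec g\notin\overline S$ forces $\widetilde{\dim}_{\mathbb{F}_q}(S\cup\{\vec g\})=\widetilde{\dim}_{\mathbb{F}_q}(S)+1$, and that a flat strictly containing another flat has strictly larger dimension.

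Next I would build flats $A_1\subsetneq A_2\subsetneq\cdots\subsetneq A_{\ell+1}=\mathcal{F}$ with $\widetilde{\dim}_{\mathbb{F}_q}(A_i)=i-1$ greedily: set $A_1:=\{0\}$, and given a flat $A_i$ with $\widetilde{\dim}_{\mathbb{F}_q}(A_i)=i-1<\ell$, pick any $\vec g\in\mathcal{F}\setminus A_i$ (possible since $\widetilde{\dim}_{\mathbb{F}_q}(A_i)<\widetilde{\dim}_{\mathbb{F}_q}(\mathcal{F})$) and put $A_{i+1}:=\overline{A_i\cup\{\vec g\}}$; as $A_i$ is a flat and $\vec g\notin A_i$ we get $\widetilde{\dim}_{\mathbb{F}_q}(A_i\cup\{\vec g\})=i$, hence $A_{i+1}$ is a flat with $\widetilde{\dim}_{\mathbb{F}_q}(A_{i+1})=i$ and $A_i\subsetneq A_{i+1}$. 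Once the dimension reaches $\ell$ the flat must be all of $\mathcal{F}$, since any vector outside a dimension-$\ell$ flat would push the dimension beyond $\widetilde{\dim}_{\mathbb{F}_q}(\mathcal{F})=\ell$. I then define $\mathcal{H}_1:=A_1$ and $\mathcal{H}_i:=A_i\setminus A_{i-1}$ for $2\le i\le\ell+1$; these are nonempty (the inclusions are strict), pairwise disjoint, and their union telescopes to $A_{\ell+1}=\mathcal{F}$, so $\{\mathcal{H}_i\}_{i\in[\ell+1]}$ is a partition of $\mathcal{F}$ into exactly $\ell+1$ classes.

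Finally I would verify the defining property of the lemma. Fix any transversal $\vec h_1\in\mathcal{H}_1,\dots,\vec h_{\ell+1}\in\mathcal{H}_{\ell+1}$ and prove by induction on $i$ that $\overline{\{\vec h_1,\dots,\vec h_i\}}=A_i$, hence $\widetilde{\dim}_{\mathbb{F}_q}(\vec h_1,\dots,\vec h_i)=i-1$. The base case $i=1$ holds because $\mathcal{H}_1=A_1$ is a singleton (a flat). For the step, assume $\overline{\{\vec h_1,\dots,\vec h_{i-1}\}}=A_{i-1}$; since $\vec h_i\in\mathcal{H}_i=A_i\setminus A_{i-1}$ lies outside the flat $A_{i-1}$, adjoining $\vec h_i$ raises the affine dimension by exactly one, so $\widetilde{\dim}_{\mathbb{F}_q}(\vec h_1,\dots,\vec h_i)=i-1$; moreover all of $\vec h_1,\dots,\vec h_i$ lie in $A_i$, so $\overline{\{\vec h_1,\dots,\vec h_i\}}\subseteq A_i$, and equality follows as both are flats of the same dimension $i-1$. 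Setting $i=\ell+1$ yields $\widetilde{\dim}_{\mathbb{F}_q}(\vec h_1,\dots,\vec h_{\ell+1})=\ell$, as required.

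I expect the only delicate points to be (i) stating the closure/flat facts cleanly enough to justify the greedy chain and the ``a point outside a flat strictly increases the affine dimension'' step — these are standard matroid facts but should be phrased in the paper's $\widetilde{\dim}_{\mathbb{F}_q}$-language — and (ii) the bookkeeping that the chain has exactly $\ell+1$ links so that the partition has exactly $\ell+1$ parts. The transversal verification itself is then a short induction, so I do not anticipate a genuine obstacle.
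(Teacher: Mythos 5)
Your proof is correct and is essentially the paper's argument in explicit form: the paper proves the lemma by induction on $\ell$, at each step splitting off the complement of a flat containing $0$ (namely the points lying in $\operatorname{Span}_{\mathbb{F}_q}\{\vec f_1,\dots,\vec f_{\ell-1}\}$), which when unrolled is exactly your maximal chain of flats $A_1\subsetneq\cdots\subsetneq A_{\ell+1}$ with parts $\mathcal{H}_i=A_i\setminus A_{i-1}$. Both verifications rest on the same fact — a point outside a flat raises the affine dimension by exactly one — so no further changes are needed.
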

\begin{proof}
    We prove this theorem by induction on $\ell\ge1.$ 
    
    When $\ell=1$, we can consider the partition $\mathcal{H}_1=\{0\},\mathcal{H}_2=\left\{\vec{f}_1,\dots,\vec{f}_L\right\}$. This partition satisfies the desired condition.

    When $\ell>1$, without loss of generality we can assume $\vec{f}_1,\dots,\vec{f}_{\ell}$ are linear independent over $\mathbb{F}_q$. Let $V=\mathrm{Span}_{\mathbb{F}_q}\left\{\vec{f}_1,\dots,\vec{f}_{\ell-1}\right\}$ and $\mathcal{F}=\{0\}\cup\left\{\vec{f}_i:\vec{f}_i\in V\text{ for }i\in[L]\right\}$, by induction there exists a partition $\{\mathcal{H}_i\}_{i\in[\ell]}$ of $\mathcal{F}$ such that for any $\left\{\vec{h}_i\in\mathcal{H}_i\right\}_{i\in[\ell]}$, we have $\widetilde{\dim}_{\mathbb{F}_q}\left(\vec{h}_1,\dots,\vec{h}_{\ell}\right)=\ell-1$. Define $\mathcal{V}:=\left\{0,\vec{f}_1,\dots,\vec{f}_L\right\}$ and $\mathcal{H}_{\ell+1}:=\mathcal{V}\backslash\mathcal{F}$. Since $\vec{f}_\ell\in\mathcal{H}_{\ell+1}$, $\mathcal{H}_{\ell+1}$ must be a non-empty set, so $\{\mathcal{H}_i\}_{i\in[\ell+1]}$ is a partition of $\mathcal{V}$. Moreover, given any choices of $\left\{\vec{h}_i\in\mathcal{H}_i\right\}_{i\in[\ell+1]}$, it follows that $\ell-1=\widetilde{\dim}_{\mathbb{F}_q}\left(\vec{h}_1,\dots,\vec{h}_{\ell}\right)<\widetilde{\dim}_{\mathbb{F}_q}\left(\vec{h}_1,\dots,\vec{h}_{\ell+1}\right)\leq \widetilde{\dim}_{\mathbb{F}_q}\left(\vec{h}_1,\dots,\vec{h}_{\ell}\right)+1$ since $\mathcal{F}$ is a flat of $\mathcal{V}$ and $\vec{h}_{\ell+1}\notin \mathcal{F}$. Therefore, $\widetilde{\dim}_{\mathbb{F}_q}\left(\vec{h}_1,\dots,\vec{h}_{\ell+1}\right)=\ell$, we get a desired partition.
\end{proof}
Based on the aforementioned lemma, we provide the complete proof of our main theorem in this section (\cref{thm:bound_loss}) below.
\begin{proof}[{\bf Proof of \cref{thm:bound_loss}}]
     By \cref{lem:dg}, we know that there exists a partition $\{\mathcal{H}_i\}_{i\in[\ell+1]}$ of $\left\{0,\vec{f}_1,\vec{f}_2,\dots,\vec{f}_L\right\}$ and for any ${\left\{\vec{h}_i\in \mathcal{H}_i\right\}_{i\in[\ell+1]}}$ we have 
    $\widetilde{\dim}_{\mathbb{F}_q}\left(\vec{h}_1,\vec{h}_2,\dots,\vec{h}_{\ell+1}\right)=\ell$. Then, for any $r\in[\ell+1]$ and $\{i_1,i_2,\dots,i_r\}\subseteq[\ell+1]$, since $\dim_{\mathbb{F}_q}\left(\mathrm{Span}_{\mathbb{F}_q}\left\{\vec{f}_1,\vec{f}_2,\dots,\vec{f}_L\right\}\right)=\ell$, we have $\widetilde{\dim}_{\mathbb{F}_q}\left(\vec{h}_{i_1},\vec{h}_{i_2},\dots,\vec{h}_{i_{r}}\right)=r-1.$ Then, for any hyperedge $e\subseteq\mathcal{V}$, we have 
    \[\widetilde{\dim}_{\mathbb{F}_q}(e)\ge \left(\sum_{i\in[\ell+1]}\chi_{e}(\mathcal{H}_i)\right)-1,\]
    where $\chi_{e}(\mathcal{H}):=\left\{
\begin{aligned}
1 &  & \text{if }e\cap \mathcal{H}\neq\emptyset \\
0 &  & \text{otherwise}
\end{aligned}
\right.$ for any $\mathcal{H}\subseteq\mathcal{V}$,
which implies 
\[\mathrm{Loss}(e)\leq |e|-1-\widetilde{\dim}_{\mathbb{F}_q}(e)\leq |e|-\sum_{i\in[\ell+1]}\chi_{e}(\mathcal{H}_i)\leq \sum_{i\in[\ell+1]}\max\bigg(0,\Big|{e_i}|_{\mathcal{H}_i}\Big|-1\bigg).\]
Therefore, we have
\[\begin{split}
    \sum_{i\in[n]}\mathrm{Loss}(e_i)&\leq \sum_{i\in[n]}\sum_{j\in[\ell+1]}\max\bigg(0,\Big|{e_i}|_{\mathcal{H}_j}\Big|-1\bigg)\\
    &=\sum_{j\in[\ell+1]}\mathrm{wt}\left(\mathcal{H}_j,\mathcal{E}|_{\mathcal{H}_j}\right)\leq \sum_{j\in[\ell+1]}\frac{\left(|\mathcal{H}_j|-1\right)k}{s-|\mathcal{H}_j|+2}\\
&\leq \frac{\bigg(\sum_{j\in[\ell+1]}{\big(|\mathcal{H}_j|-1\big)}\bigg)k}{s-L+1}=\frac{(L-\ell)k}{s-L+1},
\end{split}\]
which completes the proof of \cref{thm:bound_loss}.
\end{proof}
%\paragraph{Bound the $\mathrm{Loss}$ function via our discrete-geometric theorem.}
\subsection{Putting it together}\label{sec:puttogether}
 By combining all the aforementioned results, we provide a technical lemma below, which will serve as a cornerstone in the proof of \cref{thm:main}.
\begin{lemma}\label{lem:non_distinct}
Consider any $m\ge 2$ vectors $\vec{f}_1,\dots,\vec{f}_m\in\mathbb{F}^k_q$ and a geometric agreement hypergraph $(\mathcal{V},\mathcal{E})$ over $\mathcal{V}=\left\{\vec{f}_1,\dots,\vec{f}_m\right\}$. If $\mathrm{wt}(\mathcal{V},\mathcal{E})\ge \frac{(m-1)k}{s-m+2}$, then $\vec{f}_1,\dots,\vec{f}_m$ cannot be distinct.
\end{lemma}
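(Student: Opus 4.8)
The plan is to argue by contradiction through a minimal counterexample. Suppose the lemma fails, and among all geometric agreement hypergraphs $(\mathcal{V},\mathcal{E})$ on $m\ge 2$ \emph{pairwise distinct} vectors with $\mathrm{wt}(\mathcal{V},\mathcal{E})\ge\frac{(m-1)k}{s-m+2}$, pick one with $m$ minimal; we may assume $m\le s+1$, so that $s-m+2\ge 1$ (this is the only regime in which the stated inequality is a meaningful hypothesis, and the one in which this lemma is invoked). The first step is to \emph{normalize} the configuration: since the encoder $\mathcal{C}$ is $\mathbb{F}_q$-linear, simultaneously replacing each $\vec{f}_j$ by $\vec{f}_j-\vec{f}_1$ and the received word $\vec{y}$ by $\vec{y}-\mathcal{C}(f_1)$ leaves every hyperedge $e_i=\{\vec{f}_j:\vec{y}[i]=\mathcal{C}(f_j)[i]\}$ unchanged, so without loss of generality one of the vectors is $0$ and the rest are nonzero and pairwise distinct; relabelling the nonzero translated vectors as $\vec{f}_1,\dots,\vec{f}_L$ with $L:=m-1\ge 1$, we have $\mathcal{V}=\{0,\vec{f}_1,\dots,\vec{f}_L\}$. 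Put $\ell:=\dim_{\mathbb{F}_q}\mathrm{Span}_{\mathbb{F}_q}\{\vec{f}_1,\dots,\vec{f}_L\}$, so that $1\le\ell\le L$.

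Next I would observe that minimality of $m$ forces exactly the hypothesis needed by \cref{thm:bound_loss}: for every proper subset $\mathcal{H}\subsetneq\mathcal{V}$ with $|\mathcal{H}|\ge 2$, the restriction $(\mathcal{H},\mathcal{E}|_{\mathcal{H}})$ is again a geometric agreement hypergraph on $|\mathcal{H}|<m$ pairwise distinct vectors, so by minimality it is not a counterexample, i.e.\ $\mathrm{wt}(\mathcal{H},\mathcal{E}|_{\mathcal{H}})<\frac{(|\mathcal{H}|-1)k}{s-|\mathcal{H}|+2}$. Hence \cref{thm:bound_loss} gives $\sum_{i\in[n]}\mathrm{Loss}(e_i)\le\frac{(L-\ell)k}{s-L+1}$, while \cref{thm:lower_bound_root} gives that the geometric polynomial $P:=V_{\{f_i\}_{i\in L}}$ has at least $(s-\ell+1)\sum_{i\in[n]}\widetilde{\dim}_{\mathbb{F}_q}(e_i)$ roots counting multiplicity. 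Using the pointwise identity $\widetilde{\dim}_{\mathbb{F}_q}(e)=\mathrm{wt}(e)-\mathrm{Loss}(e)$ (immediate once one notes $\widetilde{\dim}_{\mathbb{F}_q}(e)\le|e|-1$, the cases $|e|\le 1$ being trivial) together with the hypothesis $\mathrm{wt}(\mathcal{V},\mathcal{E})\ge\frac{(m-1)k}{s-m+2}=\frac{Lk}{s-L+1}$ (note $m-1=L$ and $s-m+2=s-L+1$), the number of roots of $P$ is at least $(s-\ell+1)\big(\mathrm{wt}(\mathcal{V},\mathcal{E})-\sum_{i\in[n]}\mathrm{Loss}(e_i)\big)\ge(s-\ell+1)\cdot\frac{\ell k}{s-L+1}$. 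On the other hand, $P$ is up to a nonzero scalar the $\ell\times\ell$ folded-Wronskian determinant of a basis of $\mathrm{Span}_{\mathbb{F}_q}\{\vec{f}_1,\dots,\vec{f}_L\}$, so $\deg P\le\ell(k-1)$, and $P\ne 0$ because it is monic (here one uses $k<q$ and \cref{lem:wrons_ind}). Since $1\le\ell\le L\le s$ we have $\frac{s-\ell+1}{s-L+1}\ge 1$, whence $(s-\ell+1)\cdot\frac{\ell k}{s-L+1}\ge\ell k>\ell(k-1)\ge\deg P$, so $P$ would have strictly more roots counting multiplicity than its degree --- a contradiction. (When $m=2$ there are no proper subsets of size $\ge 2$, so \cref{thm:bound_loss} applies vacuously and the argument reduces to the elementary fact that two distinct polynomials of degree $<k$ cannot agree on $s\cdot\mathrm{wt}(\mathcal{V},\mathcal{E})\ge k$ of the appropriate evaluation points.)

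I expect the main difficulty to be bookkeeping rather than a new idea, since the two load-bearing ingredients --- \cref{thm:lower_bound_root}, that the hyperedges furnish roots of $P$ with multiplicity, and \cref{thm:bound_loss}, that the total $\mathrm{Loss}$ is small --- are already in hand. The points that need care are: (i) confirming that the translation genuinely preserves the hyperedge structure and leaves a valid geometric agreement hypergraph with $0$ among its vertices, so that \cref{thm:lower_bound_root,thm:bound_loss} apply verbatim to $\mathcal{V}=\{0,\vec{f}_1,\dots,\vec{f}_L\}$; (ii) verifying that minimality of $m$ produces \emph{precisely} the per-subset weight condition demanded by \cref{thm:bound_loss}, with the degenerate small or empty hyperedges correctly absorbed into the identity $\widetilde{\dim}=\mathrm{wt}-\mathrm{Loss}$; and (iii) tracking the arithmetic so that, after the substitution $L=m-1$, the denominators $s-m+2$ and $s-L+1$ coincide and the hypothesis exactly cancels the $\mathrm{Loss}$ bound up to the residue $\frac{\ell k}{s-L+1}$, with $\ell\le L\le s$ being exactly what converts $\frac{s-\ell+1}{s-L+1}\ge 1$ into the strict root surplus $\ell k-\ell(k-1)=\ell\ge 1$ that closes the contradiction.
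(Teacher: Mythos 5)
Your proposal is correct and follows essentially the same route as the paper: the minimal-counterexample formulation is just the contrapositive of the paper's induction on $m$ (reduce to a proper subset $\mathcal{H}$ if some restriction has large weight, otherwise translate so that $0\in\mathcal{V}$ and combine \cref{thm:lower_bound_root} with \cref{thm:bound_loss} to exceed the degree bound $\ell(k-1)$ of the geometric polynomial). The only cosmetic difference is that the paper treats $m=2$ as an explicit base case with a direct agreement-counting argument, whereas you absorb it into the general machinery; both are fine.
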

\begin{proof}
Since $\mathrm{wt}(\mathcal{V},\mathcal{E})\ge \frac{(|\mathcal{V}|-1)k}{s-|\mathcal{V}|+2}$, there must exist a minimal subset $\mathcal{V}_0\subseteq \mathcal{V}$ with $|\mathcal{V}_0
|\ge 2$ and 
$\mathcal{V}_0$ satisfies the following conditions. 
\begin{itemize}
\item $\mathrm{wt}(\mathcal{V}_0,\mathcal{E}|_{\mathcal{V}_0})\ge\frac{(|\mathcal{V}_0|-1)k}{s-|\mathcal{V}_0|+2}$
\item For any proper subset $\mathcal{H}\subsetneq \mathcal{V}_0$ with $|\mathcal{H}|\ge 2$, $\mathrm{wt}(\mathcal{H},\mathcal{E}|_{\mathcal{H}})<\frac{(|\mathcal{H}|-1)k}{s-|\mathcal{H}|+2}$.
\end{itemize}
Let $m':=|\mathcal{V}_0|\ge 2$ and $\mathcal{V}_0=\{\vec{f}_{i_1},\dots,\vec{f}_{i_{m'}}\}$. Suppose by contradiction that $\vec{f}_1,\dots,\vec{f}_m$ are distinct, then $\vec{g}_1:=\vec{f}_{i_1}-\vec{f}_{i_1},\dots,\vec{g}_{m'}:=\vec{f}_{i_{m'}}-\vec{f}_{i_1}$ are also distinct. Moreover, by the definition of $\mathcal{V}_0$, there must exist a corresponding  geometric agreement hypergraph $(\mathcal{V}',\mathcal{E}')$ such that
\begin{itemize}
\item $\mathcal{V}'=\{\vec{g}_1=0,\vec{g}_2,\dots,\vec{g}_{m'}\},\mathcal{E}'=\{e_1,\dots,e_n\subseteq \mathcal{V}' \}$
\item $\mathrm{wt}(\mathcal{V}',\mathcal{E}')\ge\frac{(|\mathcal{V}'|-1)k}{s-|\mathcal{V}'|+2}$
\item For any proper subset $\mathcal{H}\subsetneq \mathcal{V}'$ with $|\mathcal{H}|\ge 2$, $\mathrm{wt}(\mathcal{H},\mathcal{E}'|_{\mathcal{H}})<\frac{(|\mathcal{H}|-1)k}{s-|\mathcal{H}|+2}$.
\end{itemize}
Let $P(X):=V_{\{\vec{g}_{2},\dots,\vec{g}_{m'}\}}(X)$. Since $\vec{g}_1=0,\vec{g}_2,\dots,\vec{g}_{m'}$ are distinct, we know $P(X)$ is a non-zero 
 polynomial with degree at most $\ell(k-1)$ where $\ell=\dim_{\mathbb{F}_q}\left(\Span_{\mathbb{F}_q}\{\vec{g}_2,\dots,\vec{g}_{m'}\}\right), 1\leq \ell\leq m'-1$. By \Cref{thm:lower_bound_root}, we know $P(X)$ has at least $(s-\ell+1)\sum_{i=1}^n\widetilde{\dim}_{\mathbb{F}_q}\left(e_i\right)\ge (s-\ell+1)\bigg(\mathrm{wt}(\mathcal{V}',\mathcal{E}')-\sum^n_{i=1}\mathrm{Loss}(e_i)\bigg)$ roots counting multiplicity. Moreover, by the weight lower bound and \Cref{thm:bound_loss}, we have:
\begin{equation*}
(s-\ell+1)\left(\mathrm{wt}(\mathcal{V}',\mathcal{E}')-\sum^n_{i=1}\mathrm{Loss}(e_i)\right)
\ge(s-\ell+1)\left(\frac{(m'-1)k}{s-m'+2}-\frac{(m'-1-\ell)k}{s-m'+2}\right)
\ge\ell k
\end{equation*}
However, the degree of $P(X)$ is at most $\ell(k-1)<\ell k$. Therefore, $P(X)=0$, which is a contradiction. We conclude that $\vec{f}_1,\dots,\vec{f}_m$ cannot be distinct.
\end{proof}
We are now ready to prove our main results on the list-decodability of folded Reed--Solomon codes, \cref{thm:main} and \cref{cor:main}.
\begin{theorem}[Restatement of \cref{thm:main}]\label{thm:main_restate}
For any $L\ge1, s,n,k\in\mathbb{N}^+,q>n,s\ge L$ and generator $\gamma$ of $\mathbb{F}_q^\times$, the folded RS code $\mathsf{FRS}_{n,k}^{(s,\gamma)}(\alpha_1,\alpha_2,\dots,\alpha_n)$ with appropriate evaluation points in $\mathbb{F}_q$ is $\left(\frac{L}{L+1}\left(1-\frac{sR}{s-L+1}\right),L\right)$ list-decodable.
\end{theorem}
\begin{proof}
Suppose by contradiction that there exists $L+1$ distinct polynomials $f_1,\dots,f_{L+1}\in\mathbb{F}_q[x]_{<k}$ and a received word $y\in(\mathbb{F}^s_q)^n$ such that for each $i\in[L+1]$, the codeword $c_i:=\mathcal{C}(f_i)$ of $f_i$ has relative hamming distance at most $\frac{L}{L+1}\left(1-\frac{sR}{s-L+1}\right)$ from $y$. Then for any $i\in[L+1]$, there is a subset $I_i\subseteq [n]$ where $|I_i|\ge \frac{n}{L+1}+\frac{Lk}{(L+1)(s-L+1)}$, such that for each $t\in I_i$, $c_{i}[t]=y[t]$. For each $j\in[n]$, we define $e_j=\left\{\vec{f}_i\colon j\in I_i\right\}$ as the set of polynomials whose codewords match with $y$ on position $j$ in this bad list. Let $(\mathcal{V},\mathcal{E})$ denote the corresponding geometric agreement hypergraph, there is $\mathcal{V}=\left\{\vec{f}_1,\dots,\vec{f}_{L+1}\right\}$, and $\mathcal{E}=(e_1,\dots,e_n)$.
The weight of this hypergraph can be bounded by
\begin{equation*}
\mathrm{wt}(\mathcal{V},\mathcal{E})\ge \sum_{j=1}^n\bigg(|e_j|-1\bigg)\ge\left(\sum^{L+1}_{i=1}|I_i|\right)-n\ge\frac{Lk}{s-L+1}.
\end{equation*}
Then, by \Cref{lem:non_distinct}, $f_1,\dots,f_{L+1}$ cannot be distinct, which is a contradiction. We conclude that such a bad list $f_1,\dots,f_{L+1}$ doesn't exist.
\end{proof}
Finally, we conclude this section by restating and proving (a detailed version of) \cref{cor:main} as follows.
\begin{corollary}[Restatement of \cref{cor:main}]\label{cor:exact_fold}
    For any $\epsilon>0$, $N>k\ge1,$ $L>\frac{1-R-\epsilon}{\epsilon}$,  $s>\frac{L(L-1)R}{\epsilon L-(1-R-\epsilon)}+L-1$ where $s|N$, generator $\gamma$ of $\mathbb{F}^{\times}_q$ and $R=\frac{k}{N}$. Let 
$n=\frac{N}{s}$ and  $\alpha_1,\alpha_2,\dots,\alpha_n\in\F_q$ be appropriate. Then the folded RS code $\mathsf{FRS}_{n,k}^{(s,\gamma)}(\alpha_1,\alpha_2,\dots,\alpha_n)$ over the alphabet $\mathbb{F}_q^s$ is $\big(1-R-\epsilon,L\big)$ list-decodable.
Therefore, by choosing different 
folding parameter $s$, the folded RS code $\mathsf{FRS}_{n,k}^{(s,\gamma)}(\alpha_1,\alpha_2,\dots,\alpha_n)$ is
\begin{itemize}
\item[$(a)$] $(1-R-\epsilon,\lfloor\frac{1-R}{\epsilon}\rfloor)$ list-decodable, when $s>\widetilde{s}$, where $\widetilde{s}$ is some constant only depending on $R,\epsilon$.
\item[$(b)$] $(1-R-\epsilon,\lceil\frac{1-R}{\epsilon}\rceil)$ list-decodable, when $s=\Theta({1}/{\epsilon^3})$.
\item[$(c)$] $(1-R-\epsilon,\lceil\frac{1}{\epsilon}\rceil)$ list-decodable, when $s=\Theta({1}/{\epsilon^2})$.
\end{itemize}
\end{corollary}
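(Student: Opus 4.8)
The plan is to obtain \cref{cor:exact_fold} as a direct arithmetic consequence of \cref{thm:main}. The folded code in the statement has block length $n = N/s$, message dimension $k$, and alphabet $\mathbb{F}_q^s$, hence rate $k/(sn) = k/N = R$; and since $s > \frac{L(L-1)R}{\epsilon L - (1-R-\epsilon)} + L - 1 \ge L - 1$ we have $s \ge L$, so \cref{thm:main} applies with list size $L$ and shows the code is $\left(\frac{L}{L+1}\left(1 - \frac{sR}{s-L+1}\right), L\right)$ list-decodable. Since a $(\rho, L)$ list-decodable code is also $(\rho', L)$ list-decodable for any $\rho' \le \rho$, it suffices to verify
\[
\frac{L}{L+1}\left(1 - \frac{sR}{s-L+1}\right) \;\ge\; 1 - R - \epsilon
\]
under the hypotheses on $L$ and $s$.

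I would then carry out the algebra. Setting $c := \frac{(L+1)(R+\epsilon) - 1}{L}$, multiplying the displayed inequality by $\frac{L+1}{L}$ and simplifying shows it is equivalent to $\frac{sR}{s-L+1} \le c$, and (as $s - L + 1 > 0$) to $s\,(c - R) \ge c\,(L - 1)$. The crucial point is the sign of $c - R = \frac{(L+1)\epsilon - (1 - R)}{L}$: this is positive exactly when $L > \frac{1-R-\epsilon}{\epsilon}$, which is the hypothesis on $L$ and the only place it is used. Dividing by $c - R > 0$ reduces the goal to $s \ge \frac{c\,(L-1)}{c - R}$, and a short computation (clearing the two fractions defining $c$ and $c-R$, both with denominator $L$) identifies $\frac{c(L-1)}{c-R}$ with $\frac{L(L-1)R}{\epsilon L - (1-R-\epsilon)} + L - 1$; since the hypothesis places $s$ strictly above this quantity, the displayed inequality holds, proving the main assertion of the corollary.

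For items $(a)$--$(c)$, I would substitute the three choices $L = \lfloor\frac{1-R}{\epsilon}\rfloor$, $L = \lceil\frac{1-R}{\epsilon}\rceil$, $L = \lceil\frac1\epsilon\rceil$ into the main assertion and estimate the threshold $\widetilde s(L) := \frac{L(L-1)R}{\epsilon L - (1-R-\epsilon)} + L - 1$; in each case one first checks $L > \frac{1-R-\epsilon}{\epsilon}$ (immediate from $0 < R < 1$). For $(a)$, $L$ is a function of $R,\epsilon$ only, so $\widetilde s(L)$ is a constant depending only on $R,\epsilon$. For $(b)$, $\epsilon L \ge 1 - R$ gives $\epsilon L - (1-R-\epsilon) \ge \epsilon$, whence $\widetilde s(L) = O(L^2/\epsilon) = O(1/\epsilon^3)$. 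For $(c)$, $\epsilon L \ge 1$ gives $\epsilon L - (1-R-\epsilon) \ge R + \epsilon$, and using $\frac{R}{R+\epsilon} \le 1$ one gets $\widetilde s(L) = O(L^2) = O(1/\epsilon^2)$. I do not anticipate a genuine obstacle here: with \cref{thm:main} available the corollary is a computation, and the only care needed is in tracking the sign of $c - R$ (equivalently, why the threshold on $L$ is exactly $\frac{1-R-\epsilon}{\epsilon}$) and in handling the floors and ceilings appearing in $(a)$--$(c)$.
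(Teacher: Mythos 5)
Your proposal is correct and follows essentially the same route as the paper: apply \cref{thm:main} with $s\ge L$ and verify by direct algebra that the radius $\frac{L}{L+1}\bigl(1-\frac{sR}{s-L+1}\bigr)$ exceeds $1-R-\epsilon$ precisely when $s$ exceeds the stated threshold, with the hypothesis $L>\frac{1-R-\epsilon}{\epsilon}$ entering only to make $\epsilon L-(1-R-\epsilon)$ positive. The paper runs the same computation forward rather than backward, and for $(a)$--$(c)$ simply records the parameter choices you verify.
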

\begin{proof}
    For any $L>\frac{1-R-\epsilon}{\epsilon}$ and $s>\frac{L(L-1)R}{\epsilon L-(1-R-\epsilon)}+L-1$, we have $s\ge L$. From \Cref{thm:main}, we know $\mathsf{FRS}_{n,k}^{(s,\gamma)}(\alpha_1,\alpha_2,\dots,\alpha_n)$ is a $\left(\rho:=\frac{L}{L+1}\left(1-\frac{sR}{s-L+1}\right), L\right)$ list-decodable code. We calculate that
    \begin{equation*}
    \rho=\frac{L}{L+1}\left(1-\frac{sR}{s-L+1}\right)
    >\frac{L}{L+1}\left(1-R-\frac{\epsilon L-(1-R-\epsilon)}{L}\right)
    =1-R-\epsilon
    \end{equation*}
    Therefore, $\mathsf{FRS}_{n,k}^{(s,\gamma)}(\alpha_1,\alpha_2,\dots,\alpha_n)$ is $\left(1-R-\epsilon, L\right)$ list-decodable.

Now we choose different parameter settings for $L,s$ that satisfy the above restrictions and inspect list-decodability of $\mathsf{FRS}_{n,k}^{(s,\gamma)}\left(\alpha_1,\alpha_2,\dots,\alpha_n\right)$, stated below
\begin{itemize}
\item[$(a)$] When $L=\lfloor\frac{1-R}{\epsilon}\rfloor, s>\widetilde{s}=\frac{L(L-1)R}{\epsilon L-(1-R-\epsilon)}+L-1$, it is $\left(1-R-\epsilon,\lfloor\frac{1-R}{\epsilon}\rfloor\right)$ list-decodable.
\item[$(b)$] When $L=\lceil\frac{1-R}{\epsilon}\rceil,s>\lceil\frac{3}{\epsilon^3}\rceil$, it is $\left(1-R-\epsilon,\lceil\frac{1-R}{\epsilon}\rceil\right)$ list-decodable.
\item[$(c)$] When $L=\lceil\frac{1}{\epsilon}\rceil, s>\lceil\frac{3}{\epsilon^2}\rceil$, it is $\left(1-R-\epsilon,\lceil\frac{1}{\epsilon}\rceil\right)$ list-decodable.\qedhere
\end{itemize}
\end{proof}
\section{Improved Upper Bound on the Radius for List-Recoverability of Folded RS Codes}\label{sec:list_recover}
In this section, we derive an upper bound on the radius $\rho$, beyond which folded RS codes fail to be $(\rho, \ell, L)$ list-recoverable. Although this bound may not be tight for all parameter settings $(\ell, L)$, we conjecture that it is nearly tight when $L + 1$ is a power of $\ell$. In particular, we demonstrate that for $\ell = 2$ and $L = 3$, this bound is tight. Our result significantly improves upon the previous bound from \cite{gst22b} and rules out the possibility that folded RS codes could achieve list-recovery capacity.
Surprisingly, since these codes achieve list-decoding capacity, as shown in \cref{cor:main}, this result highlights an intrinsic separation between list-decodability and list-recoverability.

Given an alphabet $\Sigma$, a product set $S=S_1\times \cdots\times S_n\in(2^\Sigma)^n$, and a (corrupted) codeword $c\in \Sigma^n$, we denote by $\mathrm{dist}(c,S)$ the number of indices $i\in[n]$ such that $c_i\notin S_i$. 
We also define the agreement as $\mathrm{Agr}(c,S):=n-\mathrm{dist}(c,S)$, which will be used throughout this section. All results in this section can also be adapted to univariate multiplicity codes. For brevity, we will focus on folded Reed--Solomon codes here. 

The following theorem provides our major upper bound on the list-recovery radius of folded RS codes.
\begin{theorem}\label{thm:listrec}
Let $s\ge 1$, $2\leq \ell\leq L,q\ge\ell$, $\gamma$ be a  generator of $\mathbb{F}^{\times}_q$, and $m=\lceil\log_{\ell}(L+1)\rceil>1$. Suppose $\frac{k-1}{sn}\leq \frac{m-1}{m}$ and $\frac{k-1}{(m-1)s}\ge 1$. If a folded Reed--Solomon code $\mathsf{FRS}^{(s,\gamma)}_{n,k}(\alpha_1,\alpha_2,\dots,\alpha_n)$ with appropriate evaluation points in $\mathbb{F}_q$ is  $(\rho,\ell,L)$ list-recoverable, then 
    \[\rho\leq\frac{L+1-\ell}{L+1}\left(1-\left\lfloor\frac{m}{m-1}\left\lfloor\frac{k-1}{s}\right\rfloor\right\rfloor\frac{1}{n}\right).\]
% $m=\min_{b\in\mathbb{N}}\frac{\lceil\log_{b}(L+1)\rceil}{\lfloor \log_{b}{\ell}\rfloor}$.
\end{theorem}

Before proving \cref{thm:listrec}, we first derive a simplified form for ease of use.

\begin{corollary}[Simplified form of \cref{thm:listrec}. Restatement of \cref{thm:bound_clean}]\label{cor:bound_clean_app}
Let $s\ge 1$, $2\leq \ell\leq L,q\ge\ell$, generator $\gamma$ of $\mathbb{F}^{\times}_q$ and $m=\lceil\log_{\ell}(L+1)\rceil>1$. Suppose $R=\frac{k}{sn}\leq \frac{m-1}{m}$ and $\frac{k-1}{s}\ge m$. If a folded Reed--Solomon code $\mathsf{FRS}^{(s,\gamma)}_{n,k}(\alpha_1,\alpha_2,\dots,\alpha_n)$ of rate $R$ with appropriate evaluation points in $\mathbb{F}_q$ is  $(\rho,\ell,L)$ list-recoverable, then 
    \[\rho\leq\frac{L+1-\ell}{L+1}\left(1-\frac{mR}{m-1}\right)+\frac{5}{n}.\]  
\end{corollary}
\begin{proof}
By \cref{thm:listrec}, since $m\ge 2$, we have $\frac{m}{m-1}\leq 2$ and 
\begin{align*}
\rho&\leq\frac{L+1-\ell}{L+1}\left(1-\left\lfloor\frac{m}{m-1}\left\lfloor\frac{k-1}{s}\right\rfloor\right\rfloor\frac{1}{n}\right)
\\
&\leq \frac{L+1-\ell}{L+1}\left(1-\left(\left\lfloor\frac{m(k-1)}{(m-1)s}\right\rfloor-2\right)\frac{1}{n}\right)\\
&\leq \frac{L+1-\ell}{L+1}\left(1-\left(\left\lfloor\frac{mk}{(m-1)s}\right\rfloor-4\right)\frac{1}{n}\right)\\
&\leq \frac{L+1-\ell}{L+1}\left(1-\left(\frac{mk}{(m-1)s}-5\right)\frac{1}{n}\right)\\
&\leq \frac{L+1-\ell}{L+1}\left(1-\frac{mR}{m-1}\right)+\frac{5}{n}. \qedhere
\end{align*}
\end{proof}

As a corollary, we obtain \cref{cor:rewrite_bound} in an alternative form, demonstrating the impossibility of removing the exponential dependency on $\frac{1}{\epsilon}$ in the  $(1-R-\epsilon, \ell, (\frac{\ell}{\epsilon})^{O(\frac{1+\log{\ell}}{\epsilon})})$ list-recoverability of FRS codes established in \cite{kopparty2018improved, kopparty2023improved, tamo2024tighter}, let alone matching the list-recovery capacity $(1-R-\epsilon, \ell, O(\frac{\ell}{\epsilon}))$ achieved by random codes over large alphabets (See the discussion about the bound \eqref{recovercapacity}). Since we have shown in \cref{cor:main} that these codes achieve list-decoding capacity, this implies a separation between list-decodability and list-recoverability.
\begin{corollary}[Restatement of \cref{cor:rewrite_bound}]\label{cor:FRSnot}
For any constants $0<R<1, 2\leq \ell\leq q, s\ge 1, 0<\epsilon<\frac{R(1-R)}{4}$, and generator $\gamma$ of $\mathbb{F}^{\times}_q$, if $k$ and $n$ are sufficiently large, then any rate $R=\frac{k}{sn}$ folded Reed--Solomon code $\mathsf{FRS}^{(s,\gamma)}_{n,k}(\alpha_1,\dots,\alpha_n)$ with appropriate evaluation points in $\mathbb{F}_q$ cannot be $(1-R-\epsilon, \ell, \ell^{\frac{R}{2\epsilon}-1}-1)$ list-recoverable.
\end{corollary}
\begin{proof}
When $\epsilon<\frac{R(1-R)}{4}$, we must have $R-2\epsilon\ge \frac{R}{2}$, and $R\leq 1-\frac{4\epsilon}{R}\leq 1-\frac{2\epsilon}{R-2\epsilon}=\frac{R/(2\epsilon)-2}{R/(2\epsilon)-1}\leq \frac{t-1}{t}$ where $t:=\lceil \frac{R}{2\epsilon}\rceil-1$. Applying \cref{cor:bound_clean_app} with $m=t$, $L=\ell^t-1$, and sufficiently large $n$ shows that the FRS code in the statement cannot be $(\rho, \ell, \ell^t-1)$ list-recoverable, where 
\[
\rho:=\frac{L+1-\ell}{L+1}\left(1-\frac{tR}{t-1}\right)+\frac{5}{n}\leq 1-\frac{tR}{t-1}\leq 1-R-\eps.
\] 
%Here we relax \cref{thm:bound_clean} and omit the fraction factor $\frac{L+1-\ell}{L+1}<1$. 
%We can verify that $1-\frac{tR}{t-1}<1-R-\epsilon$, so the code cannot be $(1-R-\epsilon, \ell, \ell^t-1)$ list-recoverable.
Here the first inequality holds since $n$ is large enough, and the second inequality holds since $t=\lceil \frac{R}{2\epsilon}\rceil-1\leq \frac{R}{\eps}+1$.
It follows that the code is not $(1-R-\epsilon, \ell, \ell^{\frac{R}{2\epsilon}-1}-1)$ list-recoverable.
\end{proof}
Recall the previous best known upper bound from \cite{gst22b} only shows that the codes in \cref{cor:FRSnot} cannot be $(1-R-\epsilon, \ell, \frac{\ell(1-R)}{\epsilon}-2)$ list-recoverable for sufficiently large $n$.
In contrast, our \cref{cor:rewrite_bound} yields an exponential list size of $\ell^{\frac{R}{2\epsilon}-1}-1$, compared with their linear list size of $\frac{\ell(1-R)}{\epsilon}-2$. Thus, our bound represents a significantly tighter constraint.
\begin{remark}
Following the notation in \cref{cor:rewrite_bound}, by setting $s=1$, we conclude that any Reed--Solomon code of rate $R=k/n$ with distinct evaluation points cannot be $(1-R-\epsilon, \ell, \ell^{\frac{R}{2\epsilon}-1}-1)$ list-recoverable when the block length $n$ is sufficiently large.
\end{remark}

%\ZGnote{I suggest moving Cor 3.3 and its proof, Cor 3.8, and Remark 3.9 here. Then put the following proof in a subsection, and the lower bound in another subsection.}
\subsection{Proof of \cref{thm:listrec}}

We now prove \cref{thm:listrec} as follows.
%\begin{proof}[Proof of \cref{thm:listrec}]
Let $p:=\left\lfloor\frac{m}{m-1}\lfloor\frac{k-1}{s}\rfloor\right\rfloor$. Since $\frac{k-1}{sn}\leq\frac{m-1}{m}$ and $\frac{k-1}{(m-1)s}\ge 1$, we know $n\ge p\ge m$. Define $Q_i(X):=\prod_{j=0}^{s-1}(X-\gamma^j\alpha_i)$ for all $i\in[n]$. Let $M(X):=\prod_{i=1}^pQ_i(X)$. For $t\in[m]$, let \[M_t(X):=\prod_{i\in[p],i\equiv t\bmod{m}}Q_i(X).\]
Then, for $t\in[m]$, define $f_t(X):=\frac{M(X)}{M_t(X)}$. See \cref{fig:ex} for an example of this construction. Next, we claim $\deg{f_t}\leq k-1$ for all $t\in[m]$. \cref{fig:ex} provides a ``visual proof'' illustrating why this holds. Formally, let 
$\lfloor\frac{k-1}{s}\rfloor=a(m-1)+r$ where $a=\left\lfloor\left\lfloor\frac{k-1}{s}\right\rfloor\frac{1}{m-1}\right\rfloor$ and $r\in[m-2]$. For any $t\in[m]$, we can bound $\deg f_t$ as follows.
\begin{align*}
\deg{f_t}&\leq s\left(p-\left\lfloor\frac{p}{m}\right\rfloor\right)\\
&\leq s\left(\left\lfloor\frac{m\left(a(m-1)+r\right)}{m-1}\right\rfloor-\left\lfloor\left\lfloor\frac{m\left(a(m-1)+r\right)}{m-1}\right\rfloor\frac{1}{m}\right\rfloor\right)\leq s\left(am+\left\lfloor\frac{rm}{m-1}\right\rfloor-a\right)\\
&=s\left(a(m-1)+r+\left\lfloor\frac{r}{m-1}\right\rfloor\right)=s\left(a(m-1)+r\right)=s\left\lfloor\frac{k-1}{s}\right\rfloor\leq k-1,
\end{align*}
where $\left\lfloor\frac{r}{m-1}\right\rfloor=0$ follows from the fact that $r<m-1$. 

Fix $\ell$ distinct elements $\beta_1,\dots,\beta_{\ell}\in\mathbb{F}_q$, which is possible as $q\geq \ell$. Consider a set $F$ of polynomials over $\mathbb{F}_q$ of degree at most $k-1$, defined as
\[F:=\left\{ \sum_{i=1}^{m}\beta_{j_i}f_i(X)\colon (j_1,\dots,j_{m})\in[\ell]^{m} \right\}. \]
\begin{claim}\label{cl:distinct}
%The polynomials in $F$ are distinct.
$F$ contains $\ell^m$ distinct elements.
\end{claim}
\begin{proof}
It suffices to prove that $f_1(X),\dots,f_m(X)$ are $\mathbb{F}_q$-linear independent, or equivalently, the equation $\sum_{i=1}^mc_if_i(X)=0$ does not have a non-zero solution $(c_1,\dots,c_m)\in\mathbb{F}_q^m$. Assume to the contrary that there exists a non-zero solution $(c_1,\dots,c_m)$. Then $c_t\neq 0$ for some $t\in [m]$. Fix such $t$. Let $P(X)=\sum_{i=1}^mc_if_i(X)=0$. 
Choose $j$ to be the unique integer in $[m]$ such that $j\equiv t\bmod m$. 
Since $p\ge m$, we have $j\in [p]$. 
By definition, we have $Q_j(X)\mid M_t(X)$ and  $Q_j(X)\mid f_{t'}(X)$ for $t'\neq t$. In particular, as $Q_j(\alpha_j)=0$, we have $M_t(\alpha_j)=0$ and $f_{t'}(\alpha_j)=0$ for $t'\neq t$. 
As $\alpha_1,\dots,\alpha_n$ are appropriate, the polynomials $Q_1,\dots, Q_n$ are mutually coprime.
It follows by definition that $f_t(X)=M(X)/M_t(X)$ does not vanish at $\alpha_j$.
Therefore, $P(\alpha_j)=\sum_{i=1}^m c_i f_i(\alpha_j)=c_t f_t(\alpha_j)\neq 0$. But this contradicts the assumption that $P(X)=0$.
\end{proof}
For any $g\in F$, which is a polynomial over $\F_q$ of degree at most $k-1$, denote by $\mathcal{C}(g)\in (\mathbb{F}_q^s)^n$ the corresponding codeword. Similarly, we use $\mathcal{C}(f_v)\in(\mathbb{F}^s_q)^n$ to denote the codeword corresponding to $f_v$ for each $v\in[m]$. 
By \cref{cl:distinct}, we know $|F|=\ell^m\geq L+1$.
Arbitrarily choose $L+1$ distinct polynomials $g_1(X),\dots,g_{L+1}(X)\in F$.
%We will construct a product set $S=S_1\times S_2\times\dots S_n$ with $|S_i|\leq \ell$ for $i\in [n]$ such that $\mathrm{dist}(\mathcal{C}(g_t),S)\leq $ for $t\in [L+1]$, \ZGnote{to be completed} thereby proving \cref{thm:listrec}.

%show that these $g_i(X),i\in[L+1]$ forms a bad list when $\rho$ is not small enough. 
%We use $h_u,c_v\in(\mathbb{F}_q)^s$ to denote the codewords of $g_u,f_v$ for all $u\in[L+1],v\in[m]$. 
\begin{figure}[t!]
\centering 
\includegraphics[width=0.8\textwidth]{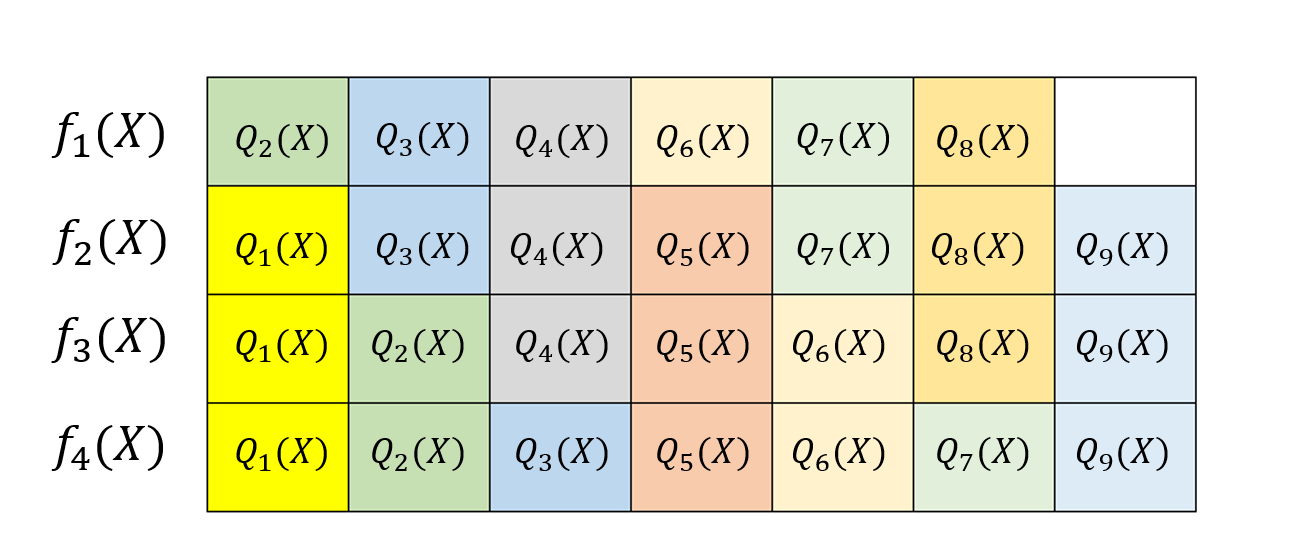} 
\caption{An example of our construction in \cref{thm:listrec} is provided here, with parameters set to $m=4$, $\lfloor\frac{k-1}{s}\rfloor=7$, and $p=9$. For each $i\in[m]$, $f_i(X)$ is a product of distinct factors in the set $\left\{Q_u(X)\right\}_{u\in[p]}$, listed in the $i$-th row of the table. The factors $Q_u$ are arranged in ascending order in $u$. 
Each $Q_u$ appears exactly $m-1$ times in the table by the definition of the polynomials $f_i$.
%Each color above represents a kind of factors, and occupies $m-1$ blocks in this table from the definition.
Since there are $m$ rows and $p=\left\lfloor\frac{m}{m-1}\lfloor\frac{k-1}{s}\rfloor\right\rfloor$ distinct factors $Q_u$, with each factor appearing exactly $m-1$ times, the way we fill these factors in the table 
guarantees that each row contains no more than $\left\lceil\frac{p(m-1)}{m}\right\rceil \leq\lfloor\frac{k-1}{s}\rfloor$ factors $Q_u$. 
As each $Q_u$ has degree $s$, it follows that $\deg{f_i}\leq s\lfloor\frac{k-1}{s}\rfloor\leq k-1$ for $i\in[m]$.
}
\label{fig:ex}
\end{figure}
Next, we choose a list of $(n-p)$ subsets $T_{p+1},\dots,T_{n}\subseteq [L+1]$, each of size $\ell$, such that $T_{p+1},\dots,T_n$ are ``evenly distributed.'' More specifically, for each $i\in[L+1]$, let $a_i$ denote the number of indices $j$ in $\{p+1,p+2,\dots,n\}$ such that $i\in T_j$. We choose sets $T_{p+1},\dots,T_n\subseteq [L+1]$ of size $\ell$ such that 
$\lfloor\frac{\ell(n-p)}{L+1}\rfloor\leq a_i\leq \lceil\frac{\ell(n-p)}{L+1}\rceil$ for $i\in [L+1]$.\footnote{We can construct $T_{p+1},\dots,T_n$ one by one. For each $p+1\leq i\leq n$, let $T_i=\{b_1,\dots,b_{\ell}\}\in \binom{[L+1]}{\ell}$ such that $a^{(i)}_{b_1},\dots,a^{(i)}_{b_{\ell}}$ are the smallest $\ell$ elements among $a^{(i)}_{1},\dots,a^{(i)}_{L+1}$. Here, $a^{(i)}_u$ for $u\in[L+1]$, denotes the number of indices $j\in\{p+1,p+2,\dots,i-1\}$ for which $u\in T_j$. Break ties arbitrarily. It is straightforward to see that $T_{p+1},\dots,T_n$ constructed in this way are ``evenly distributed.''}

Finally, we construct the product set $S$.
For each $i\in\{p+1,p+2,\dots n\}$, define $S_i:=\left\{\mathcal{C}(g_v)[i]:v\in T_i\right\}$, whose size is at most $|T_i|=\ell$.
For each $i\in[p]$, let $t_i$ be the unique integer in $[m]$ such that $i\equiv t_i \bmod{m}$, and then define $S_i:=\{\beta_1 \mathcal{C}(f_{t_i})[i],\dots,\beta_{\ell} \mathcal{C}(f_{t_i})[i]\}$. Here again, $|S_i|\leq \ell$.
Let $S:=S_1\times S_2\times\cdots\times S_n$. 

Consider arbitrary $v\in [L+1]$.
As $g_v\in F$, we may write $g_v=\sum_{j=1}^m q_j f_j$ with $q_1,\dots,q_m\in\{\beta_1,\dots,\beta_\ell\}$.
Then for each $i\in [p]$, by definition,
\[
\mathcal{C}(g_v)[i]=\left(g_v(\alpha_i),\dots,g_v(\gamma^{s-1}\alpha_i)\right)=\left(\sum_{j=1}^mq_jf_j(\gamma^{u}\alpha_i)\right)_{u\in\{0,\dots,s-1\}}.
\]
For any $u\in\{0,\dots,s-1\}$ and $t'\in [m]$ different from $t_i$, by construction, we know $Q_i(\gamma^u\alpha_i)=0$ and $Q_i(X)\mid f_{t'}(X)$, which implies $f_{t'}(\gamma^u\alpha_i)=0$. Therefore, for each $i\in [p]$,
\[
\mathcal{C}(g_v)[i]=(q_{t_i}f_{t_i}(\alpha_i),\dots,q_{t_i}f_{t_i}(\gamma^{s-1}\alpha_i))=q_{t_i}\mathcal{C}(f_{t_i})[i]\in S_i.
\]
For each $v\in[L+1]$, let $\mathcal{C}'(g_v)$ denote the segment of $\mathcal{C}(g_v)$ whose first $p$ indices are removed, we know $\mathrm{Agr}(\mathcal{C}'(g_v),S_{p+1}\times\cdots\times S_n)\ge a_v$. Therefore, 
\[
\mathrm{Agr}(\mathcal{C}(g_v),S)=p+\mathrm{Agr}(\mathcal{C}'(g_v),S_{p+1}\times\cdots\times S_n)
\geq p+a_v
\geq p+\left\lfloor\frac{\ell(n-p)}{L+1}\right\rfloor.
\]
%In conclusion, we have:
%\begin{align*}
%\sum_{t=1}^{L+1}\mathrm{Agr}(h_t,S)&=(L+1)p+\sum_{t=1}^{L+1}\mathrm{Agr}(h'_t,S_{p+1}\times\cdots\times S_n)\\
%&\ge(L+1)p+\sum_{t=1}^{L+1}a_t\\
%&=(L+1-\ell)p+\ell n
%\end{align*}
%Therefore, we can bound the sum of the distances between these codewords $h_t$ and the received product set.
It follows that 
\[
\mathrm{dist}(\mathcal{C}(g_v),S)=n-\mathrm{Agr}(\mathcal{C}(g_v),S)
\leq n-p-\left\lfloor\frac{\ell(n-p)}{L+1}\right\rfloor
\leq\left\lceil\frac{(L+1-\ell)(n-p)}{L+1}\right\rceil.
\]
%Since $T_{p+1},\dots,T_n$ are ``uniformly distributed,'' the maximum and minimum values among $a_1,\dots,a_{L+1}$ have difference at most $1$. Therefore, for any $t\in[L+1]$, we have the following distance upper-bound
%\[\mathrm{dist}(h_t,S)\leq d=\frac{(L+1-\ell)(n-p)}{L+1}+1\]
The above holds for all $v\in [L+1]$, showing that the code cannot be $\left(\frac{\left\lceil\frac{(L+1-\ell)(n-p)}{L+1}\right\rceil}{n},\ell, L\right)$ list-recoverable. 
Therefore, if the code is $(\rho,\ell,L)$ list-recoverable, then
\[
\rho\leq \frac{\left\lceil\frac{(L+1-\ell)(n-p)}{L+1}\right\rceil-1}{n}\leq \frac{L+1-\ell}{L+1}\left(1-\frac{p}{n}\right).
\]
This completes the proof of \cref{thm:listrec}. %\ZGnote{Almost done.}
%\end{proof}

%Then, we can extract some insignificant terms from exact division issues, and get the following cleaner bound \cref{thm:bound_clean} for ease of further analysis.

\subsection{Tightness of the Improved Upper Bound in the Case of $(\ell, L)=(2,3)$}

%The above bound is constructed from a hypercube-like highly structured polynomial list, which in some sense seems like `the worst list' residing in the message linear space. 
The bound above is derived from a highly structured, hypercube-like arrangement of polynomials, which we believe represents the ``worst-case'' configuration within the linear space of messages. Accordingly, we conjecture that \cref{thm:bound_clean} is (nearly) tight when $L+1$ takes the form $\ell^a$, allowing the $L+1$ messages to fully constitute a hypercube-like list of candidate messages. 
\begin{conjecture}\label{conj:tight}
For any $\ell\ge 2$, generator $\gamma$ of $\mathbb{F}^{\times}_q$ and $L+1=\ell^a$ where $a\in\mathbb{N}^{\ge 2}$, \cref{thm:bound_clean} is almost tight.
Formally, for any constants $\epsilon>0, \ell\ge 2, L+1=\ell^a, a\in\mathbb{N}^{\ge 2}, R\leq \frac{a-1}{a}$ and generator $\gamma$ of $\mathbb{F}^{\times}_q$ there exists a constant $C$ such that: if $s\ge C$, $\frac{k-1}{s}\ge a$, and $n$ is suffciently large, then any rate 
$R=\frac{k}{sn}$ folded Reed--Solomon code $\mathsf{FRS}^{(s,\gamma)}_{n,k}(\alpha_1,\alpha_2,\dots,\alpha_n)$ with appropriate evaluation points in $\mathbb{F}_q$ is $(\rho-\epsilon,\ell, L)$ list-recoverable, where
\[\rho=\frac{L+1-\ell}{L+1}\left(1-\frac{aR}{a-1}\right).\]
\end{conjecture}
Then, as the first step to resolve \cref{conj:tight}, we will show that when $\ell=2,L=3$, \cref{conj:tight} is true. In this case, the parameter $a=2$. It suffices to prove  \cref{thm:special}. The proof is based on extending the framework we built in \cref{sec:frs} for list-decoding.
\begin{theorem}[Restatement of \cref{thm:special}]
For any $s,n,k\in\mathbb{N}^+,q>n,s\ge 3$ and generator $\gamma$ of $\mathbb{F}_q^\times$. The folded RS code $\mathsf{FRS}_{n,k}^{(s,\gamma)}(\alpha_1,\alpha_2,\dots,\alpha_n)$ with appropriate evaluation points in $\mathbb{F}_q$ is $\left(\frac{1}{2}-\frac{sR}{s-2}, 2, 3\right)$ list-recoverable.
\end{theorem}
\begin{proof}
First, we need to generalize the geometric agreement hypergraph \cref{def:agr_graph} used for list-decoding to a list-recovery counterpart. To achieve this we define geometric table-agreement hypergraph.
\begin{definition}[Geometric table-agreement hypergraph based on FRS codes]
Given a $(s,\gamma)$-folded Reed--Solomon code $\mathsf{FRS}^{(s,\gamma)}_{n,k}(\alpha_1,\dots,\alpha_n)\subseteq \left(\mathbb{F}^s_q\right)^n$ where $(\alpha_1,\dots,\alpha_n)$ is an appropriate sequence, a received product set $S=S_1\times\cdots\times S_n\in(2^{\mathbb{F}^s_q})^n$, and $\ell$ vectors $\vec{f_1},\vec{f_2},\dots,\vec{f_\ell}\in \mathbb{F}_q^k$, we can define the geometric table-agreement hypergraph $(\mathcal{V},\mathcal{E})$ with vertex set $\mathcal{V}:=\left\{\vec{f_1},\vec{f_2},\dots,\vec{f_\ell}\right\}$ and a tuple of $\sum_{i=1}^n|S_i|$ hyperedges $\mathcal{E}:=\bigcupdot_{i=1}^n\mathcal{E}_i$, where
$\mathcal{E}_i:=\{e_{i,1},\dots,e_{i,|S_i|}\subseteq\mathcal{V}\}$ and $e_{i,j}:=\left\{\vec{f_t}\in\mathcal{V} : s_{i,j}=\mathcal{C}(f_t)[i]\right\}$. Here $s_{i,j}\in\mathbb{F}^s_q$ means the $j$-th element in $S_i$.
\end{definition}
We can also easily define the weight function for geometric table-agreement hypergraphs and their hyperedges.
\begin{definition}[Weight]
Given a geometric table-agreement hypergraph $(\mathcal{V},\mathcal{E})$ as above, we define the weight $\mathrm{wt}(\mathcal{V},\mathcal{E}):=\sum^n_{i=1}\sum_{j=1}^{|S_i|}\mathrm{wt}(e_{i,j})$, where $\mathrm{wt}(e_{i,j}):=\max\bigg(|e_{i,j}|-1,0\bigg)$.
\end{definition}
Suppose by contradiction that there exists four distinct polynomials $f_1,f_2,f_3,f_4\in\mathbb{F}_q[x]_{<k}$ and a received product set $S=S_1\times\cdots\times S_n\in\binom{\mathbb{F}^s_q}{\leq 2}^n$ such that for each $i\in[4]$, the codeword $c_i:=\mathcal{C}(f_i)$ of $f_i$ has relative hamming distance at most $\left(\frac{1}{2}-\frac{sR}{s-2}\right)$ from $S$. Then for any $i\in[4]$, there is a subset $I_i\subseteq [n]$ where $|I_i|\ge \frac{n}{2}+\frac{k}{s-2}$, such that for each $t\in I_i$, $c_{i}[t]\in S_t$. For each $j\in[n]$, without loss of generality suppose $S_j=\{s_{j,1},s_{j,2}\in\mathbb{F}^s_q\}$, then for each $k\in\{1,2\}$ we define $e_{j,k}=\left\{\vec{f}_i\colon j\in I_i,c_i[j]=s_{j,k}\right\}$ as the set of polynomials whose codewords match with $s_{j,k}$ on position $j$ in this bad list. Let $(\mathcal{V},\mathcal{E})$ denote the corresponding geometric table-agreement hypergraph, where $\mathcal{V}=\left\{\vec{f}_1,\vec{f}_2,\vec{f}_3,\vec{f}_{4}\right\}$, and $\mathcal{E}=(e_{1,1},e_{1,2},\dots,e_{n,1},e_{n,2})$.
The weight of this hypergraph can be bounded by
\begin{equation*}
\mathrm{wt}(\mathcal{V},\mathcal{E})\ge \sum_{j=1}^n\sum_{k=1}^2\bigg(|e_{j,k}|-1\bigg)\ge\left(\sum^{4}_{i=1}|I_i|\right)-2n\ge\frac{4k}{s-2}.
\end{equation*}
For each hyperedge $e_{j,k}\subseteq \mathcal{V},j\in[n],k\in\{1,2\}$, we choose an arbitrary spanning tree $T_{j,k}$ in the complete graph supported on the vertex-set $e_{j,k}$. Let $G=(\mathcal{V},E(G))$ be an undirected multi-graph (each edge connects two vertices) where $E(G)=\bigcupdot_{j\in[n],k\in\{1,2\}}E(T_{j,k})$. We can observe that
\begin{itemize}
\item[(1)] $|E(G)|=\mathrm{wt}(\mathcal{V},\mathcal{E})$
\item[(2)] For any subset of vertices $\mathcal{V}'\subseteq \mathcal{V}$, there is $\mathrm{wt}(\mathcal{V}',\mathcal{E}|_{\mathcal{V}'})\ge |E(G[\mathcal{V}'])|$, where $G[\mathcal{V'}]$ denotes the induced subgraph of $G$ on $\mathcal{V}'$.
\end{itemize}
There are $4$ different subsets $\mathcal{V}_1,\mathcal{V}_2,\mathcal{V}_3,\mathcal{V}_4\subseteq \mathcal{V}$ of vertices, each with size $3$. We can lower bound the sum of their `weights'
\[\sum_{i=1}^4\mathrm{wt}(\mathcal{V}_i,\mathcal{E}|_{\mathcal{V}_i})\ge \sum_{i=1}^4|E(G[\mathcal{V}_i])|=2|E(G)|=2\mathrm{wt}(\mathcal{V},\mathcal{E})\ge \frac{8k}{s-2}\]
The second derivation $\sum_{i=1}^4|E(G[\mathcal{V}_i])|=2|E(G)|$ above comes from the fact that every edge $e=(u_e,v_e)\in E(G)$ is counted exactly twice in $\sum_{i=1}^4|E(G[\mathcal{V}_i])|$ since there are exactly two vertex subsets of $G$ with size $3$ contain both $u_e$ and $v_e$.

Therefore, there must be some $\mathcal{V}_r, |\mathcal{V}_r|=3, r\in[4]$ such that $\mathrm{wt}(\mathcal{V}_r,\mathcal{E}|_{\mathcal{V}_r})\ge \frac{2k}{s-2}$. Moreover, since there are only three vertices in $\mathcal{V}_r$, for any index $i\in[n]$, at most one of $e_{i,1},e_{i,2}$ contributes positive weights to $\mathrm{wt}(\mathcal{V}_r,\mathcal{E}|_{\mathcal{V}_r})$. That's because they are two disjoint hyperedges by definition. Therefore, for each index $i\in[n]$ we can select at most one hyperedge $e'_i\in\{e_{i,1}|_{\mathcal{V}_r},e_{i,2}|_{\mathcal{V}_r}\}$ such that $\mathrm{wt}(\mathcal{V}_r,\mathcal{E}':=(e'_1,\dots,e'_n))=\mathrm{wt}(\mathcal{V}_r,\mathcal{E}|_{\mathcal{V}_r})\ge\frac{2k}{s-2}$. Since $(\mathcal{V}_r, \mathcal{E}')$ is just a normal geometric agreement hypergraph defined in \cref{def:agr_graph} with a single hyperedge for each index (not a table-agreement one), we can resort to \cref{lem:non_distinct} on $(\mathcal{V}_r,\mathcal{E}')$ and conclude that $f_1, f_2, f_3, f_4$ cannot be distinct, which leads to a contradiction.
\end{proof}

\section{Future Directions}
We highlight several directions for future research.
\begin{enumerate}
    \item  \emph{Deterministic list-decoding algorithms.}
Most recently, Goyal, Harsha, Kumar, and Shankar \cite{goyal2023fast} provided near-linear time (randomized) list-decoding algorithms for both folded RS and univariate multiplicity codes. While we 
have proved that all ``appropriate'' folded RS codes are list-decodable up to capacity, it remains an interesting question to obtain an efficient and deterministic list-decoding algorithm for folded RS codes that achieve list-decoding capacity.

\item \emph{List-recoverability \& better constructions of lossless condensers.}
A natural follow-up open problem is to establish better list-recoverability of folded RS and univariate multiplicity codes, or their relatives, improving upon the best known bound from \cite{kopparty2018improved,kopparty2023improved,tamo2024tighter}. In \cref{sec:list_recover}, we have proposed a new bound in \cref{thm:bound_clean} and conjectured in \cref{conj:tight} that it is tight. We are curious whether this conjecture is true. If it is false, what should be the correct tight bound?

One of the most important applications of list-recoverability is the construction of seeded condensers. The seminal paper \cite{guruswami2009Unbalanced} established a connection between list-recoverable codes and both lossless and lossy seeded condensers. The line of work \cite{guruswami2009Unbalanced,ta2012better,kts22} exploited this connection and used explicit Parvaresh–Vardy codes, folded RS codes, and univariate multiplicity codes with good list-recoverability to achieve the best known explicit constructions of lossless (and lossy) seeded condensers. These results typically focus on a large list size $L$. We hope that the ideas and techniques in our paper will be helpful for the construction of better seeded condensers. %\YYnote{Could the new bound imply some condensers?} 
\item \emph{Explicit RS codes achieving list-decoding capacity.}  Although there is a long line of works \cite{rudra2014every, shangguan2020combinatorial, guo2022improved, ferber2022list, goldberg2022list,brakensiek2023generic,GZ23, alrabiah2023randomly} showing that randomly punctured RS codes over linear-sized alphabets are list-decodable up to list-decoding capacity with high probability, little is known about constructing explicit RS codes that achieve list-decoding capacity. In fact, all known ``explicit'' RS codes beyond the Johnson radius \cite{shangguan2020combinatorial,roth22,BDG24} either require doubly exponential-sized alphabets or work only 
for very restricted parameter regimes. An important open problem is the explicit construction of RS codes over polynomial-sized alphabets that are list-decodable beyond the Johnson radius or even achieve list-decoding capacity. In fact, even achieving an exponential alphabet size would be a breakthrough, since the time complexity would be polylogarithmic in the alphabet size, which is polynomial. Since our paper focuses on explicit folded RS codes that achieve list-decoding capacity, we hope some techniques and ideas presented here may be useful in constructing the desired RS codes.
\end{enumerate}
\section*{Acknowledgments}
The authors would like to thank Joshua Brakensiek, Mahdi Cheraghchi, Manik Dhar, Sivakanth Gopi, Zeyu Guo, Venkatesan Guruswami, Jiyou Li, and Chong Shangguan for many helpful discussions and suggestions. In particular, the authors would like to extend their special thanks to Venkatesan Guruswami for highlighting the connection between \cref{thm:lower_bound_root} and {\cite[Theorem 14]{guruswami2016explicit}}, which leads to more general results stated in \cref{append:generalize}. 

This work was initiated while the two authors were visiting the Simons Institute for the Theory of Computing, supported by DOE grant DE-SC0024124.  The authors would like to thank the institute for its support and
hospitality. Yeyuan Chen is partially supported by the National Science Foundation under Grants No.\ CCF-2107345 and CCF-2236931.
Zihan Zhang is supported by the National Science Foundation under Grants No.\ CCF-2440926.

{\bibliography{main}} 

\newcommand{\etalchar}[1]{$^{#1}$}
\begin{thebibliography}{KRZSW23}

\bibitem[AGL24a]{alrabiah2024ag}
Omar Alrabiah, Venkatesan Guruswami, and Ray Li.
\newblock {AG} codes have no list-decoding friends: Approaching the generalized
  {S}ingleton bound requires exponential alphabets.
\newblock {\em IEEE Transactions on Information Theory}, 2024.

\bibitem[AGL24b]{alrabiah2023randomly}
Omar Alrabiah, Venkatesan Guruswami, and Ray Li.
\newblock Randomly punctured {R}eed--{S}olomon codes achieve list-decoding
  capacity over linear-sized fields.
\newblock In {\em Proceedings of the 56th Annual ACM Symposium on Theory of
  Computing (STOC)}, pages 1458--1469, 2024.

\bibitem[Ahl73]{rudolf}
Rudolf Ahlswede.
\newblock Channel capacities for list codes.
\newblock {\em Journal of Applied Probability}, 10(4):824--836, 1973.

\bibitem[BDG24]{BDG24}
Joshua Brakensiek, Manik Dhar, and Sivakanth Gopi.
\newblock Improved field size bounds for higher order {MDS} codes.
\newblock {\em IEEE Transactions on Information Theory}, 70(10):6950--6960,
  2024.

\bibitem[BDGZ24]{brakensiek2024ag}
Joshua Brakensiek, Manik Dhar, Sivakanth Gopi, and Zihan Zhang.
\newblock {AG} codes achieve list decoding capacity over constant-sized fields.
\newblock In {\em Proceedings of the 56th Annual ACM Symposium on Theory of
  Computing (STOC)}, pages 740--751, 2024.

\bibitem[BGM23]{brakensiek2023generic}
Joshua Brakensiek, Sivakanth Gopi, and Visu Makam.
\newblock Generic {R}eed-{S}olomon codes achieve list-decoding capacity.
\newblock In {\em Proceedings of the 55th Annual ACM Symposium on Theory of
  Computing (STOC)}, pages 1488--1501, 2023.

\bibitem[BHKS23]{bhandari2023ideal}
Siddharth Bhandari, Prahladh Harsha, Mrinal Kumar, and Madhu Sudan.
\newblock Ideal-theoretic explanation of capacity-achieving decoding.
\newblock {\em IEEE Transactions on Information Theory}, 2023.

\bibitem[Bli86]{Blinovski86}
Volodia~M. Blinovski\u{\i}.
\newblock Bounds for codes in decoding by a list of finite length.
\newblock {\em Problemy Peredachi Informatsii}, 22(1):11--25, 1986.

\bibitem[Bli97]{Blinovsky1997}
Volodia~M. Blinovsky.
\newblock {\em Asymptotic Combinatorial Coding Theory}.
\newblock Springer Science \& Business Media, 1997.

\bibitem[BSKR09]{ben2009subspace}
Eli Ben-Sasson, Swastik Kopparty, and Jaikumar Radhakrishnan.
\newblock Subspace polynomials and limits to list decoding of {R}eed--{S}olomon
  codes.
\newblock {\em IEEE Transactions on Information Theory}, 56(1):113--120, 2009.

\bibitem[BST24]{berman2024explicit}
Amit Berman, Yaron Shany, and Itzhak Tamo.
\newblock Explicit subcodes of {R}eed-{S}olomon codes that efficiently achieve
  list decoding capacity.
\newblock {\em arXiv preprint arXiv:2401.15034}, 2024.

\bibitem[CPS99]{cai1999hardness}
Jin-Yi Cai, Aduri Pavan, and D.~Sivakumar.
\newblock On the hardness of permanent.
\newblock In {\em Proceedings of the 16th Annual Conference on Theoretical
  Aspects of Computer Science (STACS)}, pages 90--99. Springer, 1999.

\bibitem[CW07]{cheng2007list}
Qi~Cheng and Daqing Wan.
\newblock On the list and bounded distance decodability of {R}eed--{S}olomon
  codes.
\newblock {\em SIAM Journal on Computing}, 37(1):195--209, 2007.

\bibitem[DL12]{dvir2012subspace}
Zeev Dvir and Shachar Lovett.
\newblock Subspace evasive sets.
\newblock In {\em Proceedings of the 44th Annual ACM Symposium on Theory of
  Computing}, pages 351--358, 2012.

\bibitem[DW22]{dw22}
Dean Doron and Mary Wootters.
\newblock High-probability list-recovery, and applications to heavy hitters.
\newblock In {\em 49th International Colloquium on Automata, Languages, and
  Programming (ICALP)}, volume 229 of {\em LIPIcs}, pages 55:1--55:17. Schloss
  Dagstuhl - Leibniz-Zentrum f{\"{u}}r Informatik, 2022.

\bibitem[Eli57]{elias1957list}
Peter Elias.
\newblock List decoding for noisy channels.
\newblock 1957.

\bibitem[Eli91]{elias1991error}
Peter Elias.
\newblock Error-correcting codes for list decoding.
\newblock {\em IEEE Transactions on Information Theory}, 37(1):5--12, 1991.

\bibitem[FKS22]{ferber2022list}
Asaf Ferber, Matthew Kwan, and Lisa Sauermann.
\newblock List-decodability with large radius for {R}eed-{S}olomon codes.
\newblock {\em IEEE Transactions on Information Theory}, 68(6):3823--3828,
  2022.

\bibitem[FS12]{forbes2012identity}
Michael~A Forbes and Amir Shpilka.
\newblock On identity testing of tensors, low-rank recovery and compressed
  sensing.
\newblock In {\em Proceedings of the 44th Annual ACM Symposium on Theory of
  Computing}, pages 163--172, 2012.

\bibitem[GGG18]{gandikota2018np}
Venkata Gandikota, Badih Ghazi, and Elena Grigorescu.
\newblock {NP}-hardness of {R}eed--{S}olomon decoding, and the
  {P}rouhet--{T}arry--{E}scott problem.
\newblock {\em SIAM Journal on Computing}, 47(4):1547--1584, 2018.

\bibitem[GHKS24]{goyal2023fast}
Rohan Goyal, Prahladh Harsha, Mrinal Kumar, and Ashutosh Shankar.
\newblock Fast list-decoding of univariate multiplicity and folded
  {R}eed-{S}olomon codes.
\newblock {\em arXiv preprint arXiv:2311.17841}, 2023. To Appear in FOCS 2024.

\bibitem[GI01]{gi01}
Venkatesan Guruswami and Piotr Indyk.
\newblock Expander-based constructions of efficiently decodable codes.
\newblock In {\em Proceedings 42nd IEEE Symposium on Foundations of Computer
  Science (FOCS)}, pages 658--667. {IEEE} Computer Society, 2001.

\bibitem[GK16]{guruswami2016explicit}
Venkatesan Guruswami and Swastik Kopparty.
\newblock Explicit subspace designs.
\newblock {\em Combinatorica}, 36(2):161--185, 2016.

\bibitem[GL89]{goldreich1989hard}
Oded Goldreich and Leonid~A. Levin.
\newblock A hard-core predicate for all one-way functions.
\newblock In {\em Proceedings of the 21st Annual ACM Symposium on Theory of
  Computing (STOC)}, pages 25--32, 1989.

\bibitem[GLS{\etalchar{+}}24]{guo2022improved}
Zeyu Guo, Ray Li, Chong Shangguan, Itzhak Tamo, and Mary Wootters.
\newblock Improved list-decodability and list-recoverability of
  {R}eed--{S}olomon codes via tree packings.
\newblock {\em SIAM Journal on Computing}, 53(2):389--430, 2024.

\bibitem[GR05]{guruswami2005limits}
Venkatesan Guruswami and Atri Rudra.
\newblock Limits to list decoding {R}eed-{S}olomon codes.
\newblock In {\em Proceedings of the 37th Annual ACM Symposium on Theory of
  Computing}, pages 602--609, 2005.

\bibitem[GR06]{Guru06}
Venkatesan Guruswami and Atri Rudra.
\newblock Explicit capacity-achieving list-decodable codes.
\newblock In {\em Proceedings of the 38th Annual ACM Symposium on Theory of
  Computing}, page 1–10, 2006.

\bibitem[GR08]{guruswami2008explicit}
Venkatesan Guruswami and Atri Rudra.
\newblock Explicit codes achieving list decoding capacity: Error-correction
  with optimal redundancy.
\newblock {\em IEEE Transactions on Information Theory}, 54(1):135--150, 2008.

\bibitem[GRS00]{goldreich2000chinese}
Oded Goldreich, Dana Ron, and Madhu Sudan.
\newblock Chinese remaindering with errors.
\newblock {\em IEEE Transactions on Information Theory}, 46(4):1330--1338,
  2000.

\bibitem[GRS19]{guruswami2019essential}
Venkatesan Guruswami, Atri Rudra, and Madhu Sudan.
\newblock {\em Essential Coding Theory}.
\newblock 2019.
\newblock Draft available at
  \url{https://cse.buffalo.edu/faculty/atri/courses/coding-theory/book/}.

\bibitem[GRZ21]{guo2021efficient}
Zeyu Guo and Noga Ron-Zewi.
\newblock Efficient list-decoding with constant alphabet and list sizes.
\newblock In {\em Proceedings of the 53rd Annual ACM SIGACT Symposium on Theory
  of Computing}, pages 1502--1515, 2021.

\bibitem[GS98]{guruswami1998improved}
Venkatesan Guruswami and Madhu Sudan.
\newblock Improved decoding of {R}eed-{S}olomon and algebraic-geometric codes.
\newblock In {\em Proceedings 39th Annual Symposium on Foundations of Computer
  Science}, pages 28--37. IEEE, 1998.

\bibitem[GS01]{gs01}
Venkatesan Guruswami and Madhu Sudan.
\newblock Extensions to the {J}ohnson bound.
\newblock 2001.
\newblock URL: \url{https://api.semanticscholar.org/CorpusID:9865945}.

\bibitem[GST22a]{goldberg2022list}
Eitan Goldberg, Chong Shangguan, and Itzhak Tamo.
\newblock List-decoding and list-recovery of {R}eed--{S}olomon codes beyond the
  {J}ohnson radius for every rate.
\newblock {\em IEEE Transactions on Information Theory}, 69(4):2261--2268,
  2022.

\bibitem[GST22b]{gst22b}
Eitan Goldberg, Chong Shangguan, and Itzhak Tamo.
\newblock Singleton-type bounds for list-decoding and list-recovery, and
  related results.
\newblock In {\em IEEE International Symposium on Information Theory (ISIT)},
  pages 2565--2570. IEEE, 2022.

\bibitem[Gur01]{guthesis}
Venkatesan Guruswami.
\newblock {\em List decoding of error correcting codes}.
\newblock PhD thesis, Massachusetts Institute of Technology, 2001.
\newblock URL: \url{http://dspace.mit.edu/handle/1721.1/8700}.

\bibitem[Gur11]{gur11}
Venkatesan Guruswami.
\newblock Linear-algebraic list decoding of folded {R}eed-{S}olomon codes.
\newblock In {\em Proceedings of the 26th Annual IEEE Conference on
  Computational Complexity (CCC)}, pages 77--85. {IEEE} Computer Society, 2011.

\bibitem[GUV09]{guruswami2009Unbalanced}
Venkatesan Guruswami, Christopher Umans, and Salil Vadhan.
\newblock Unbalanced expanders and randomness extractors from
  {P}arvaresh--{V}ardy codes.
\newblock {\em Journal of the ACM}, 56(4):1--34, 2009.

\bibitem[GW13]{guruswami2013linear}
Venkatesan Guruswami and Carol Wang.
\newblock Linear-algebraic list decoding for variants of {R}eed--{S}olomon
  codes.
\newblock {\em IEEE Transactions on Information Theory}, 59(6):3257--3268,
  2013.

\bibitem[GX13]{guruswami2013list}
Venkatesan Guruswami and Chaoping Xing.
\newblock List decoding {R}eed-{S}olomon, algebraic-geometric, and {G}abidulin
  subcodes up to the {S}ingleton bound.
\newblock In {\em Proceedings of the 45th Annual ACM Symposium on Theory of
  Computing}, pages 843--852, 2013.

\bibitem[GX22]{guruswami2022optimal}
Venkatesan Guruswami and Chaoping Xing.
\newblock Optimal rate list decoding over bounded alphabets using
  algebraic-geometric codes.
\newblock {\em Journal of the ACM}, 69(2):1--48, 2022.

\bibitem[GXYZ24]{guo2024random}
Zeyu Guo, Chaoping Xing, Chen Yuan, and Zihan Zhang.
\newblock Random {G}abidulin codes achieve list decoding capacity in the rank
  metric.
\newblock {\em arXiv preprint arXiv:2404.13230}, 2024.
\newblock To appear in FOCS 2024.

\bibitem[GZ23]{GZ23}
Zeyu Guo and Zihan Zhang.
\newblock Randomly punctured {R}eed-{S}olomon codes achieve the list decoding
  capacity over polynomial-size alphabets.
\newblock In {\em 64th {IEEE} Annual Symposium on Foundations of Computer
  Science (FOCS)}, pages 164--176. {IEEE}, 2023.

\bibitem[HKT08]{hirs_book}
J.W.P. Hirschfeld, G.~Korchmáros, and F.~Torres.
\newblock {\em Algebraic Curves over a Finite Field}.
\newblock Princeton University Press, 2008.

\bibitem[Joh62]{johnson1962new}
Selmer Johnson.
\newblock A new upper bound for error-correcting codes.
\newblock {\em IRE Transactions on Information Theory}, 8(3):203--207, 1962.

\bibitem[Kop13]{kopparty13}
Swastik Kopparty.
\newblock Some remarks on multiplicity codes.
\newblock In Alexander Barg and Oleg~R. Musin, editors, {\em Discrete Geometry
  and Algebraic Combinatorics}, volume 625 of {\em Contemporary Mathematics}.
  American Mathematical Society, 2013.

\bibitem[Kop15]{kopparty15}
Swastik Kopparty.
\newblock List-decoding multiplicity codes.
\newblock {\em Theory of Computing}, 11(1):149--182, 2015.

\bibitem[Kra03]{krachkovsky2003reed}
Victor~Yu Krachkovsky.
\newblock {R}eed-{S}olomon codes for correcting phased error bursts.
\newblock {\em IEEE Transactions on Information Theory}, 49(11):2975--2984,
  2003.

\bibitem[KRZSW18]{kopparty2018improved}
Swastik Kopparty, Noga Ron-Zewi, Shubhangi Saraf, and Mary Wootters.
\newblock Improved decoding of folded {R}eed-{S}olomon and multiplicity codes.
\newblock In {\em 2018 IEEE 59th Annual Symposium on Foundations of Computer
  Science (FOCS)}, pages 212--223. IEEE, 2018.

\bibitem[KRZSW23]{kopparty2023improved}
Swastik Kopparty, Noga Ron-Zewi, Shubhangi Saraf, and Mary Wootters.
\newblock Improved list decoding of folded {R}eed-{S}olomon and multiplicity
  codes.
\newblock {\em SIAM Journal on Computing}, 52(3):794--840, 2023.

\bibitem[KSY14]{KSY14}
Swastik Kopparty, Shubhangi Saraf, and Sergey Yekhanin.
\newblock High-rate codes with sublinear-time decoding.
\newblock {\em Journal of the ACM}, 61(5):28:1--28:20, 2014.

\bibitem[KTS22]{kts22}
Itay Kalev and Amnon Ta-Shma.
\newblock Unbalanced expanders from multiplicity codes.
\newblock In {\em Approximation, Randomization, and Combinatorial Optimization.
  Algorithms and Techniques (APPROX/RANDOM)}, volume 245 of {\em LIPIcs}, pages
  12:1--12:14. Schloss Dagstuhl - Leibniz-Zentrum f{\"{u}}r Informatik, 2022.

\bibitem[LP20]{lp20}
Ben Lund and Aditya Potukuchi.
\newblock On the list recoverability of randomly punctured codes.
\newblock In {\em Approximation, Randomization, and Combinatorial Optimization.
  Algorithms (APPROX/RANDOM)}, volume 176 of {\em LIPIcs}, pages 30:1--30:11.
  Schloss Dagstuhl - Leibniz-Zentrum f{\"{u}}r Informatik, 2020.

\bibitem[MM03]{muir2003book}
Thomas Muir and William~Henry Metzler.
\newblock {\em A Treatise on the Theory of Determinants}.
\newblock Dover phoenix editions. Dover Publications, 2003.

\bibitem[PV05]{parvaresh2005correcting}
Farzad Parvaresh and Alexander Vardy.
\newblock Correcting errors beyond the {G}uruswami-{S}udan radius in polynomial
  time.
\newblock In {\em 46th Annual IEEE Symposium on Foundations of Computer Science
  (FOCS)}, pages 285--294. IEEE, 2005.

\bibitem[Res20]{resthesis}
Nicolas Resch.
\newblock {\em List-decodable codes:(randomized) constructions and
  applications}.
\newblock PhD thesis, Carnegie Mellon University, 2020.
\newblock URL:
  \url{http://reports-archive.adm.cs.cmu.edu/anon/anon/home/ftp/2020/CMU-CS-20-113.pdf}.

\bibitem[Rot22]{roth22}
Ron~M. Roth.
\newblock Higher-order {MDS} codes.
\newblock {\em IEEE Transactions on Information Theory}, 68(12):7798--7816,
  2022.

\bibitem[RS60]{reed1960polynomial}
Irving~S Reed and Gustave Solomon.
\newblock Polynomial codes over certain finite fields.
\newblock {\em Journal of the Society for Industrial and Applied Mathematics},
  8(2):300--304, 1960.

\bibitem[RT97]{rt97_mult}
M~Yu Rosenbloom and Michael~Anatol'evich Tsfasman.
\newblock Codes for the m-metric.
\newblock {\em Problemy Peredachi Informatsii}, 33(1):55--63, 1997.

\bibitem[RW14]{rudra2014every}
Atri Rudra and Mary Wootters.
\newblock Every list-decodable code for high noise has abundant near-optimal
  rate puncturings.
\newblock In {\em Proceedings of the 46th Annual ACM Symposium on Theory of
  Computing (STOC)}, pages 764--773, 2014.

\bibitem[RW18]{aw18}
Atri Rudra and Mary Wootters.
\newblock Average-radius list-recoverability of random linear codes.
\newblock In Artur Czumaj, editor, {\em Proceedings of the 29th Annual ACM-SIAM
  Symposium on Discrete Algorithms (SODA)}, pages 644--662. SIAM, 2018.

\bibitem[Sri24a]{srivastava2024continuous}
Shashank Srivastava.
\newblock Continuous optimization for decoding errors.
\newblock {\em arXiv preprint arXiv:2408.14652}, 2024.

\bibitem[Sri24b]{srivastava24}
Shashank Srivastava.
\newblock Improved list size for folded {R}eed-{S}olomon codes, 2024.
\newblock URL: \url{https://arxiv.org/abs/2410.09031}, \href
  {http://arxiv.org/abs/2410.09031} {\path{arXiv:2410.09031}}.

\bibitem[ST20]{shangguan2020combinatorial}
Chong Shangguan and Itzhak Tamo.
\newblock Combinatorial list-decoding of {R}eed-{S}olomon codes beyond the
  {J}ohnson radius.
\newblock In {\em Proceedings of the 52nd Annual ACM SIGACT Symposium on Theory
  of Computing (STOC)}, pages 538--551, 2020.

\bibitem[Sud97]{sudan1997decoding}
Madhu Sudan.
\newblock Decoding of {R}eed {S}olomon codes beyond the error-correction bound.
\newblock {\em Journal of Complexity}, 13(1):180--193, 1997.

\bibitem[Sud00]{sudan2000list}
Madhu Sudan.
\newblock List decoding: algorithms and applications.
\newblock {\em ACM SIGACT News}, 31(1):16--27, 2000.

\bibitem[Tam24]{tamo2024tighter}
Itzhak Tamo.
\newblock Tighter list-size bounds for list-decoding and recovery of folded
  {R}eed-{S}olomon and multiplicity codes.
\newblock {\em IEEE Transactions on Information Theory}, 2024.

\bibitem[TSU12]{ta2012better}
Amnon Ta-Shma and Christopher Umans.
\newblock Better condensers and new extractors from {P}arvaresh-{V}ardy codes.
\newblock In {\em 2012 IEEE 27th Conference on Computational Complexity}, pages
  309--315. IEEE, 2012.

\bibitem[Vad12]{vadhan2012pseudorandomness}
Salil~P. Vadhan.
\newblock Pseudorandomness.
\newblock {\em Foundations and Trends{\textregistered} in Theoretical Computer
  Science}, 7(1--3):1--336, 2012.

\bibitem[Woz58]{wozencraft1958list}
John~M Wozencraft.
\newblock List decoding.
\newblock {\em Quarterly Progress Report}, 48:90--95, 1958.

\end{thebibliography}

\listoffixmes
\appendix
\section{Univariate Multiplicity Codes Achieve Relaxed Generalized Singleton Bounds}\label{sec:mul}
In this appendix, following a similar approach to that in \cref{sec:frs}, we prove our main results on the list-decodability of univariate multiplicity codes (\cref{thm:main_mul} and \cref{cor:main_mul}).

\subsection{Properties of Hasse derivatives}
Recall that the $i$-th Hasse derivative $f^{(i)}(X)$ of a polynomial $f(X)$ is defined as the coefficient of $Z^i$ in the expansion $f(X+Z)=\sum_{i\in\mathbb{N}}f^{(i)}(X)Z^i$.
In this section, we provide some known facts about Hasse derivatives, which will be used in the proofs.
\begin{proposition}[Basic properties of Hasse derivatives, see \cite{hirs_book,kopparty13}]\label{pro:hasse}
Let $\mathbb{F}$ be a field and $i,j\in\mathbb{N}$. Let $f(X),g(X)\in\mathbb{F}[X]$ be two polynomials. Then we have the following:
\begin{itemize}
\item $f^{(i)}(X)+g^{(i)}(X)=(f+g)^{(i)}(X)$
\item $af^{(i)}(X)=(af)^{(i)}(X)$\text{ } for any $a\in\mathbb{F}$
\item $\left(f^{(i)}\right)^{(j)}(X)=\binom{i+j}{i}f^{(i+j)}(X)$
\item $\left(f\cdot g\right)^{(i)}(X)=\sum^i_{k=0}f^{(k)}(X)g^{(i-k)}(X)$
\end{itemize}
\end{proposition}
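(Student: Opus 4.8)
\textbf{Proof plan for \Cref{pro:hasse}.} The plan is to verify each identity directly from the generating-function definition $f(X+Z)=\sum_{i\in\mathbb{N}}f^{(i)}(X)Z^i$, treating $Z$ as a formal variable so that two polynomials in $\mathbb{F}_p[X,Z]$ are equal iff all their $Z$-coefficients match. For the first two bullets (additivity and $\mathbb{F}_p$-linearity), I would simply write $(f+g)(X+Z)=f(X+Z)+g(X+Z)$ and $(af)(X+Z)=a\cdot f(X+Z)$, then compare the coefficient of $Z^i$ on both sides; since expansion in $Z$ is itself $\mathbb{F}_p$-linear, the claimed identities $(f+g)^{(i)}=f^{(i)}+g^{(i)}$ and $(af)^{(i)}=af^{(i)}$ drop out immediately.

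For the product rule $\left(f\cdot g\right)^{(i)}(X)=\sum_{k=0}^i f^{(k)}(X)g^{(i-k)}(X)$, I would expand $(fg)(X+Z)=f(X+Z)\cdot g(X+Z)=\left(\sum_{a}f^{(a)}(X)Z^a\right)\left(\sum_{b}g^{(b)}(X)Z^b\right)$ and collect the coefficient of $Z^i$, which is exactly the convolution $\sum_{k=0}^i f^{(k)}(X)g^{(i-k)}(X)$. For the iterated-derivative identity $\left(f^{(i)}\right)^{(j)}(X)=\binom{i+j}{i}f^{(i+j)}(X)$, I would use a two-variable substitution: start from $f(X+Y+Z)$ and expand it in two ways. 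First, regard it as $g(X+Z)$ with $g(W):=f(W+Y)=\sum_m f^{(m)}(X)Y^m$ evaluated appropriately — more precisely, expand $f((X+Z)+Y)=\sum_j \big(f(\cdot+Y)\big)^{(j)}(X+Z)\,Z^j$ is awkward, so instead I would expand $f(X+Y+Z)=\sum_{n} f^{(n)}(X)(Y+Z)^n=\sum_n f^{(n)}(X)\sum_{a+b=n}\binom{n}{a}Y^aZ^b$, giving the coefficient of $Y^iZ^j$ as $\binom{i+j}{i}f^{(i+j)}(X)$. On the other hand, expanding first in $Y$ and then in $Z$: $f(X+Y+Z)=\sum_i f^{(i)}(X+Z)Y^i=\sum_i\Big(\sum_j \left(f^{(i)}\right)^{(j)}(X)Z^j\Big)Y^i$, so the coefficient of $Y^iZ^j$ equals $\left(f^{(i)}\right)^{(j)}(X)$. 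Equating the two gives the identity.

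Since the only tools needed are formal manipulation of polynomials over $\mathbb{F}_p$ and the binomial theorem, I do not expect any genuine obstacle; the one point requiring a little care is making sure the two-variable expansion argument for the iterated-derivative identity is set up so that both expansions are literally expansions of the same element of $\mathbb{F}_p[X,Y,Z]$, after which comparing the coefficient of $Y^iZ^j$ is unambiguous. Everything works over a prime field (indeed any commutative ring) without division, which is exactly why Hasse derivatives are the right notion in positive characteristic.
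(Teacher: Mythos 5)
Your proof is correct. The paper does not prove this proposition at all --- it is imported from the literature (the cited references) --- so there is no in-paper argument to compare against; your generating-function verification is the standard one. All four identities follow exactly as you say: linearity and the product rule by comparing $Z$-coefficients of $f(X+Z)+g(X+Z)$ and $f(X+Z)\cdot g(X+Z)$, and the composition rule by expanding the single element $f(X+Y+Z)\in\mathbb{F}_p[X,Y,Z]$ once via the binomial theorem in $(Y+Z)^n$ and once by iterated substitution, then matching the coefficient of $Y^iZ^j$. The one point you flag yourself --- that both expansions must be read as expansions of the same polynomial in $\mathbb{F}_p[X,Y,Z]$ --- is indeed the only place requiring care, and it holds because each expansion is obtained by substituting into the defining identity $f(U+V)=\sum_m f^{(m)}(U)V^m$, which is itself an identity of polynomials and hence preserved under any ring-homomorphism substitution. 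No division occurs anywhere, so the argument is valid in characteristic $p$ as required.
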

We also need the following fact to adapt the proof of \Cref{thm:lower_bound_root} to univariate multiplicity codes.
%, which establishes a relationship between multiplicity and common roots among a ``block.''
\begin{claim}\label{clm:two_derivative}
Let $\mathbb{F}$ be a field. Let $f(X)\in\mathbb{F}[X]$ and $\alpha\in\mathbb{F}$. If $f^{(i)}(\alpha)=0$ for $i\in\{0,1,\dots,s-1\}$,  then for any $j\in\{0,1,\dots,s-1\}$, $\alpha$ is a root of $f^{(j)}(X)$ with multiplicity at least $s-j$.
\end{claim}
\begin{proof}
Consider any $j\in\{0,\dots,s-1\}$. It suffices to prove that $(X-\alpha)^k$ divides $f^{(j)}(X)$ for all $k\in\{0,\dots, s-j\}$. We prove this claim by induction on $k$. 
The claim trivially holds for $k=0$.

Assume that the claim holds for some $k\in\{0,\dots,s-j-1\}$, and we now prove that it holds for $k+1$ as well.
By the induction hypothesis, $f^{(j)}(X)=(X-\alpha)^{k}g(X)$ for some polynomial $g(X)\in\mathbb{F}[X]$. By \Cref{pro:hasse}, we have
\begin{equation*}
\binom{j+k}{j}f^{(j+k)}(X)=\left(f^{(j)}\right)^{(k)}(X)=\left((X-\alpha)^kg(X)\right)^{(k)}=\sum^{k}_{t=0}\binom{k}{t}(X-\alpha)^tg^{(t)}(X)
\end{equation*}
Evaluating $X=\alpha$, the above equation implies
\begin{equation*}
\binom{j+k}{j}f^{(j+k)}(\alpha)=g(\alpha)
\end{equation*}
Since $f^{(j+k)}(\alpha)=0$, we have $g(\alpha)=0$. We conclude that $(X-\alpha)$ divides $g(X)$, and therefore $(X-\alpha)^{k+1}$ divides $f^{(j)}(X)$.
\end{proof}
\subsection{Geometric agreement hypergrahs and geometric polynomials based on univariate multiplicity codes}
In this section, to adapt to univariate multiplicity code case, we slightly modify necessary definitions of geometric agreement hypergraph and geometric polynomials, and state necessary lemmas.\begin{definition}[Geometric agreement hypergraph based on univariate multiplicity codes]\label{def:mult_agr_graph}
 Given an order-$s$ univariate multiplicity code $\mathsf{MULT}^{(s)}_{n,k}(\alpha_1,\dots,\alpha_n)\subseteq \left(\mathbb{F}^s_p\right)^n$ where $\alpha_1,\dots,\alpha_n$ are distinct, a received word $\vec{y}\in(\mathbb{F}^s_p)^n$, and $\ell$ vectors $\vec{f_1},\vec{f_2},\dots,\vec{f_\ell}\in \mathbb{F}_p^k$, we can define the \emph{geometric agreement hypergraph} $(\mathcal{V},\mathcal{E})$ with vertex set $\mathcal{V}:=\left\{\vec{f_1},\vec{f_2},\dots,\vec{f_\ell}\right\}$ and a tuple of $n$ hyperedges $\mathcal{E}:=\{e_1,e_2,\dots,e_n\}$, where
$e_i:=\left\{\vec{f_j}\in\mathcal{V} : \vec{y}[i]=\mathcal{M}(f_j)[i]\right\}.$
\end{definition}
\begin{definition}[Classical Wronskian, See \cite{guruswami2016explicit}] Let $f_1(X), \ldots, f_s(X) \in \mathbb{F}_p[X]$. We define their  Wronskian $W\left(f_1, \ldots, f_s\right)(X) \in \left(\mathbb{F}_p[X]\right)^{s\times s}$ by
$$
W\left(f_1, \ldots, f_s\right)(X) \stackrel{\text { def }}{=}\left(\begin{array}{ccc}
f_1(X) & \cdots & f_s(X) \\
f^{(1)}_1(X) & \cdots & f^{(1)}_s(X) \\
\vdots & \ddots & \vdots \\
f^{(s-1)}_1\left(X\right) & \cdots & f^{(s-1)}_s\left(X\right)
\end{array}\right).
$$  
\end{definition}
\begin{lemma}[Classical Wronskian criterion for linear independence, See \cite{guruswami2016explicit,muir2003book}]\label{lem:classical_wrons_ind} Given $k<\mathrm{char}(\mathbb{F}_q)$, and let $\vec{f_1}, \ldots, \vec{f_s} \in \mathbb{F}^k_q$. Then $\vec{f_1}, \ldots, \vec{f_s}$ are linearly independent over $\mathbb{F}_q$ if and only if the classical Wronskian determinant $\operatorname{det} W\left(f_1, \ldots, f_s\right)(X) \neq 0$.
\end{lemma}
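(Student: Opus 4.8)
\emph{Proof proposal.} The plan is to handle the two directions separately, with essentially all the work in the converse. The direction ``$\vec f_1,\dots,\vec f_s$ linearly dependent $\Rightarrow \det W\equiv 0$'' is immediate: from a nontrivial relation $\sum_j c_jf_j=0$ over $\mathbb{F}_q$, applying each Hasse derivative and using its $\mathbb{F}_q$-linearity (\cref{pro:hasse}) gives $\sum_j c_j f_j^{(i)}=0$ for every $i$, so the columns of $W(f_1,\dots,f_s)(X)$ are already $\mathbb{F}_q$-dependent, hence $\mathbb{F}_q(X)$-dependent, hence $\det W(f_1,\dots,f_s)(X)=0$.

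For the converse I would induct on $s$. Write $p=\mathrm{char}(\mathbb{F}_q)$; since $s$ polynomials of degree $<k$ cannot be independent once $s>k$, one may assume $s\le k<p$. The base case $s=1$ is trivial since $\det W(f_1)=f_1$. For $s\ge 2$, assume $\vec f_1,\dots,\vec f_s$ are $\mathbb{F}_q$-independent but, for contradiction, $\det W(f_1,\dots,f_s)(X)=0$ in $\mathbb{F}_q(X)$. By the inductive hypothesis $\det W(f_1,\dots,f_{s-1})(X)\neq 0$ (the top-left $(s-1)\times(s-1)$ block), so the first $s-1$ columns of $W(f_1,\dots,f_s)(X)$ are $\mathbb{F}_q(X)$-independent, and the vanishing of the full determinant yields $h_1,\dots,h_{s-1}\in\mathbb{F}_q(X)$ with $f_s^{(i)}=\sum_{j<s}h_j f_j^{(i)}$ for all $0\le i\le s-1$. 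The next step is to differentiate: applying the first Hasse derivative $(\cdot)^{(1)}$ (which is a genuine derivation on $\mathbb{F}_q(X)$) to the identity at index $i$, using $(g^{(i)})^{(1)}=(i+1)\,g^{(i+1)}$ from \cref{pro:hasse}, and subtracting $(i+1)$ times the identity at index $i+1$, one obtains $\sum_{j<s}h_j^{(1)}f_j^{(i)}=0$ for $0\le i\le s-2$; that is, $(h_1^{(1)},\dots,h_{s-1}^{(1)})$ lies in the kernel of the nonsingular matrix $W(f_1,\dots,f_{s-1})(X)$, forcing $h_j^{(1)}=0$, so each $h_j$ lies in $\mathbb{F}_q(X^p)$.

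The remaining step --- upgrading ``$h_j\in\mathbb{F}_q(X^p)$'' to ``$h_j$ is constant'' --- is the only genuinely delicate point, and it is exactly where the hypothesis $k<p$ is used (without it the criterion fails, e.g.\ for $1,X^p$ with the Hasse Wronskian). Here I would clear denominators in $f_s=\sum_{j<s}h_jf_j$ to get $Q(X^p)\,f_s(X)=\sum_{j<s}\widetilde P_j(X^p)\,f_j(X)$ with $Q,\widetilde P_j\in\mathbb{F}_q[Y]$, and then compare coordinates in the free $\mathbb{F}_q[X^p]$-module $\mathbb{F}_q[X]=\bigoplus_{t=0}^{p-1}\mathbb{F}_q[X^p]\,X^t$: since every $f_j$ has degree $<k\le p$, reading off the coefficient of $X^t$ for each $0\le t<k$ gives $Q(Y)\,a_{s,t}=\sum_{j<s}\widetilde P_j(Y)\,a_{j,t}$ in $\mathbb{F}_q[Y]$, where $a_{j,t}$ denotes the coefficient of $X^t$ in $f_j$. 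As $f_1,\dots,f_{s-1}$ are $\mathbb{F}_q$-independent, the matrix $(a_{j,t})$ has row rank $s-1$; choosing $s-1$ columns whose submatrix is invertible and solving the resulting $\mathbb{F}_q$-linear system over $\mathbb{F}_q[Y]$ forces $\widetilde P_j(Y)=\lambda_j\,Q(Y)$ for constants $\lambda_j\in\mathbb{F}_q$, whence $h_j=\lambda_j$ and $f_s=\sum_{j<s}\lambda_j f_j$, contradicting independence. This closes the induction and proves the lemma. The main obstacle, as flagged, is this last step: in positive characteristic flatness under $d/dX$ does not imply constancy, and one must exploit that the $f_j$ have degree strictly below $\mathrm{char}(\mathbb{F}_q)$.
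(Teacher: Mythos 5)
The paper does not prove \cref{lem:classical_wrons_ind} at all --- it is imported as a known result with citations to Guruswami--Kopparty and to Muir's treatise on determinants --- so there is no in-paper argument to compare against; what you have written is a self-contained proof, and I believe it is correct. The forward direction is indeed immediate from the $\mathbb{F}_q$-linearity of Hasse derivatives. In the converse, the induction is set up properly: the nonvanishing of the top-left $(s-1)\times(s-1)$ minor legitimately forces the last column to be an $\mathbb{F}_q(X)$-combination of the others, the identity $\left(g^{(i)}\right)^{(1)}=(i+1)g^{(i+1)}$ from \cref{pro:hasse} together with the Leibniz rule correctly yields $\sum_{j<s}h_j^{(1)}f_j^{(i)}=0$ for $0\le i\le s-2$, and the inductive nonsingularity of $W(f_1,\dots,f_{s-1})$ kills $h_j^{(1)}$. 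You are also right to flag the passage from $h_j'=0$ to $h_j$ constant as the crux in characteristic $p$: the kernel of $d/dX$ on $\mathbb{F}_q(X)$ is $\mathbb{F}_q(X^p)$, not $\mathbb{F}_q$, and your coefficient comparison in the free module $\bigoplus_{t=0}^{p-1}\mathbb{F}_q[X^p]X^t$, which uses $\deg f_j<k<p$ exactly once and exactly where it must be used, correctly forces $\widetilde P_j=\lambda_jQ$ via the rank-$(s-1)$ coefficient matrix. This is essentially the classical Wronskian argument from characteristic zero augmented with the module-splitting step needed in characteristic $p$; the counterexample $\{1,X^p\}$ you mention confirms the hypothesis $k<\mathrm{char}(\mathbb{F}_q)$ is not removable. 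One presentational nit: when you invoke the solvability of the $\mathbb{F}_q$-linear system over $\mathbb{F}_q[Y]$, it is worth saying explicitly that the chosen $(s-1)\times(s-1)$ submatrix has entries in $\mathbb{F}_q$, so its inverse does too, which is why the $\lambda_j$ land in $\mathbb{F}_q$ rather than merely in $\mathbb{F}_q(Y)$.
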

\begin{definition}[Geometric polynomial based on classical Wronskians]
    Given $L$ non-zero vectors $\vec{f}_1,\vec{f}_2,\dots,\vec{f}_L\in\mathbb{F}_p^k$ such that 
$\dim_{\mathbb{F}_p}\left(\mathrm{Span}_{\mathbb{F}_p}\left\{\vec{f}_1,\vec{f}_2,\dots,\vec{f}_L\right\}\right)=\ell\in[L]$. Then we define our geometric polynomial
$\widetilde{V}_{\left\{\vec{f}_i\right\}_{i\in L}}(X)$ as the following monic polynomial
\[\lambda_{i_1,i_2,\dots,i_\ell}\cdot\det W(f_{i_1},f_{i_2},\dots,f_{i_\ell})(X),\]
where $\lambda_{i_1,i_2,\dots,i_\ell}\in\mathbb{F}_p^\times$, $i_{j}\in[L]$ for all $j\in[\ell]$, and $\left\{\vec{f}_{i_1},\vec{f}_{i_2},\dots,\vec{f}_{i_\ell}\right\}$ forms a $\mathbb{F}_p$-basis of the space $\mathrm{Span}_{\mathbb{F}_p}\left\{\vec{f}_1,\vec{f}_2,\dots,\vec{f}_L\right\}$.
\end{definition}
\begin{lemma}\label{lem:classical_geometric_poly}
  Geometric polynomials based on classical Wronskians are well-defined.  
\end{lemma}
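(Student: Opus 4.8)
\textbf{Proof proposal for \cref{lem:classical_geometric_poly}.}
The plan is to mimic verbatim the proof of its folded counterpart \cref{lem:geometric_poly}, replacing the folded Wronskian $W_\gamma$ by the classical Wronskian $W$ and the linear-independence criterion \cref{lem:wrons_ind} by \cref{lem:classical_wrons_ind}. Write $V=\mathrm{Span}_{\mathbb{F}_p}\{\vec{f}_1,\dots,\vec{f}_L\}$, which has dimension $\ell$. First I would observe that the monic normalization in the definition is unambiguous: since any $\ell$ vectors forming an $\mathbb{F}_p$-basis of $V$ are linearly independent, \cref{lem:classical_wrons_ind} guarantees that $\det W(f_{i_1},\dots,f_{i_\ell})(X)$ is a nonzero polynomial, hence has a well-defined nonzero leading coefficient, and $\lambda_{i_1,\dots,i_\ell}$ is forced to be its inverse. (Here we use that the parameters are valid, in particular $k<\mathrm{char}(\mathbb{F}_p)=p$, so that \cref{lem:classical_wrons_ind} applies.) So it remains to check independence of the choice of basis.

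The key step is the behavior of the classical Wronskian under a change of basis. Let $\{\vec{u}_i\}_{i\in[\ell]}$ and $\{\vec{u}'_i\}_{i\in[\ell]}$ be two $\mathbb{F}_p$-bases of $V$, and let $A=(A_{ti})\in\mathbb{F}_p^{\ell\times\ell}$ be the (non-singular) change-of-basis matrix with $u_i=\sum_{t\in[\ell]}A_{ti}\,u'_t$ for each $i$. By the $\mathbb{F}_p$-linearity of the Hasse derivative (the first two bullets of \cref{pro:hasse}), for every $j\in\{0\}\cup[s-1]$ we have $u_i^{(j)}(X)=\sum_{t\in[\ell]}A_{ti}\,(u'_t)^{(j)}(X)$. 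Reading this off entrywise on the Wronskian matrices (whose $(j{+}1,i)$-entry is $u_i^{(j)}(X)$) gives the matrix identity $W(u_1,\dots,u_\ell)(X)=W(u'_1,\dots,u'_\ell)(X)\,A$ over the polynomial ring $\mathbb{F}_p[X]$.

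Taking determinants yields $\det W(u_1,\dots,u_\ell)(X)=\det A\cdot \det W(u'_1,\dots,u'_\ell)(X)$, and $\det A\in\mathbb{F}_p^\times$ because $A$ is non-singular. Hence the two determinants differ by a nonzero scalar in $\mathbb{F}_p$, so after dividing each by its leading coefficient to make it monic they coincide; this shows $\widetilde{V}_{\{\vec{f}_i\}_{i\in L}}(X)$ does not depend on the chosen basis of $V$, completing the proof. I do not expect any genuine obstacle here: the only point requiring care is ensuring the hypothesis $k<p$ needed for \cref{lem:classical_wrons_ind}, which holds under the standing parameter assumptions for univariate multiplicity codes, and the routine verification that the Hasse-derivative linearity transfers column operations on vectors to matrix multiplication on Wronskians.
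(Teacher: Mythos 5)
Your proposal is correct and follows essentially the same route as the paper's own (very terse) proof: both use the $\mathbb{F}_p$-linearity of Hasse derivatives from \cref{pro:hasse} to obtain the change-of-basis identity $W(u_1,\dots,u_\ell)=W(u'_1,\dots,u'_\ell)A$, then take determinants and invoke \cref{lem:classical_wrons_ind}. Your write-up simply fills in the details (including the $k<p$ hypothesis) that the paper leaves implicit.
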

\begin{proof}
Let $V=\mathrm{Span}_{\mathbb{F}_p}\left\{\vec{f}_1,\vec{f}_2,\dots,\vec{f}_L\right\}$, for any two basis $\{\vec{u}_i\}_{i\in[\ell]},\{\vec{u}'_i\}_{i\in[\ell]}$ of $V$, from \Cref{pro:hasse}, there exists a non-singular matrix $A\in\mathbb{F}^{\ell\times\ell}_p$ satisfying $W(u_1,\dots,u_{\ell})=W(u'_1,\dots,u'_{\ell})A$. Using \Cref{lem:classical_wrons_ind}, the proof is similar to \Cref{lem:geometric_poly}.
\end{proof}
\begin{theorem}[{Alternatively stated in \cite[Theorem 17]{guruswami2016explicit}}]\label{thm:mult_lower_bound_root}
Given $L$ distinct non-zero polynomials $f_1,\dots,f_L\in\mathbb{F}^k_p$ with degree at most $k-1$. Let $(\mathcal{V},\mathcal{E})$ be a geometric agreement hypergraph over $\mathcal{V}=\left\{0,\vec{f}_1,\dots,\vec{f}_L\right\}$ where $\mathcal{E}=\big\{e_1,\dots,e_n\subseteq \mathcal{V}\big\}$, it follows that $P(X)=\widetilde{V}_{\{f_i\}_{i\in L}}(X)$ has at least $(s-\ell+1)\sum_{i=1}^n\widetilde{\dim}_{\mathbb{F}_p}\left(e_i\right)$ roots counting multiplicity where $\ell=\dim\left(\mathrm{Span}_{\mathbb{F}_p}\left\{\vec{f}_1,\vec{f}_2,\dots,\vec{f}_L\right\}\right)$.
\end{theorem}
\begin{proof}
Similar to \Cref{thm:lower_bound_root}, given any $e_i,i\in[n]$, let's focus on a fixed $e_i$ with $\widetilde{\dim}_{\mathbb{F}_p}\left(e_i\right)=t$. It suffices to prove that $\alpha_i$ is a root of $\widetilde{V}_{\{f_i\}_{i\in L}}(X)$ with multiplicity $(s-\ell+1)t$.

There must exist $\vec{g}_0,\dots,\vec{g}_{t}\in e_i$ such that $\left\{\vec{h}_1=\vec{g}_1-\vec{g}_0,\dots,\vec{h}_t=\vec{g}_t-\vec{g}_0\right\}$ are linear independent. Let $V=\Span_{\mathbb{F}_p}\left\{\vec{f}_1,\dots,\vec{f}_L\right\}$, we can arbitrarily extend $\left\{\vec{h}_u\right\}_{u\in [t]}$ to a basis $\left\{\vec{h}_u\right\}_{u\in[\ell]}$ of $V$.

By \Cref{lem:classical_geometric_poly}, $Q(X)=\det W(h_1,\dots,h_{\ell})(X)=\lambda \widetilde{V}_{\left\{\vec{f}_1,\dots,\vec{f}_L\right\}}(X)$ for some $\lambda\in\mathbb{F}^\times_p$. Then, from \Cref{def:mult_agr_graph} and \Cref{pro:hasse}, we know for any $v\in[t],r\in\{0\}\cup[s-1]$, it follows that $h^{(r)}_v(\alpha_i)=g^{(r)}_v(\alpha_i)-g^{(r)}_0(\alpha_i)=0$. By \Cref{clm:two_derivative}, it implies $(X-\alpha)^{s-\ell+1}$ divides $h^{(u)}_v(X)$ for all $u\in\{0\}\cup[\ell-1]$. Since the $(u,v)$-entry of $ W(h_1,\dots,h_{\ell})(X)$ is $h^{(u-1)}_v(X)$, we know that $(X-\alpha_i)^{s-\ell+1}$ is a factor for each entry in the first $t$ columns of $W(h_1,\dots,h_{\ell})(X)$. Consider computing $\det W(h_1,\dots,h_{\ell})(X)$ by expanding the matrix along the first $t$ columns, we can see that $\alpha_i$ is a root of $\det W(h_1,\dots,h_{\ell})(X)$ with multiplicity at least $(s-\ell+1)t$, which also implies its same multiplicity in $P(X)$. 
\end{proof}
\subsection{Putting it together: based on \cref{sec:frs}}
With the above adaptions to univariate multiplicity codes, we can use basically the same proof as in \Cref{thm:main} to get the list-decodability of univariate multiplicity codes.
\begin{theorem}[Restatement of \Cref{thm:main_mul}]
Let $p$ be a prime number. For any list size $L\ge1$, if $s,n,k\in\mathbb{N}^+$ and $\alpha_1,\alpha_2,\dots,\alpha_n\in\F_p$ are distinct, then $\mathsf{MULT}_{n,k}^{(s)}(\alpha_1,\alpha_2,\dots,\alpha_n)$ over the alphabet $\mathbb{F}_p^s$ is $\left(\frac{L}{L+1}\left(1-\frac{sR}{s-L+1}\right),L\right)$ list-decodable.
\end{theorem}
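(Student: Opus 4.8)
The plan is to transcribe the argument of \cref{sec:frs} almost verbatim, using the classical Wronskian in place of the folded one and \cref{thm:mult_lower_bound_root} in place of \cref{thm:lower_bound_root}. The only place where the Hasse-derivative structure really matters is the root-counting step, and that has already been isolated into \cref{clm:two_derivative} and \cref{thm:mult_lower_bound_root}; everything downstream is purely linear-algebraic and combinatorial, so it carries over without change. In particular, I would first note that \cref{thm:bound_loss} holds verbatim over $\mathbb{F}_p^k$: its statement and proof (through \cref{lem:dg} and the notion of flats) never mention the code, only affine dimensions of finite vector sets over an arbitrary field. Thus for any geometric agreement hypergraph over $\mathcal{V}=\{0,\vec{f}_1,\dots,\vec{f}_L\}\subset\mathbb{F}_p^k$ all of whose proper sub-hypergraphs are light, we still have $\sum_{i\in[n]}\mathrm{Loss}(e_i)\le\frac{(L-\ell)k}{s-L+1}$ with $\ell=\dim_{\mathbb{F}_p}\Span_{\mathbb{F}_p}\{\vec{f}_1,\dots,\vec{f}_L\}$.

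Next I would prove the exact analogue of \cref{lem:non_distinct} in the setting of \cref{def:mult_agr_graph}: if $\vec{f}_1,\dots,\vec{f}_m\in\mathbb{F}_p^k$ with $m\ge2$ and the geometric agreement hypergraph $(\mathcal{V},\mathcal{E})$ over $\mathcal{V}=\{\vec{f}_1,\dots,\vec{f}_m\}$ satisfies $\mathrm{wt}(\mathcal{V},\mathcal{E})\ge\frac{(m-1)k}{s-m+2}$, then the $\vec{f}_i$ are not pairwise distinct. The induction on $m$ keeps the same shape. For the base case $m=2$, the hypothesis $\mathrm{wt}(\mathcal{V},\mathcal{E})\ge\frac{k}{s}$ together with $\mathrm{wt}(e_i)\in\{0,1\}$ gives at least $k/s$ hyperedges $e_i$ of weight $1$; on each such $e_i$ we have $f_1^{(r)}(\alpha_i)=f_2^{(r)}(\alpha_i)$ for all $r\in\{0\}\cup[s-1]$, so \cref{clm:two_derivative} applied to $g:=f_1-f_2$ makes $\alpha_i$ a root of $g$ of multiplicity $s$; since these $\alpha_i$ are distinct, $g$ (of degree $<k$) has at least $k$ roots counting multiplicity, hence $g\equiv0$. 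For the inductive step, if some proper $\mathcal{H}\subsetneq\mathcal{V}$ with $|\mathcal{H}|\ge2$ is heavy we recurse on $(\mathcal{H},\mathcal{E}|_{\mathcal{H}})$; otherwise, after translating so that $\vec{f}_1=0$ (which preserves the hypergraph, exactly as in the proof of \cref{lem:non_distinct}), set $P(X):=\det W(f_2,\dots,f_m)(X)$, which by \cref{lem:classical_wrons_ind} is a nonzero polynomial of degree at most $\ell(k-1)$ where $\ell=\dim_{\mathbb{F}_p}\Span_{\mathbb{F}_p}\{\vec{f}_2,\dots,\vec{f}_m\}$. Combining \cref{thm:mult_lower_bound_root}, the inequality $\widetilde{\dim}_{\mathbb{F}_p}(e_i)\ge(|e_i|-1)-\mathrm{Loss}(e_i)$, and the transferred \cref{thm:bound_loss}, the number of roots of $P$ counting multiplicity is at least $(s-\ell+1)\big(\frac{(m-1)k}{s-m+2}-\frac{(m-1-\ell)k}{s-m+2}\big)\ge\ell k$, which contradicts $\deg P\le\ell(k-1)$.

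Finally, \cref{thm:main_mul} follows just as \cref{thm:main} did. Suppose for contradiction that there are $L+1$ pairwise distinct polynomials $f_1,\dots,f_{L+1}\in\mathbb{F}_p[x]_{<k}$ whose codewords $\mathcal{M}(f_i)$ are each within relative distance $\frac{L}{L+1}\big(1-\frac{sR}{s-L+1}\big)$ of some $y\in(\mathbb{F}_p^s)^n$. Using $sR=k/n$, each $f_i$ agrees with $y$ on a set $I_i$ with $|I_i|\ge\frac{n}{L+1}+\frac{Lk}{(L+1)(s-L+1)}$, so the geometric agreement hypergraph over $\{\vec{f}_1,\dots,\vec{f}_{L+1}\}$ has weight at least $\big(\sum_{i}|I_i|\big)-n\ge\frac{Lk}{s-L+1}=\frac{((L+1)-1)k}{s-(L+1)+2}$; the multiplicity analogue of \cref{lem:non_distinct} then forces two of the $f_i$ to coincide, a contradiction. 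The average-radius version needs only the same modification as in the folded RS case, replacing per-coordinate agreement counts by their averages over the list, and \cref{cor:main_mul} follows from \cref{thm:main_mul} by the computation used to derive \cref{cor:exact_fold}. I expect no real difficulty beyond the multiplicity bookkeeping already packaged in \cref{clm:two_derivative} and \cref{thm:mult_lower_bound_root} --- namely that vanishing of all $s$ Hasse derivatives of each $h_v$ at $\alpha_i$ forces $(X-\alpha_i)^{s-\ell+1}$ to divide every entry in the first $t$ columns of $W(h_1,\dots,h_\ell)(X)$, and hence to divide the determinant to multiplicity $(s-\ell+1)t$. Since those statements are already in hand, the rest is a faithful copy of \cref{sec:frs} with $\mathbb{F}_q$ replaced by $\mathbb{F}_p$, $\mathcal{C}$ by $\mathcal{M}$, the folded Wronskian by the classical Wronskian, and \cref{thm:lower_bound_root} by \cref{thm:mult_lower_bound_root}.
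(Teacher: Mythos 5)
Your proposal is correct and follows essentially the same route as the paper, which itself proves this theorem by declaring that the folded-RS argument of \cref{sec:frs} carries over verbatim once the classical Wronskian, \cref{clm:two_derivative}, and \cref{thm:mult_lower_bound_root} replace their folded counterparts; you have simply written out the transfer (including the field-agnostic nature of \cref{thm:bound_loss} and the multiplicity-$s$ base case via \cref{clm:two_derivative}) in more detail than the paper does.
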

%\subsection{Proof overview}
\begin{corollary}[Restatement of \Cref{cor:main_mul}]\label{cor:exact_mult}
    For any $\epsilon>0$, $N>k\ge1$, prime $p>k$, $L>\frac{1-R-\epsilon}{\epsilon}$,  $s>\frac{L(L-1)R}{\epsilon L-(1-R-\epsilon)}+L-1$ where $s|N$ and $R=\frac{k}{N}$. Let 
$n=\frac{N}{s}$ and  $\alpha_1,\alpha_2,\dots,\alpha_n\in\F_p$ be distinct. Then the order-$s$ univariate multiplicity code $\mathsf{MULT}_{n,k}^{(s)}(\alpha_1,\alpha_2,\dots,\alpha_n)$ over the alphabet $\mathbb{F}_p^s$ is $\big(1-R-\epsilon,L\big)$ list-decodable.
\iffalse
Therefore, by choosing different 
folding parameter $s$, the order-$s$ univariate multiplicity code $\mathsf{MULT}_{n,k}^{(s)}(\alpha_1,\alpha_2,\dots,\alpha_n)$ is
\begin{itemize}
\item $(1-R-\epsilon,\lfloor\frac{1-R}{\epsilon}\rfloor)$ list-decodable, when $s>\widetilde{s}$, where $\widetilde{s}$ is some constant only depending on $R,\epsilon$
\item $(1-R-\epsilon,\lceil\frac{1-R}{\epsilon}\rceil)$ list-decodable, when $s=\Theta(\frac{1}{\epsilon^3})$.
\item $(1-R-\epsilon,\lceil\frac{1}{\epsilon}\rceil)$ list-decodable, when $s=\Theta(\frac{1}{\epsilon^2})$.
\end{itemize}
\fi
\end{corollary}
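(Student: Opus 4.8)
The plan is to mirror exactly the argument used for \Cref{cor:exact_fold}, since by this point in the paper the entire machinery of geometric agreement hypergraphs, geometric polynomials (now based on classical rather than folded Wronskians), the $\mathrm{Loss}$ function and the discrete-geometric \Cref{lem:dg} has been transported to the univariate multiplicity setting via \Cref{thm:mult_lower_bound_root}, \Cref{lem:classical_geometric_poly} and \Cref{clm:two_derivative}. First I would state and prove the multiplicity analogue of \Cref{lem:non_distinct}: given $m\ge 2$ vectors $\vec f_1,\dots,\vec f_m\in\mathbb F_p^k$ and a geometric agreement hypergraph $(\mathcal V,\mathcal E)$ over them (in the sense of \Cref{def:mult_agr_graph}), if $\mathrm{wt}(\mathcal V,\mathcal E)\ge\frac{(m-1)k}{s-m+2}$ then the $\vec f_i$ cannot be pairwise distinct. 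The proof is the same induction on $m$ as in \Cref{lem:non_distinct}: the base case $m=2$ uses \Cref{clm:two_derivative} to see that an agreeing coordinate forces $\alpha_i$ to be a root of $f_1-f_2$ with multiplicity $s$, so $ns\ge k$ such roots force $f_1=f_2$; the inductive step either restricts to a heavy proper flat or, after translating by $\vec f_1$, applies \Cref{thm:mult_lower_bound_root} together with \Cref{thm:bound_loss} (whose statement and proof are purely about affine dimensions of hyperedges and are therefore field- and code-agnostic, so they apply verbatim) to conclude that the classical-Wronskian geometric polynomial $\widetilde V_{\{f'_i\}}(X)$ has more than $\ell(k-1)$ roots counting multiplicity, hence vanishes, contradicting $k<\mathrm{char}(\mathbb F_p)=p$ via \Cref{lem:classical_wrons_ind}.

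Next I would prove the restated \Cref{thm:main_mul} by the same contradiction argument as \Cref{thm:main}: assuming $L+1$ pairwise distinct $f_1,\dots,f_{L+1}\in\mathbb F_p[x]_{<k}$ all within relative distance $\frac{L}{L+1}\bigl(1-\frac{sR}{s-L+1}\bigr)$ of some $y\in(\mathbb F_p^s)^n$, each $f_i$ agrees with $y$ on a set $I_i$ with $|I_i|\ge\frac{n}{L+1}+\frac{Lk}{(L+1)(s-L+1)}$, so the associated hypergraph has weight $\mathrm{wt}(\mathcal V,\mathcal E)\ge\bigl(\sum_i|I_i|\bigr)-n\ge\frac{Lk}{s-L+1}$, and the multiplicity analogue of \Cref{lem:non_distinct} yields the contradiction. (Here one needs $p>k$, supplied by the corollary's hypothesis, so that \Cref{lem:classical_wrons_ind} and \Cref{clm:two_derivative} are valid; this is the only genuine difference from the folded RS case, where $q>sn>k$ sufficed.) Finally, the corollary itself follows by the identical calculation as in the proof of \Cref{cor:exact_fold}: for $L>\frac{1-R-\epsilon}{\epsilon}$ and $s>\frac{L(L-1)R}{\epsilon L-(1-R-\epsilon)}+L-1$ one checks $s\ge L$ and
\[
\frac{L}{L+1}\left(1-\frac{sR}{s-L+1}\right)>\frac{L}{L+1}\left(1-R-\frac{\epsilon L-(1-R-\epsilon)}{L}\right)=1-R-\epsilon,
\]
so the code is $(1-R-\epsilon,L)$ list-decodable; in particular the average-radius version follows since the hypergraph-weight bound only uses the total agreement count $\sum_i|I_i|$, which is lower-bounded identically under the average-radius hypothesis.

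I do not expect a serious obstacle here — the whole point of Appendix~\ref{sec:mul} is that \Cref{clm:two_derivative} replaces the ``$s$ shifts $\alpha_i,\gamma\alpha_i,\dots,\gamma^{s-1}\alpha_i$ are roots of $f_1-f_2$'' phenomenon of folded RS by ``$\alpha_i$ is a root of $f_1-f_2$ of multiplicity $s$'', and \Cref{thm:mult_lower_bound_root} replaces \Cref{thm:lower_bound_root}, after which \Cref{thm:bound_loss} and \Cref{lem:dg} carry over unchanged. The one point requiring a little care is bookkeeping the characteristic restriction: the claims cited need $k<p$, so one must note that the corollary's hypothesis $p>k$ (and $n\le p$ for distinct evaluation points to exist) is exactly what makes the transported lemmas apply; everything else is a transcription of \cref{sec:frs}.
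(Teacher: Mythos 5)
Your proposal is correct and follows exactly the route the paper intends: the appendix gives no written proof for \cref{thm:main_mul} or \cref{cor:exact_mult}, merely asserting that one transports the argument of \cref{sec:frs} using \cref{clm:two_derivative}, \cref{lem:classical_wrons_ind}, and \cref{thm:mult_lower_bound_root} in place of their folded-Wronskian counterparts, with \cref{thm:bound_loss} and \cref{lem:dg} applying verbatim, and then repeats the parameter calculation of \cref{cor:exact_fold}. Your write-up is in fact more explicit than the paper's, and you correctly flag the one genuine point of care, namely that the hypothesis $p>k$ is what licenses the classical Wronskian criterion and \cref{clm:two_derivative}.
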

\section{Subspace Designable Codes Exhibit Near-Optimal List-Decodability}\label{append:generalize}
In this appendix, following procedures almost the same as those outlined in \cref{sec:frs} and \cref{sec:mul}, we aim to 
establish a more general result that shows explicit ``strong subspace designable codes'' achieve list deocoding capacity. The core idea is to use a more generalized framework called ``subspace design'' built in \cite{guruswami2016explicit} to replace the ``geometric polynomials'' in the previous proofs, and uses our new techniques developed in \cref{sec:loss} to reduce the list size. Using the ``subspace design'' language, this section also gives a more modular presentation of our proofs in \cref{sec:frs,sec:mul}. First, we need to introduce the notion ``strong subspace design.'' 
\begin{definition}[{Strong Subspace Design, \cite[Definition 3]{guruswami2016explicit}}] A collection $\mathcal{H}$ of $\mathbb{F}$-linear subspaces $H_1,\dots,H_n\subseteq \mathbb{F}^k$ is called an $(\ell, A)$ strong subspace design over $\mathbb{F}$, if for every $\mathbb{F}$-linear space $W\subseteq \mathbb{F}^k$ of dimension $\ell$, we have
\[\sum_{i=1}^n\dim_{\mathbb{F}}\left(H_i\cap W\right)\leq A.\].
\end{definition}
For any $\mathbb{F}$-linear code over $\mathbb{F}^s,s\ge 1$ with block length $n$, we can actually define a corresponding collection of $n$ subspaces within its message space. In our generalization, the parameters these subspaces achieve as `strong subspace design' will play a role as the degree upper bound of non-zero geometric polynomials in the previous proofs. From this relation between codes and strong subspace design, we define the notion `strong subspace designable codes' below.
\begin{definition}[Strong Subspace Designable Codes]\label{def:designcodes} For any $s\ge 1$, given an $\mathbb{F}$-linear code $C\subseteq \left(\mathbb{F}^s\right)^n$ with message length $k$ and block length $n$, we use $\mathcal{C}\colon \mathbb{F}^k\to\left(\mathbb{F}^s\right)^n$ to denote the $\mathbb{F}$-linear encoder of $C$. For any $i\in[n]$, let $H_i\subseteq \mathbb{F}^k$ denote the $\mathbb{F}$-linear subspace such that for any message $f\in\mathbb{F}^k$, there is $\mathcal{C}(f)_i=0$ iff $f\in H_i$. We say $C$ is a $(\ell,A)$-strong subspace designable code if $\mathcal{H}:=\{H_1,\dots,H_n\}$ is an $(\ell, A)$-strong subspace design.
\end{definition}
Then, the following theorem serves as the counterpart of \cref{thm:lower_bound_root} (\cref{thm:mult_lower_bound_root}) in our generalization. It uses the affine dimension to build a lower bound of the interested quantity. We need to recall the definition of geometric agreement  hypergraph in \cref{def:agr_graph}. Here we need this notion based on our interested code $C$, which means the hyperedge $e_i,i\in[n]$ here implies the $C$-codewords of  polynomials in $e_i$ match on the $i$-th index.
\begin{theorem}\label{thm:lowerbounddesign}
Let $C\subseteq (\mathbb{F}^s)^n$ and $H_1,\dots,H_n\subseteq \mathbb{F}^k$ be the code and linear subspace defined as in \Cref{def:designcodes}. Given $L$ distinct non-zero vectors $\vec{f}_1,\vec{f}_2,\dots,\vec{f}_L\in\mathbb{F}^k$. Let $(\mathcal{V},\mathcal{E})$ be a geometric agreement hypergraph based on $C$ over $\mathcal{V}=\left\{0,\vec{f}_1,\dots,\vec{f}_L\right\}$ where $\mathcal{E}=\big\{e_1,\dots,e_n\subseteq \mathcal{V}\big\}$, it follows that $\sum_{i=1}^n\dim_{\mathbb{F}}\left(H_i\cap W\right)\ge\sum_{i=1}^n\widetilde{\dim}_{\mathbb{F}}(e_i)$ where $W=\mathrm{Span}_{\mathbb{F}}\left\{\vec{f}_1,\vec{f}_2,\dots,\vec{f}_L\right\}$.
\end{theorem}
\begin{proof}
Given any $i\in[n]$, our goal is to prove $\dim_{\mathbb{F}}\left(H_i\cap W\right)\ge \widetilde{\dim}_{\mathbb{F}}(e_i)$. Let $\ell=\widetilde{\dim}_{\mathbb{F}}(e_i)>0$, by \cref{def:agr_graph} and \cref{fact:transform_tildedim}, there exists $\mathbb{F}$-linear independent vectors  $\vec{g}_1:=\vec{f}_{i_1}-\vec{f}_{i_0},\dots,\vec{g}_{\ell}:=\vec{f}_{i_{\ell}}-\vec{f}_{i_0}$ such that for any $j\in[\ell]$, $\mathcal{C}(\vec{g}_j)_i=0$. Therefore, $\vec{g_1},\dots,\vec{g_{\ell}}$ are $\mathbb{F}$-linear independent within $H_i\cap W$, which implies $\dim_{\mathbb{F}}\left(H_i\cap W\right)\ge \ell=\widetilde{\dim}_{\mathbb{F}}(e_i)$.   
\end{proof}
Now it's time to put them together. We will use the similar proof strategy as in \cref{sec:puttogether} to derive the generalized result \cref{thm:generalize}. \cref{thm:bound_loss} used in the following proof is the central technical lemma that are different from all previous work. It analyzes geometric structures of polynomials interested and derives a lower bound on the sum of affine dimensions of hyperedges. The details are illustrated in \cref{sec:loss}. 
\begin{lemma}\label{lem:non-distinctdesign}
Given a $\F$-linear code $C\subseteq (\mathbb{F}^s)^n$ of length $n$ and rate $R={k}/{sn}$. Assume that $C\subseteq (\mathbb{F}^s)^n$ is a $\left(\ell, \frac{\ell{(k-1)}}{s-\ell+1}\right)$-strong subspace designable code for all $\ell\leq {s}$. Then,
for any $2\leq m\leq s+1$ vectors $\vec{f}_1,\dots,\vec{f}_m\in\mathbb{F}^k$ and a 
$C$-based geometric agreement hypergraph $(\mathcal{V},\mathcal{E})$ over $\mathcal{V}=\left\{\vec{f}_1,\dots,\vec{f}_m\right\}$. If $\mathrm{wt}(\mathcal{V},\mathcal{E})\ge \frac{(m-1)k}{s-m+2}$, then $\vec{f}_1,\dots,\vec{f}_m$ cannot be distinct.
\end{lemma}
\begin{proof} This proof follows from the similar framework as in \cref{lem:non_distinct}. Since $\mathrm{wt}(\mathcal{V},\mathcal{E})\ge \frac{(|\mathcal{V}|-1)k}{s-|\mathcal{V}|+2}$, there must exist a minimal subset $\mathcal{V}_0\subseteq \mathcal{V}$ with $|\mathcal{V}_0
|\ge 2$ and 
$\mathcal{V}_0$ satisfies the following conditions. 
\begin{itemize}
\item $\mathrm{wt}(\mathcal{V}_0,\mathcal{E}|_{\mathcal{V}_0})\ge\frac{(|\mathcal{V}_0|-1)k}{s-|\mathcal{V}_0|+2}$
\item For any proper subset $\mathcal{H}\subsetneq \mathcal{V}_0$ with $|\mathcal{H}|\ge 2$, $\mathrm{wt}(\mathcal{H},\mathcal{E}|_{\mathcal{H}})<\frac{(|\mathcal{H}|-1)k}{s-|\mathcal{H}|+2}$.
\end{itemize}
Let $m':=|\mathcal{V}_0|\ge 2$ and $\mathcal{V}_0=\{\vec{f}_{i_1},\dots,\vec{f}_{i_{m'}}\}$. Suppose by contradiction that $\vec{f}_1,\dots,\vec{f}_m$ are distinct, then $\vec{g}_1:=\vec{f}_{i_1}-\vec{f}_{i_1},\dots,\vec{g}_{m'}:=\vec{f}_{i_{m'}}-\vec{f}_{i_1}$ are also distinct. Moreover, by the definition of $\mathcal{V}_0$, there must exist a corresponding $C$-based geometric agreement hypergraph $(\mathcal{V}',\mathcal{E}')$ such that
\begin{itemize}
\item $\mathcal{V}'=\{\vec{g}_1=0,\vec{g}_2,\dots,\vec{g}_{m'}\},\mathcal{E}'=\{e_1,\dots,e_n\subseteq \mathcal{V}' \}$ where $\vec{g}_1=0,\dots,\vec{g}_{m'}$ are distinct and $2\leq m'\leq m$.
\item $\mathrm{wt}(\mathcal{V}',\mathcal{E}')\ge\frac{(|\mathcal{V}'|-1)k}{s-|\mathcal{V}'|+2}$
\item For any proper subset $\mathcal{H}\subsetneq \mathcal{V}'$ with $|\mathcal{H}|\ge 2$, $\mathrm{wt}(\mathcal{H},\mathcal{E}'|_{\mathcal{H}})<\frac{(|\mathcal{H}|-1)k}{s-|\mathcal{H}|+2}$.
\end{itemize}
Let  $W:=\Span_{\mathbb{F}}\{\vec{g}_2,\dots,\vec{g}_{m'}\}, \ell:=\dim_{\mathbb{F}}W, 1\leq \ell\leq m'-1$ and $H_1,\dots,H_n$ be the subspaces defined in \cref{def:designcodes} over $C$. By \Cref{thm:lowerbounddesign}, we know 
\[\sum_{i=1}^n\dim_{\mathbb{F}}\left(H_i\cap W\right)\ge \sum_{i=1}^n\widetilde{\dim}_{\mathbb{F}_q}\left(e_i\right)\ge \bigg(\mathrm{wt}(\mathcal{V}',\mathcal{E}')-\sum^n_{i=1}\mathrm{Loss}(e_i)\bigg)\]
By the weight lower bound and \Cref{thm:bound_loss}, we have:
\begin{equation*}
\sum_{i=1}^n\dim_{\mathbb{F}}\left(H_i\cap W\right)\ge \left(\mathrm{wt}(\mathcal{V}',\mathcal{E}')-\sum^n_{i=1}\mathrm{Loss}(e_i)\right)\ge
\left(\frac{(m'-1)k}{s-m'+2}-\frac{(m'-1-\ell)k}{s-m'+2}\right)
\ge\frac{\ell k}{s-\ell+1}
\end{equation*}
This contradicts the assumption that $C$ is a $\left(\ell,\frac{\ell(k-1)}{s-\ell+1}\right)$-strong subspace designable code. We conclude that $\vec{f}_1,\dots,\vec{f}_m$ cannot be distinct.
\end{proof}
\begin{theorem}[A generalization of \cref{thm:main} and \cref{thm:main_mul}]\label{thm:generalize}
Given a $\F$-linear code $C\subseteq (\mathbb{F}^s)^n$ of block length $n$ and rate $R={k}/{sn}$. Assume that $C\subseteq (\mathbb{F}^s)^n$ is a $\left(\ell, \frac{\ell{(k-1)}}{s-\ell+1}\right)$-strong subspace designable code for all $\ell\leq {s}$. Then, $C$ is $\left(\frac{L}{L+1}\left(1-\frac{sR}{s-L+1}\right),L\right)$ (average-radius) list-decodable for any $L\leq {s}$.
\end{theorem}
\begin{proof}
Suppose by contradiction that there exists $L+1$ distinct polynomials $f_1,\dots,f_{L+1}\in\mathbb{F}_q[x]_{<k}$ and a received word $y\in(\mathbb{F}^s_q)^n$ such that for each $i\in[L+1]$, the codeword $c_i:=\mathcal{C}(f_i)$ of $f_i$ has relative hamming distance at most $\frac{L}{L+1}\left(1-\frac{sR}{s-L+1}\right)$ from $y$. Then for any $i\in[L+1]$, there is a subset $I_i\subseteq [n]$ where $|I_i|\ge \frac{n}{L+1}+\frac{Lk}{(L+1)(s-L+1)}$, such that for each $t\in I_i$, $c_{i}[t]=y[t]$. For each $j\in[n]$, we define $e_j=\left\{\vec{f}_i\colon j\in I_i\right\}$ as the set of polynomials whose codewords match with $y$ on position $j$ in this bad list. Let $(\mathcal{V},\mathcal{E})$ denote the corresponding geometric agreement hypergraph, there is $\mathcal{V}=\left\{\vec{f}_1,\dots,\vec{f}_{L+1}\right\}$, and $\mathcal{E}=(e_1,\dots,e_n)$.
The weight of this hypergraph can be bounded by
\begin{equation*}
\mathrm{wt}(\mathcal{V},\mathcal{E})\ge \sum_{j=1}^n\bigg(|e_j|-1\bigg)\ge\left(\sum^{L+1}_{i=1}|I_i|\right)-n\ge\frac{Lk}{s-L+1}.
\end{equation*}
Then, by \Cref{lem:non-distinctdesign}, $f_1,\dots,f_{L+1}$ cannot be distinct, which is a contradiction. We conclude that such a bad list $f_1,\dots,f_{L+1}$ doesn't exist.
\end{proof}
\end{document}